\PassOptionsToPackage{unicode}{hyperref}
\PassOptionsToPackage{hyphens}{url}
\PassOptionsToPackage{dvipsnames,svgnames,x11names}{xcolor}
\documentclass[12pt]{article}

\usepackage{algorithm}
\usepackage{bbm}
\usepackage{algorithmic}
\usepackage{tikz}
\usetikzlibrary{matrix,positioning}
\usepackage{booktabs}
\usepackage{multirow}
\usepackage{placeins}
\usepackage{array}
\usepackage{siunitx}
\usepackage{amsmath,amssymb}
\usepackage[percent]{overpic}
\usepackage{amsthm}
\usepackage{mathrsfs}
\usepackage{iftex}
\ifPDFTeX
  \usepackage[T1]{fontenc}
  \usepackage[utf8]{inputenc}
  \usepackage{textcomp} 
\else 
  \usepackage{unicode-math}
  \defaultfontfeatures{Scale=MatchLowercase}
  \defaultfontfeatures[\rmfamily]{Ligatures=TeX,Scale=1}
\fi
\usepackage{lmodern}
\ifPDFTeX\else  
\fi
\IfFileExists{upquote.sty}{\usepackage{upquote}}{}
\IfFileExists{microtype.sty}{
  \usepackage[]{microtype}
  \UseMicrotypeSet[protrusion]{basicmath} 
}{}
\makeatletter
\@ifundefined{KOMAClassName}{
  \IfFileExists{parskip.sty}{%
    \usepackage{parskip}
  }{
    \setlength{\parindent}{0pt}
    \setlength{\parskip}{6pt plus 2pt minus 1pt}}
}{
  \KOMAoptions{parskip=half}}
\makeatother
\usepackage{xcolor}
\makeatletter
\ifx\paragraph\undefined\else
  \let\oldparagraph\paragraph
  \renewcommand{\paragraph}{
    \@ifstar
      \xxxParagraphStar
      \xxxParagraphNoStar
  }
  \newcommand{\xxxParagraphStar}[1]{\oldparagraph*{#1}\mbox{}}
  \newcommand{\xxxParagraphNoStar}[1]{\oldparagraph{#1}\mbox{}}
\fi
\ifx\subparagraph\undefined\else
  \let\oldsubparagraph\subparagraph
  \renewcommand{\subparagraph}{
    \@ifstar
      \xxxSubParagraphStar
      \xxxSubParagraphNoStar
  }
  \newcommand{\xxxSubParagraphStar}[1]{\oldsubparagraph*{#1}\mbox{}}
  \newcommand{\xxxSubParagraphNoStar}[1]{\oldsubparagraph{#1}\mbox{}}
\fi
\makeatother

\usepackage{longtable,booktabs,array}
\usepackage{calc} 
\usepackage{etoolbox}
\makeatletter
\patchcmd\longtable{\par}{\if@noskipsec\mbox{}\fi\par}{}{}
\makeatother
\IfFileExists{footnotehyper.sty}{\usepackage{footnotehyper}}{\usepackage{footnote}}
\makesavenoteenv{longtable}
\usepackage{graphicx}
\makeatletter
\def\maxwidth{\ifdim\Gin@nat@width>\linewidth\linewidth\else\Gin@nat@width\fi}
\def\maxheight{\ifdim\Gin@nat@height>\textheight\textheight\else\Gin@nat@height\fi}
\makeatother
\setkeys{Gin}{width=\maxwidth,height=\maxheight,keepaspectratio}
\makeatletter
\def\fps@figure{htbp}
\makeatother

\makeatletter
\@ifpackageloaded{caption}{}{\usepackage{caption}}
\AtBeginDocument{%
\ifdefined\contentsname
  \renewcommand*\contentsname{Table of contents}
\else
  \newcommand\contentsname{Table of contents}
\fi
\ifdefined\listfigurename
  \renewcommand*\listfigurename{List of Figures}
\else
  \newcommand\listfigurename{List of Figures}
\fi
\ifdefined\listtablename
  \renewcommand*\listtablename{List of Tables}
\else
  \newcommand\listtablename{List of Tables}
\fi
\ifdefined\figurename
  \renewcommand*\figurename{Figure}
\else
  \newcommand\figurename{Figure}
\fi
\ifdefined\tablename
  \renewcommand*\tablename{Table}
\else
  \newcommand\tablename{Table}
\fi
}
\@ifpackageloaded{float}{}{\usepackage{float}}
\floatstyle{ruled}
\@ifundefined{c@chapter}{\newfloat{codelisting}{h}{lop}}{\newfloat{codelisting}{h}{lop}[chapter]}
\floatname{codelisting}{Listing}

\makeatother
\makeatletter
\@ifpackageloaded{caption}{}{\usepackage{caption}}
\@ifpackageloaded{subcaption}{}{\usepackage{subcaption}}
\makeatother

\ifLuaTeX
  \usepackage{selnolig}  
\fi
\usepackage[]{natbib}
\usepackage{bookmark}

\IfFileExists{xurl.sty}{\usepackage{xurl}}{} 
\hypersetup{
  pdftitle={Robust Bayesian Inference for Unnormalized Statistical Models},
  pdfauthor={Jiongran Wang; Debdeep Pati; Anirban Bhattacharya},
  pdfkeywords={doubly-intractable models, exponentially tilted empirical likelihood, model misspecification, semiparametric Bayes, score matching},
  colorlinks=true,
  linkcolor={blue},
  filecolor={Maroon},
  citecolor={Blue},
  urlcolor={Blue},
  pdfcreator={LaTeX via pandoc}}

\DeclareMathOperator*{\argmin}{arg\,min}

\def \convd{\overset{d} \to}
\def \convp{\overset{p} \to}

\newcommand{\be}{\begin{equs}}
\newcommand{\ee}{\end{equs}}

\numberwithin{equation}{section}
\theoremstyle{plain}
\newtheorem{theorem}{Theorem}
\newtheorem{assumption}{Assumption}
\newtheorem{remark}{Remark}

\newtheorem{lemma}{Lemma}
\newtheorem{proposition}{Proposition}

\newcommand{\anon}{1}

\begin{document}

\def\spacingset#1{\renewcommand{\baselinestretch}%
{#1}\small\normalsize} \spacingset{1}


\if1\anon
{
  \title{\bf Robust Bayesian Inference for Unnormalized Models with Mixed-Domain Data}
  \author{Jiongran Wang\\
    Department of Statistics, Texas A\&M University\\
    and \\
    Debdeep Pati \\
    Department of Statistics, University of Wisconsin-Madison\\
    and \\
    Anirban Bhattacharya \\
    Department of Statistics, Texas A\&M University}
  \maketitle
} \fi

\if0\anon
{
  \bigskip
  \bigskip
  \bigskip
  \begin{center}
    {\LARGE\bf Robust Bayesian Inference for Unnormalized Models with Mixed-Domain Data}
\end{center}
  \medskip
} \fi

\bigskip
\begin{abstract}
Many statistical models involve parameter-dependent normalizing constants that are computationally intractable, creating substantial obstacles to standard Bayesian inference. Although existing likelihood-based algorithms can often circumvent these constants, their uncertainty quantification may be poorly calibrated under model misspecification. To address these challenges, we propose SME-BETEL, a semiparametric Bayesian framework that combines score matching estimating equations with Bayesian exponentially tilted empirical likelihood. The resulting posterior avoids evaluation of normalizing constants and does not require learning-rate calibration. Building on this framework, we develop a new score matching criterion for mixed-domain data, extending SME-BETEL to models whose observations combine components from different sample spaces. This construction enables robust Bayesian inference for mixed-domain doubly-intractable models. We establish consistency and asymptotic normality of the score matching estimator, and prove a Bernstein–von Mises theorem for the SME-BETEL posterior. These results show that SME-BETEL credible sets are asymptotically calibrated to the sampling variability of the score matching estimator, yielding valid frequentist coverage under model misspecification. Simulation studies show that SME-BETEL remains competitive under correct specification and substantially improves uncertainty quantification under misspecification. An ozone-monitoring application demonstrates the practical utility of the mixed-domain construction for spatial preferential modeling.
\end{abstract}

\noindent%
{\it Keywords:} doubly-intractable models, environmental application, exponentially tilted empirical likelihood, model misspecification, score matching
\bigskip
\spacingset{1.15}

\section{Introduction}\label{sec-intro}

Statistical models of the form $p(\mathbf{x}; \boldsymbol{\theta}) = q(\mathbf{x}; \boldsymbol{\theta}) / Z(\boldsymbol{\theta})$ with an intractable parameter-dependent normalizing constant $Z(\boldsymbol{\theta})$ are prevalent in a multitude of fields including spatial statistics, network analysis, and graphical models \citep{gutmann2013estimation, park2018bayesian}. Bayesian inference on the parameter $\boldsymbol{\theta}$ in such settings is called {\it doubly-intractable} since both the likelihood function and the posterior distribution involve intractable normalizing constants. Traditional MCMC algorithms, such as Metropolis–Hastings (MH), are rendered inapplicable since computing the posterior odds requires evaluating ratios of the form $Z(\boldsymbol{\theta}_1)/Z(\boldsymbol{\theta}_2)$. A variety of methods (see \citet{park2018bayesian} for a review) have been developed for Bayesian inference in doubly-intractable models, including MCMC algorithms that introduce auxiliary variables to cancel this ratio \citep{moller2006efficient,liang2010double,murray2012mcmc,liang2016adaptive,rao2016data,lee2026delayed}, likelihood approximation approaches such as pseudo-marginal MCMC \citep{andrieu2009pseudo,lyne2015russian,yang2025correlated}, and, more recently, contrastive Bayesian inference based on noise contrastive estimation \citep{sonobe2026contrastive}. While these approaches address the computational intractability, Bayesian uncertainty quantification based on a fully specified working model can be poorly calibrated under model misspecification \citep{kleijn2012bernstein}. This motivates robust Bayesian methods that circumvent the intractability of $Z(\boldsymbol{\theta})$ while simultaneously delivering calibrated uncertainty under misspecification. We introduce such an approach in this article.

In an influential article, \citet{hyvarinen2005estimation} proposed {\it score matching} for parameter estimation in unnormalized models without evaluating $Z(\boldsymbol{\theta})$. Beyond its extensive use in score-based generative models \citep{song2019generative,lai2025principles,ranganath2026unified}, score matching has also been connected to Bayesian inference through Bayesian model selection and generalized Bayesian inference \citep{shao2019bayesian, jewson2022general,altamirano2023robust,matsubara2024generalized,bharti2026amortised}. Existing generalized Bayesian procedures based on score-type losses typically require tuning a learning rate, and this choice can strongly affect uncertainty quantification. This motivates an alternative use of score matching that preserves its ability to avoid normalizing constants while avoiding learning-rate calibration.

A calibration-free route to robust inference is provided by exponentially tilted empirical likelihood (ETEL; \citealp{schennach2005bayesian}), a variant of empirical likelihood (EL; \citealp{owen1988empirical,qin1994empirical,owen2001empirical}) designed for inference from moment conditions without requiring full model specification. Both EL and ETEL admit Bayesian formulations \citep{lazar2003bayesian,schennach2005bayesian}, and frequentist properties of Bayesian ETEL have been established by \citet{chib2018bayesian}. Recent work further shows that Bayesian ETEL can yield calibrated semiparametric posteriors from moment conditions, including under misspecification and partial model specification \citep{chib2018bayesian,tang2022bayesian}. Motivated by these properties, we incorporate the first-order optimality conditions of the score matching objective into Bayesian ETEL as moment conditions, a combination that has not been previously exploited for doubly-intractable models to the best of our knowledge. We call the resulting framework SME-BETEL. The corresponding posterior does not require evaluating intractable normalizing constants or calibrating a learning rate, and it provides reliable uncertainty quantification under model misspecification. We verify the latter property theoretically by proving a Bernstein-von Mises (BvM) theorem showing that the induced posterior covariance matches the sampling variance of the score matching estimator. Consequently, SME-BETEL credible sets are asymptotically calibrated to the sampling distribution of the score matching estimator under model misspecification.

Building on the basic SME-BETEL construction, the main methodological contribution of this article is to extend the framework to mixed-domain data. The original score matching objective applies to continuous data on Euclidean domains, while existing extensions accommodate particular nonstandard supports, including nonnegative, compact, and manifold spaces \citep{hyvarinen2007some,mardia2016score,yu2019generalized,liu2022estimating,scealy2023score}. These methods, however, are primarily designed for observations on a single type of support and do not directly address settings in which different components of the same observation lie on different domains. Such mixed-domain structures arise naturally in preferential sampling \citep{diggle2010geostatistical,pati2011bayesian} and marked spatial point processes \citep{eckardt2023marked}, where Euclidean-valued responses or marks are observed together with locations on a bounded spatial domain and the sampling density may involve an intractable normalizing integral. Existing score matching constructions therefore do not directly provide the moment conditions needed for SME-BETEL in such settings. To address this gap, we derive a new mixed-domain score matching objective that combines the usual Euclidean score matching construction with a boundary-adapted treatment of the bounded-domain component. Its first-order conditions can then be incorporated directly into SME-BETEL, extending robust Bayesian inference to mixed-domain doubly-intractable models without evaluating the spatial normalizing integral. We demonstrate this construction through a misspecified preferential sampling simulation and an application to eastern United States ozone monitoring data.


The remainder of the paper is organized as follows. Section~\ref{SME-BETEL} reviews background material, introduces SME-BETEL, and develops its extension to mixed-domain data. Section~\ref{sec:simulation_study} presents a misspecified preferential sampling simulation, while Section~\ref{sec:sme_betel} establishes the theoretical properties of SME-BETEL. Section~\ref{real_data} applies SME-BETEL to eastern United States ozone data. The Supplemental Material contains proofs of the theoretical results and extensive simulation studies under correct specification and misspecification, including models with Euclidean, compact, spherical, and nonnegative support.

\section{SME-BETEL: Bayesian ETEL Based on Score Matching First-Order Conditions}
\label{SME-BETEL}

In this section, we introduce the SME-BETEL framework. We first present the problem setup and review score matching \citep{hyvarinen2005estimation}, which provides a key building block for SME-BETEL. We then develop the SME-BETEL methodology and extend its construction to accommodate mixed-domain data.

\paragraph*{Background} Suppose we observe $n$ independent observations $\mathbf{x}_1, \ldots, \mathbf{x}_n \in \mathbb{R}^m$ from a distribution with density $p_{\mathbf{x}}(\cdot)$. To model the individual observations, we introduce a parametric working model $p(\mathbf{x};\boldsymbol{\theta}) = q(\mathbf{x};\boldsymbol{\theta}) / Z(\boldsymbol{\theta})$, with $\mathbf{x} \in \mathbb{R}^m$ and $\boldsymbol{\theta} \in \Theta \subseteq \mathbb{R}^p$, where $q(\mathbf{x};\boldsymbol{\theta})$ is analytically tractable but $Z(\boldsymbol{\theta}) :\, = \int q(\mathbf{x};\boldsymbol{\theta}) d\mathbf{x}$ is computationally intractable. Throughout this article, we say that the model is {\it correctly specified} if there exists $\boldsymbol{\theta}_0 \in \Theta$ such that $p_{\mathbf{x}}(\cdot) = p(\cdot; \boldsymbol{\theta}_0)$. Otherwise, the model is {\it misspecified}. To address the intractable normalizing constant in $p(\mathbf{x};\boldsymbol{\theta})$, \citet{hyvarinen2005estimation} introduced score matching as an estimation procedure that bypasses the computation of $Z(\boldsymbol{\theta})$. Specifically, the score function for the model density is denoted as $\psi(\mathbf{x}; \boldsymbol{\theta}) = \nabla_{\mathbf{x}} \log p(\mathbf{x}; \boldsymbol{\theta}) = \nabla_{\mathbf{x}} \log q(\mathbf{x}; \boldsymbol{\theta})$, where $\nabla_{\mathbf{x}}$ denotes the gradient with respect to the data vector $\mathbf{x}$. Next, the score function associated with the data distribution is defined as $\psi_{\mathbf{x}}(\cdot) = \nabla_{\mathbf{x}} \log p_{\mathbf{x}}(\cdot)$, which is unknown. The target parameter of score matching is then defined as the minimizer of the population score matching loss $J(\cdot)$, 
\begin{equation}
\label{theta^*}
    \boldsymbol{\theta}^* = \argmin_{\boldsymbol{\theta} \in \Theta} J(\boldsymbol{\theta}), \text{ where }  J(\boldsymbol{\theta}) = \frac{1}{2} \int \|\psi(\mathbf{x}; 
\boldsymbol{\theta}) - \psi_{\mathbf{x}}(\mathbf{x})\|^2 p_{\mathbf{x}}(d\mathbf{x}),
\end{equation}
with $\|\cdot\|$ the Euclidean norm. The population loss $J(\cdot)$ is sometimes called the relative Fisher divergence \citep{lyu2012interpretation}. Importantly, since the loss depends on the model density only through the gradient of its log-density, it suffices to know the model density only up to a normalizing constant. Under mild regularity conditions \citep{hyvarinen2005estimation}, $J(\boldsymbol{\theta})$ can be written as
\begin{equation}
\label{sm_2}
    J(\boldsymbol{\theta}) = \mathbb{E}_{p_{\mathbf{x}}} \bigg[\mbox{tr} \big(\nabla_{\mathbf{x}}^2 \log p(\mathbf{x}; \boldsymbol{\theta}) \big) + \frac{1}{2} \|\nabla_{\mathbf{x}} \log p(\mathbf{x}; \boldsymbol{\theta})\|^2 \bigg] + \text{const}, 
\end{equation}
where $\mathbb{E}_{p_{\mathbf{x}}}$ denotes expectation with respect to $p_{\mathbf{x}}$, $\mbox{tr}(A)$ denotes the trace of the matrix $A$, $\nabla_{\mathbf{x}}^2$ denotes the Hessian with respect to the data vector $\mathbf{x}$, and $\text{const}$ denotes a constant that does not depend on $\boldsymbol{\theta}$. For independent samples $\mathbf{x}_{1:n} = (\mathbf{x}_1, \mathbf{x}_2, \ldots, \mathbf{x}_n)$ of size $n$, the corresponding sample version of $J(\boldsymbol{\theta})$ is
\begin{equation}
\label{sm_sampleversion}
\widehat{J}(\boldsymbol{\theta}; \mathbf{x}_{1:n}) = \frac{1}{n}\sum_{i=1}^n \sum_{j=1}^m [\partial_j \psi_j(\mathbf{x}_i; \boldsymbol{\theta}) + \frac{1}{2} \psi_j^2(\mathbf{x}_i; \boldsymbol{\theta})] + \text{const},
\end{equation}
where $\psi_j(\mathbf{x}; \boldsymbol{\theta})$ is the $j$-th element of $\psi(\mathbf{x}; \boldsymbol{\theta})$ and $\partial_j \psi_j(\mathbf{x}; \boldsymbol{\theta})$ is the partial derivative of $\psi_j(\mathbf{x}; \boldsymbol{\theta})$ with respect to the $j$-th variable in the data vector. We call the minimizer of $\widehat{J}(\boldsymbol{\theta})$ the {\it sample score matching estimator} in the sequel. 

\subsection{Methodology}
\label{sec:methodology}

The score matching target parameter $\boldsymbol{\theta}^*$ in \eqref{theta^*} is defined as the minimizer of the population criterion $J(\boldsymbol{\theta})$. To use this criterion within ETEL, we work with the first-order optimality condition of this minimization problem. When $\boldsymbol{\theta}^*$ is an interior minimizer and differentiation can be interchanged with expectation, $\boldsymbol{\theta}^*$ satisfies the population estimating equation
\begin{equation}
\label{population_moment_condition}
\mathbb{E}_{p_{\mathbf{x}}}
\left\{
\mathbf{g}(\mathbf{x}, \boldsymbol{\theta}^*)
\right\}
=
\mathbf{0}_p,
\end{equation}
where $\mathbf{0}_p$ is the $p \times 1$ vector of zeros, and $\mathbf{g}(\mathbf{x}, \boldsymbol{\theta}) :\, = \nabla_{\boldsymbol{\theta}}\left[\mbox{tr}\big(
\nabla_{\mathbf{x}}^2 \log p(\mathbf{x}; \boldsymbol{\theta})\big) + 2^{-1}\left\|\nabla_{\mathbf{x}} \log p(\mathbf{x}; \boldsymbol{\theta})\right\|^2\right]$. Motivated by this estimating equation, we propose \textbf{SME-BETEL}, which uses the first-order optimality conditions of the score matching objective as moment conditions in Bayesian ETEL, thereby enabling robust inference for doubly-intractable models.

The equation above defines a population-level moment condition. Given independent samples $\mathbf{x}_{1:n}$, we can construct the corresponding sample-specific moment functions $\mathbf{g}(\mathbf{x}_i, \boldsymbol{\theta})$ for $i = 1, \ldots, n$. The ETEL function $L: \Theta \to (0, \infty)$ is then constructed by solving a constrained optimization problem \citep{schennach2005bayesian}. This problem is solvable only when the origin lies in the interior of the convex hull $\mathcal{C}(\boldsymbol{\theta}) :\, = \left\{\sum_{i=1}^n \omega_i \mathbf{g}(\mathbf{x}_i, \boldsymbol{\theta}) : \omega_i \geq 0,\ \sum_{i=1}^n \omega_i = 1\right\}$ of $\{\mathbf{g}(\mathbf{x}_i, \boldsymbol{\theta})\}_{i=1}^n$. The recent regularized ETEL offers an approach to relaxing this convex-hull restriction \citep{kim2023regularized}. Here, we focus on the standard ETEL framework, for which the ETEL function is defined as:
\begin{equation}
\label{likelihood}
L(\boldsymbol{\theta}) = \mathbbm{1}_{\mathcal{C}(\boldsymbol{\theta})} ( \mathbf{0}) \,  \prod_{i=1}^n \omega_i(\boldsymbol{\theta}), 
\end{equation}
where $\mathbbm{1}_{\mathcal{C}(\boldsymbol{\theta})}$ denotes the indicator function of a set $\mathcal{C}(\boldsymbol{\theta})$, and $\boldsymbol{\omega}(\boldsymbol{\theta})\,:= \big(\omega_1(\boldsymbol{\theta}), \ldots, \omega_n(\boldsymbol{\theta})\big)$ solves the following constrained optimization problem:
\begin{equation}
\label{optimization_problem}
\max_{ \boldsymbol{\omega}(\boldsymbol{\theta})  \in \mathcal{S}^{n-1}} \quad 
\sum_{i=1}^n \bigl[-\, \omega_i \log(n \omega_i)\bigr] 
\, \text{ subject to }  
\sum_{i=1}^n \omega_i\, \mathbf{g}(\mathbf{x}_i,\boldsymbol{\theta}) = \mathbf{0},
\end{equation}
where $\mathcal{S}^{n-1} := \{v \in \mathbb{R}^n : v_i \geq 0,\ i = 1, \ldots, n,\ \sum_{i=1}^n v_i = 1\}$ denotes the $(n-1)$-dimensional unit simplex. By introducing Lagrange multipliers to the constraints, these probabilities $\boldsymbol{\omega}(\boldsymbol{\theta})$ can be equivalently expressed as \citep{schennach2005bayesian}
\begin{equation*}
\omega_i(\boldsymbol{\theta})
= \frac{\exp \bigl(\widehat{\boldsymbol{\lambda}}(\boldsymbol{\theta})^{\top} \mathbf{g}(\mathbf{x}_i,\boldsymbol{\theta})\bigr)}
       {\sum_{j=1}^n \exp \bigl(\widehat{\boldsymbol{\lambda}}(\boldsymbol{\theta})^{\top} \mathbf{g}(\mathbf{x}_j,\boldsymbol{\theta})\bigr)}, \text{ where } \widehat{\boldsymbol{\lambda}}(\boldsymbol{\theta}) = \argmin_{\boldsymbol{\lambda} \in \mathbb{R}^p} \frac{1}{n}\sum_{i=1}^n \exp \big(\boldsymbol{\lambda}^{\top}\mathbf{g}(\mathbf{x}_i, \boldsymbol{\theta})\big).
\end{equation*}
It is worth mentioning that $\boldsymbol{\omega}(\boldsymbol{\theta})$ can be interpreted as probability weights assigned to the sample observations that are closest to the empirical weights $(n^{-1},\ldots,n^{-1})$ in Kullback–Leibler (KL) divergence, while satisfying the weighted sample moment condition.

In a Bayesian framework, the ETEL function $L(\boldsymbol{\theta})$ serves as a surrogate for the conventional likelihood, yielding the calibration-free SME-BETEL posterior density:
\begin{equation}
\label{SME-BETEL_posterior}
    \pi(\boldsymbol{\theta} \mid \mathbf{x}_{1:n}) = \frac{L(\boldsymbol{\theta})\pi(\boldsymbol{\theta})}{\int_{\Theta} L(\boldsymbol{\eta}) \pi(\boldsymbol{\eta}) d\boldsymbol{\eta}},
\end{equation}
where $\pi(\boldsymbol{\theta})$ is the prior distribution. This posterior distribution can be efficiently explored using the MH algorithm. An important consideration during sampling is that any proposed parameter for which the origin does not lie in the interior of the convex hull is discarded. Algorithm~\ref{alg:mh-sme-betel} summarizes the MH sampler for the SME-BETEL posterior. The construction above was presented for data supported on $\mathbb{R}^m$, where the original score matching objective applies directly. Many doubly-intractable models, however, are defined on more general supports, such as truncated domains or mixed-domain supports involving both Euclidean and bounded components. This motivates support-adapted versions of SME-BETEL, which we describe next.

\subsection{SME-BETEL for Mixed-Domain Data}
\label{sec:support_adapted_sm}

The flexibility of SME-BETEL comes from the modular role of score matching. The original score matching objective applies to data supported on $\mathbb{R}^m$, but alternative score matching objectives have been developed fro nonstandard supports, including nonnegative, compact, and manifold spaces \citep{hyvarinen2007some, mardia2016score, yu2019generalized,liu2022estimating,scealy2023score}. Whenever an appropriate score matching objective is available, its first-order conditions can be used as moment conditions in the SME-BETEL. Existing support-adapted score matching methods have largely been developed for individual support types. Many applications, however, involve observations whose components lie on different supports. We therefore focus on extending SME-BETEL to such mixed-domain settings.

Examples of mixed-domain data arise in preferential sampling \citep{gelfand2012effect} and marked spatial point processes \citep{eckardt2023marked}, where Euclidean-valued responses or marks are observed together with locations on a compact domain. Motivated by these settings, we consider observations $\mathbf{x}=(\mathbf{y},\mathbf{s})$, where $\mathbf{y}\in\mathbb{R}^d$ and $\mathbf{s}\in\mathcal{D}$ is a compact domain. We write the data-generating density as $p_{\mathbf{x}}(\mathbf{y},\mathbf{s})
= p_{\mathbf{x}}(\mathbf{y}\mid\mathbf{s})p_{\mathbf{x}}(\mathbf{s})$ and consider a working model with the compatible factorization $p(\mathbf{y},\mathbf{s};\boldsymbol{\theta}) = p(\mathbf{y}\mid\mathbf{s};\boldsymbol{\theta})p(\mathbf{s};\boldsymbol{\theta})$. This conditional-marginal structure provides a general route for extending support-adapted score matching to other mixed-domain models.

Our construction starts from the original relative Fisher divergence in \eqref{theta^*} and adapts it to the mixed support. For observations with Euclidean support, the usual score matching criterion can be used directly. In the present setting, however, the location component lies on the compact domain $\mathcal{D}$, so the score discrepancy must be modified near the boundary to obtain a well-behaved computable criterion. We therefore introduce a smooth weight function $h:\mathcal{D}\to\mathbb{R}$ that equals one over most of the interior of $\mathcal{D}$ and tapers smoothly to zero near the boundary. The specific weight function used in our numerical studies is given in Appendix~\ref{subsec:weight_function}. This choice preserves the usual relative Fisher divergence away from the boundary, where most inferential information is retained, while downweighting only the boundary region where compact-support effects arise. Consequently, the resulting mixed-domain criterion remains a valid nonnegative discrepancy between the model score and the data-generating score, and it is minimized at the data-generating distribution under the usual identifiability conditions. A technical contribution of this paper is to combine these ingredients into the following {\it mixed-domain score matching objective}:
\begin{equation}
\label{sm_mixed_type_defnition}
\begin{aligned}
J_{MD}(\boldsymbol{\theta})
&=
\frac{1}{2}
\int_{\mathbb{R}^d \times \mathcal{D}}
h(\mathbf{s})
\left\|
\psi(\mathbf{y}, \mathbf{s}; \boldsymbol{\theta})
-
\psi_{\mathbf{x}}(\mathbf{y}, \mathbf{s})
\right\|^2
p_{\mathbf{x}}(\mathbf{y}, \mathbf{s})\,
\nu(d\mathbf{y})\mu(d\mathbf{s}),
\end{aligned}
\end{equation}
where $\psi(\mathbf{y},\mathbf{s};\boldsymbol{\theta}) = \nabla_{(\mathbf{y},\mathbf{s})}\log p(\mathbf{y},\mathbf{s};\boldsymbol{\theta})$ and $\psi_{\mathbf{x}}(\mathbf{y},\mathbf{s}) = \nabla_{(\mathbf{y},\mathbf{s})}\log p_{\mathbf{x}}(\mathbf{y},\mathbf{s})$. The following proposition gives an equivalent computable form of \eqref{sm_mixed_type_defnition}.

\begin{proposition}
\label{prop:mixed_domain_sm}
The mixed-domain score matching objective function \eqref{sm_mixed_type_defnition} admits the following alternative formula:
\begin{equation}
\label{sm_mixed_type}
\begin{aligned}
J_{MD}(\boldsymbol{\theta})
&=
\mathbb{E}_{p_{\mathbf{x}}}\bigg[
\frac{1}{2}h(\mathbf{s})
\left\|
\psi(\mathbf{y}, \mathbf{s}; \boldsymbol{\theta})
\right\|^2
+
h(\mathbf{s})
\operatorname{tr}
\left\{
\nabla^2_{\mathbf{y}}\log p(\mathbf{y}\mid \mathbf{s};\boldsymbol{\theta})
\right\} \\
&\qquad
+
h(\mathbf{s})
\Delta_{\mathbf{s}} \log p(\mathbf{y}, \mathbf{s};\boldsymbol{\theta})
+
\nabla_{\mathbf{s}}h(\mathbf{s})^{\top}
\nabla_{\mathbf{s}}\log p(\mathbf{y}, \mathbf{s};\boldsymbol{\theta})
\bigg]
+
\operatorname{const},
\end{aligned}
\end{equation}
where $\Delta_{\mathbf{s}}$ denotes the Laplacian operator with respect to $\mathbf{s}$, and $\operatorname{const}$ is independent of $\boldsymbol{\theta}$.
\end{proposition}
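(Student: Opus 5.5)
We expand the squared norm in \eqref{sm_mixed_type_defnition} and integrate the cross term by parts, separately in $\mathbf{y}$ and $\mathbf{s}$. Writing $\psi=(\psi^{\mathbf{y}},\psi^{\mathbf{s}})$ with $\psi^{\mathbf{y}}=\nabla_{\mathbf{y}}\log p(\mathbf{y},\mathbf{s};\boldsymbol{\theta})$ and $\psi^{\mathbf{s}}=\nabla_{\mathbf{s}}\log p(\mathbf{y},\mathbf{s};\boldsymbol{\theta})$, and similarly for $\psi_{\mathbf{x}}$, we get
\[
J_{MD}(\boldsymbol{\theta})=\mathbb{E}_{p_{\mathbf{x}}}\Big[\tfrac12 h\|\psi\|^2\Big]-\int h\,\psi^\top\psi_{\mathbf{x}}\,p_{\mathbf{x}}\,\nu(d\mathbf{y})\mu(d\mathbf{s})+\tfrac12\,\mathbb{E}_{p_{\mathbf{x}}}\big[h\|\psi_{\mathbf{x}}\|^2\big].
\]
The last term does not depend on $\boldsymbol{\theta}$ and becomes part of the constant. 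The first term already appears in \eqref{sm_mixed_type}. The remaining task is to rewrite the cross term. We use the identity $\psi_{\mathbf{x}}p_{\mathbf{x}}=\nabla p_{\mathbf{x}}$, which splits the cross term into a $\mathbf{y}$-part $\int h\,(\psi^{\mathbf{y}})^\top\nabla_{\mathbf{y}}p_{\mathbf{x}}$ and an $\mathbf{s}$-part $\int h\,(\psi^{\mathbf{s}})^\top\nabla_{\mathbf{s}}p_{\mathbf{x}}$.

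\textbf{The $\mathbf{y}$-part.} Fix $\mathbf{s}$ and integrate by parts in each $y_j$ over $\mathbb{R}$, exactly as in \citet{hyvarinen2005estimation}. The factor $h(\mathbf{s})$ is constant in $\mathbf{y}$, so it passes through. We assume the usual decay condition: $p_{\mathbf{x}}(\mathbf{y},\mathbf{s})\,\psi^{\mathbf{y}}_j\to0$ as $|y_j|\to\infty$. Then
\[
-\int h\,(\psi^{\mathbf{y}})^\top\nabla_{\mathbf{y}}p_{\mathbf{x}}=\mathbb{E}_{p_{\mathbf{x}}}\big[h\,\operatorname{tr}\nabla^2_{\mathbf{y}}\log p(\mathbf{y},\mathbf{s};\boldsymbol{\theta})\big].
\]
Since $\log p(\mathbf{y},\mathbf{s};\boldsymbol{\theta})=\log p(\mathbf{y}\mid\mathbf{s};\boldsymbol{\theta})+\log p(\mathbf{s};\boldsymbol{\theta})$ and the marginal term does not depend on $\mathbf{y}$, this trace equals $\operatorname{tr}\nabla^2_{\mathbf{y}}\log p(\mathbf{y}\mid\mathbf{s};\boldsymbol{\theta})$. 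This gives the second term of \eqref{sm_mixed_type}.

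\textbf{The $\mathbf{s}$-part.} Fix $\mathbf{y}$ and apply the divergence theorem on $\mathcal{D}$ to the vector field $h\,p_{\mathbf{x}}\,\psi^{\mathbf{s}}$:
\[
\int_{\mathcal{D}} h\,(\psi^{\mathbf{s}})^\top\nabla_{\mathbf{s}}p_{\mathbf{x}}\,d\mathbf{s}=\oint_{\partial\mathcal{D}}h\,p_{\mathbf{x}}\,(\psi^{\mathbf{s}})^\top\mathbf{n}\,dS-\int_{\mathcal{D}}p_{\mathbf{x}}\,\nabla_{\mathbf{s}}\!\cdot\!\big(h\,\psi^{\mathbf{s}}\big)\,d\mathbf{s}.
\]
Expanding the divergence gives $\nabla_{\mathbf{s}}\cdot(h\psi^{\mathbf{s}})=h\,\Delta_{\mathbf{s}}\log p+\nabla_{\mathbf{s}}h^\top\nabla_{\mathbf{s}}\log p$. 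With the minus sign from the cross term, this produces exactly the last two terms of \eqref{sm_mixed_type}.

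\textbf{Main obstacle: justifying the boundary and interchange steps.} The boundary integral vanishes because $h=0$ on $\partial\mathcal{D}$, which is the reason $h$ tapers to zero there. This step requires $\mathcal{D}$ to have a Lipschitz boundary and $h\,p_{\mathbf{x}}\,\psi^{\mathbf{s}}$ to be $C^1$ up to the boundary. We also need integrability, so that Fubini lets us combine the $\mathbf{y}$- and $\mathbf{s}$-integrations: $p_{\mathbf{x}}$ and $\log p$ should be twice differentiable, with $h\|\psi\|^2$, $h\|\psi_{\mathbf{x}}\|^2$, $h\Delta_{\mathbf{s}}\log p$ and $\nabla h^\top\nabla_{\mathbf{s}}\log p$ all integrable against $p_{\mathbf{x}}$. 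We would state these as explicit regularity assumptions, mirroring those of \citet{hyvarinen2005estimation}, and otherwise treat the argument as routine bookkeeping. Collecting all the pieces yields \eqref{sm_mixed_type}.
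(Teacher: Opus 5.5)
Your proposal is correct and follows essentially the same route as the paper's proof. You expand the square and discard the $\boldsymbol{\theta}$-free $\|\psi_{\mathbf{x}}\|^2$ term, split the cross term into its $\mathbf{y}$- and $\mathbf{s}$-gradient parts, integrate the $\mathbf{y}$-part by parts in the usual Euclidean way, and integrate the $\mathbf{s}$-part by parts over $\mathcal{D}$, where the boundary term vanishes because $h=0$ on $\partial\mathcal{D}$. The one difference is that you state the decay, smoothness, and integrability conditions explicitly, which makes your version slightly more careful than the paper's, where these conditions are left implicit.
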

The proof is provided in Appendix~\ref{app:heterogeneous_SM_derivation}. Proposition~\ref{prop:mixed_domain_sm} is useful for practical implementation because the right hand side of \eqref{sm_mixed_type} does not involve the unknown score function $\psi_{\mathbf{x}}$. Hence, it yields a computable sample criterion, and differentiating \eqref{sm_mixed_type} with respect to $\boldsymbol{\theta}$ gives the estimating equations used by SME-BETEL for mixed-domain data. Importantly, this construction also eliminates the need to evaluate the spatial normalizing constant associated with the location density. For example, preferential sampling models often specify $p(\mathbf{s};\boldsymbol{\theta}) = \exp\{\xi(\mathbf{s};\boldsymbol{\theta})\}
/ \int_{\mathcal{D}}\exp\{\xi(\mathbf{u};\boldsymbol{\theta})\}\,d\mathbf{u}$. Since the denominator is constant as a function of $\mathbf{s}$, we have $\nabla_{\mathbf{s}}\log p(\mathbf{s};\boldsymbol{\theta})
= \nabla_{\mathbf{s}}\xi(\mathbf{s};\boldsymbol{\theta})$ and $\Delta_{\mathbf{s}}\log p(\mathbf{s};\boldsymbol{\theta})
= \Delta_{\mathbf{s}}\xi(\mathbf{s};\boldsymbol{\theta})$. Thus, the spatial normalizing integral does not appear in the moment equations. This feature is central to the preferential sampling simulation in Section~\ref{sec:simulation_study} and the ozone-monitoring application in Section~\ref{real_data}.

\section{Simulation Study}
\label{sec:simulation_study}

In this section, we evaluate the performance of the mixed-domain score matching objective in \eqref{sm_mixed_type} within the SME-BETEL framework. We focus on a preferential sampling problem, where each observation consists of a response-location pair $(y, \mathbf{s})$, with $y \in \mathbb{R}$ and $\mathbf{s} \in \mathcal{D} \subseteq \mathbb{R}^2$. Preferential sampling arises when the sampling locations are not selected independently of the underlying spatial process. In environmental monitoring, for example, monitoring sites may be preferentially located in regions where the response is expected to take large values \citep{gelfand2012effect}. In this case, the observed locations themselves contain information about the latent spatial surface, and ignoring this dependence can lead to biased inference. A common modeling strategy links the sampling density to a latent spatial process through an intensity of the form $p(\mathbf{s}) \propto \exp\{\xi(\mathbf{s})\}$, whose normalizing constant is an integral over the spatial domain. This makes preferential sampling models a practically important class of mixed-domain doubly-intractable models for which SME-BETEL is well suited. Mixed-domain score matching avoids the spatial normalizing integral, while Bayesian ETEL constructs a semiparametric posterior from the resulting estimating equations. To assess robustness under misspecification, we compare SME-BETEL with the likelihood-based Bayesian approach of \citet{pati2011bayesian}, which we refer to as the {\it PRD} method.

\paragraph*{Misspecified Preferential Sampling Model} We first specify the preferential sampling model used as the working model. Let $\mu(\mathbf{s})$ denote the spatial surface at location $\mathbf{s}\in\mathcal{D}=[0,1]^2$. Following the preferential sampling framework of \citet{pati2011bayesian}, for $i=1,\ldots,n$, the working model jointly specifies the response process and the sampling-location process as $y_i \mid \mathbf{s}_i \sim N (\mu(\mathbf{s}_i) = \eta(\mathbf{s}_i) + a \xi(\mathbf{s}_i), \sigma^2)$ and $p(\mathbf{s}_i) = \exp\{\xi(\mathbf{s}_i)\} / \int_{\mathcal{D}} \exp\{\xi(\mathbf{u})\} d\mathbf{u}$. Here $\eta(\mathbf{s})$ represents a baseline spatial surface, while $\xi(\mathbf{s})$ controls the spatial density of the monitoring locations. The scalar parameter $a$ measures the association between the sampling intensity and the response surface: when $a=0$, the locations are noninformative for the response, whereas nonzero values of $a$ indicate informative sampling. Letting $x(\mathbf{s})$ denote a vector of spatial covariates, we write $\xi(\mathbf{s}) = x(\mathbf{s})^\top \boldsymbol{\beta}_{\xi} + \xi_r(\mathbf{s})$ and $\eta(\mathbf{s}) = x(\mathbf{s})^\top \boldsymbol{\beta}_{\eta} + \eta_r(\mathbf{s})$, where $\boldsymbol{\beta}_{\xi}$ and $\boldsymbol{\beta}_{\eta}$ are regression coefficients and $\xi_r(\mathbf{s})$ and $\eta_r(\mathbf{s})$ are zero-mean residual processes. 

We assign independent zero-mean Gaussian process priors to $\xi_r$ and $\eta_r$, each with $\mathrm{Mat\acute{e}rn}$ covariance function: $c(r \mid \boldsymbol{\psi}) = \tau^2/(2^{\nu-1}\Gamma(\nu)) \times (2\nu^{1/2}r/\rho)^{\nu} \times \mathcal{K}_{\nu}(2\nu^{1/2}r/\rho)$ and $r = ||\mathbf{s} - \mathbf{s}'||$, where $\boldsymbol{\psi}=(\tau^2,\rho,\nu)$, $\mathcal{K}$ is the modified Bessel function of the second kind, $\tau^2$ controls variance, $\rho$ controls spatial range, and $\nu$ controls smoothness.  For computation, we replace the full Gaussian processes by low-rank kernel convolution approximations \citep{higdon2002space}. Specifically, if $\delta(\mathbf{s})$ is a zero-mean Gaussian process with covariance $c(r\mid\boldsymbol{\psi})$, we approximate it using spatial knots $\boldsymbol{\phi}_1,\ldots,\boldsymbol{\phi}_N$ as $\delta(\mathbf{s}) \approx \sum_{j=1}^N K_{\boldsymbol{\psi}}(\mathbf{s} - \boldsymbol{\phi}_j) w_j$ for sufficiently large $N$, where $K_{\boldsymbol{\psi}}$ is the kernel associated with $c(r\mid\boldsymbol{\psi})$ and $w_j \sim N(0,1)$. The specific form of $K_{\boldsymbol{\psi}}$ is given in Appendix~\ref{subsec:kernel_form}. Applying this representation to $\xi(\mathbf{s})$ and $\eta(\mathbf{s})$ yields
\begin{equation}
\begin{aligned}
\label{full_IS}
y_i \mid \mathbf{s}_i &\sim \mathcal{N} \bigg( x(\mathbf{s}_i)^\top (a\boldsymbol{\beta}_{\xi} + \boldsymbol{\beta}_{\eta}) + \sum_{j=1}^N K_{\boldsymbol{\psi}_{\eta}}(\mathbf{s}_i - \boldsymbol{\phi}_j) u_j + a \sum_{j=1}^N K_{\boldsymbol{\psi}_{\xi}}(\mathbf{s}_i - \boldsymbol{\phi}_j) v_j, \, \sigma^2 \bigg), \\
p(\mathbf{s}_i) &= \frac{\exp\left\{ x(\mathbf{s}_i)^\top \boldsymbol{\beta}_{\xi} + \sum_{j=1}^N K_{\boldsymbol{\psi}_{\xi}}(\mathbf{s}_i - \boldsymbol{\phi}_j) v_j \right\}}{\int_{\mathcal{D}} \exp\left\{ x(\mathbf{u})^\top \boldsymbol{\beta}_{\xi} + \sum_{j=1}^N K_{\boldsymbol{\psi}_{\xi}}(\mathbf{u} - \boldsymbol{\phi}_j) v_j \right\} d\mathbf{u}},
\end{aligned}
\end{equation}
where $u_j, v_j \sim N(0, 1)$. Our main inferential target is the latent spatial surface $\mu(\mathbf{s})$, evaluated over a grid of spatial locations. To do this, we estimate $(\boldsymbol{\beta}_{\xi}, \boldsymbol{\beta}_{\eta}, a, \sigma^2, \tau^2_{\xi}, \rho_{\xi}, \nu_{\xi}, \tau^2_{\eta}, \rho_{\eta}, \nu_{\eta}, u_1, \ldots, u_N, v_1, \ldots, v_N)$. We set $x(\mathbf{s}) = 1$ for all $\mathbf{s}$. Then, we assume that the data-generating model is
\begin{equation*}
\begin{aligned}
y_i \mid \mathbf{s}_i &\sim t_5^{\mathrm{nc}} \bigg( x(\mathbf{s}_i)^\top (a\boldsymbol{\beta}_{\xi} + \boldsymbol{\beta}_{\eta}) + \sum_{j=1}^N K_{\boldsymbol{\psi}_{\eta}}(\mathbf{s}_i - \boldsymbol{\phi}_j) u_j + a \sum_{j=1}^N K_{\boldsymbol{\psi}_{\xi}}(\mathbf{s}_i - \boldsymbol{\phi}_j) v_j \bigg), \\
p(\mathbf{s}_i) &= \frac{\exp\left\{ x(\mathbf{s}_i)^\top \boldsymbol{\beta}_{\xi} + \sum_{j=1}^N K_{\boldsymbol{\psi}_{\xi}}(\mathbf{s}_i - \boldsymbol{\phi}_j) v_j \right\}}{\int_{\mathcal{D}} \exp\left\{ x(\mathbf{u})^\top \boldsymbol{\beta}_{\xi} + \sum_{j=1}^N K_{\boldsymbol{\psi}_{\xi}}(\mathbf{u} - \boldsymbol{\phi}_j) v_j \right\} d\mathbf{u}},
\end{aligned}
\end{equation*}
Here $t^{\mathrm{nc}}_5(\mu(\mathbf{s}_i))$ denotes a noncentral Student $t$ distribution with 5 degrees of freedom and noncentrality parameter $\mu(\mathbf{s}_i)$. The working model is given by \eqref{full_IS}. Misspecification arises because the data-generating response distribution is heavy-tailed, whereas the working model assumes Gaussian errors. The PRD method requires numerical approximation of the normalizing integral in $p(\mathbf{s}_i)$; we use the same integration grid of size $M=50^2$ to generate sampling locations from the data-generating model. To make the two methods comparable, we fix $\rho_{\xi} = \rho_{\eta} = 1$ and $\tau^2_{\xi} = \tau_{\eta}^2 = 0.1^2$ throughout the simulation study. For $\nu_{\xi}$ and $\nu_{\eta}$, because the derivative of $K_{\boldsymbol{\psi}}(\boldsymbol{u})$ with respect to $\nu$ is not available in closed form, we fix $\nu_{\xi} = \nu_{\eta} = 6$ in both methods. Motivated by \citet{rodrigues2010class}, we use an equally spaced $N = 6\times 6$ grid of knots on $[0.01, 0.99]^2$ for the kernel convolution approximation. 
For the remaining parameters, we specify the priors $a \sim \mathrm{N}(0, 10^2)$, $\sigma^2 \sim \mathrm{Inv\text{-}Ga}(5, 5)$, $\beta_{\xi} \sim \mathrm{N}(0, 5)$, and $\beta_{\eta} \sim \mathrm{N}(0, 5)$. 

For this experiment, we apply the new score matching method defined for mixed-domain data in \eqref{sm_mixed_type} and use the weight function specified in \eqref{mollifier} with $\epsilon=0.2$. We use MH sampling to generate 5{,}000 posterior samples and discard the first 1{,}000 as burn-in. The proposal scale is tuned so that the acceptance rate is approximately 0.4. We conduct 50 replications under two simulation scenarios: (i) $n=250$; and (ii) $n=400$. For each dataset, we compare the predicted surface over the grid with the latent signal surface for each model and report the bias, mean squared error (MSE), and mean absolute deviation (MAD), averaged over the grid of spatial locations. The results are reported in Table~\ref{tab:simres_informative_sampling}. As shown, the bias, MSE, and MAD of SME-BETEL are lower than those of the PRD method across both cases. As the sample size increases, the MSE and MAD decrease for all methods, yet SME-BETEL consistently attains smaller values. These results indicate that SME-BETEL provides a more accurate representation of the latent signal surface in this mixed-domain doubly-intractable model, illustrating its practical robustness under model misspecification. We next study the large-sample properties of the core SME-BETEL construction based on the original score matching objective in an i.i.d. continuous domain setting.

\begin{table}[!t]
\centering
\caption{Simulation results for the preferential sampling model under model misspecification.}
\label{tab:simres_informative_sampling}
\begingroup
\small
\setlength{\tabcolsep}{4pt}
\renewcommand{\arraystretch}{0.92}
\begin{tabular}{@{}llrrr@{}}
\toprule
Sample Size & Model & Bias $(\times 10^2)$ & MSE $(\times 10^2)$ & MAD $(\times 10^2)$ \\
\midrule
\multirow{2}{*}{$n = 250$}
  & PRD method & 22.9 & 10.7 & 26.4 \\
  & SME-BETEL  & \textbf{15.3} & \textbf{9.4} & \textbf{24.4} \\
\addlinespace[1pt]
\multirow{2}{*}{$n = 400$}
  & PRD method & 22.7 & 8.7 & 24.6 \\
  & SME-BETEL  & \textbf{15.3} & \textbf{6.2} & \textbf{20.5} \\
\bottomrule
\end{tabular}
\endgroup
\end{table}

\section{Theoretical Properties of SME-BETEL}
\label{sec:sme_betel}
Having demonstrated the performance of SME-BETEL in a mixed-domain doubly-intractable problem, we now study the theoretical properties of the core SME-BETEL construction. The results in this section are developed for the original score matching objective in \eqref{sm_2} under an i.i.d. setting with observations supported on $\mathbb{R}^m$. Building on the general framework for Bayesian ETEL posteriors in \citet{chib2018bayesian}, we establish a BvM theorem for the SME-BETEL posterior defined in \eqref{SME-BETEL_posterior}. Before presenting the theorem, we introduce some additional notations and state our assumptions. 
Define
\begin{equation}
\label{ll_one_observation}
    \ell_{n, \boldsymbol{\theta}}(\mathbf{x}) := \log \frac{\exp \bigl(\widehat{\boldsymbol{\lambda}}(\boldsymbol{\theta})^{\top} \mathbf{g}(\mathbf{x},\boldsymbol{\theta})\bigr)}
       {\sum_{j=1}^n \exp \bigl(\widehat{\boldsymbol{\lambda}}(\boldsymbol{\theta})^{\top} \mathbf{g}(\mathbf{x}_j,\boldsymbol{\theta})\bigr)},
\end{equation}
so that the log-ETEL function is $\sum_{i=1}^n \ell_{n, \boldsymbol{\theta}}(\mathbf{x}_i)$. We impose the following assumptions:
\begin{assumption}
\label{s_1}
    The prior admits a density $\pi(\boldsymbol{\theta})$ with respect to Lebesgue measure, and $\pi(\boldsymbol{\theta})$ is continuous and positive at $\boldsymbol{\theta}^*$, where $\boldsymbol{\theta}^*$ is defined in \eqref{theta^*}.
\end{assumption}
\begin{assumption}
\label{s_2}
    The moment condition $\mathbb{E}_{p_{\mathbf{x}}}[\mathbf{g}(\mathbf{x}, \boldsymbol{\theta})] = \mathbf{0}$ has a unique solution $\boldsymbol{\theta}^* \in\operatorname{int}(\Theta)$, where $\mathbf{g}(\mathbf{x}, \boldsymbol{\theta})$ is defined as in \eqref{population_moment_condition}. Moreover $\Theta$ is compact.
\end{assumption}
\begin{assumption}
\label{s_3}
     (a) The observations $\mathbf{x}_{1:n}$ are i.i.d. with common density $p_{\mathbf{x}}$; (b) $\mathbf{g}(\mathbf{x}, \boldsymbol{\theta})$ is continuous at each $\boldsymbol{\theta} \in \Theta$ with probability one; (c) $\mathbf{g}(\mathbf{x}, \boldsymbol{\theta})$ is continuously differentiable in an open neighborhood $\mathcal{N}$ of $\boldsymbol{\theta}^*$; (d) $\mathbb{E}_{p_{\mathbf{x}}}[\sup_{\boldsymbol{\theta} \in \Theta}\|\mathbf{g}(\mathbf{x}, \boldsymbol{\theta})\|^{\alpha}] < \infty$ for some $\alpha > 2$; (e) $\mathbb{E}_{p_{\mathbf{x}}}[\sup_{\boldsymbol{\theta} \in \mathcal{N}} \|\partial \mathbf{g}(\mathbf{x}, \boldsymbol{\theta}) / \partial \boldsymbol{\theta}^{\top}\|_{F}] < \infty$ where $\|\cdot\|_F$ denotes the Frobenius norm.
\end{assumption}
\begin{assumption}
\label{s_4}
    (a) $\Delta :\,= \mathbb{E}_{p_{\mathbf{x}}}[\mathbf{g}(\mathbf{x}, \boldsymbol{\theta}^*)\mathbf{g}(\mathbf{x}, \boldsymbol{\theta}^*)^{\top}]$ is non-singular; (b) $\Gamma :\,= \mathbb{E}_{p_{\mathbf{x}}}[\partial \mathbf{g}(\mathbf{x}, \boldsymbol{\theta}^*)/\partial \boldsymbol{\theta}^{\top}]$ and $\operatorname{rank}(\Gamma)=p$.
\end{assumption}
\begin{assumption}
\label{bvm_condition}
    For any $\delta > 0$, there exists an $\epsilon > 0$ such that, as $n \to \infty$,
    \begin{equation*}
        P \bigg(\sup_{\|\boldsymbol{\theta} - \boldsymbol{\theta}^*\| > \delta} 
        \frac{1}{n}\sum_{i=1}^n 
        \big(\ell_{n, \boldsymbol{\theta}}(\mathbf{x}_i) 
        - \ell_{n, \boldsymbol{\theta}^*}(\mathbf{x}_i)\big) 
        \leq -\epsilon \bigg) \to 1,
    \end{equation*}
    where $\ell_{n, \boldsymbol{\theta}}(\mathbf{x})$ is defined in \eqref{ll_one_observation} and P denotes the probability under $p_{\mathbf{x}}$.
\end{assumption}
Assumption~\ref{s_1} concerns the prior distribution, while Assumption~\ref{bvm_condition} ensures that the log-ETEL criterion is sufficiently small whenever $\boldsymbol{\theta}$ is far from $\boldsymbol{\theta}^*$. These are standard conditions used to establish the asymptotic properties of Bayesian procedures \citep{van2000asymptotic, kleijn2012bernstein, chib2018bayesian}. Assumptions~\ref{s_2}–\ref{s_4} are the same as those imposed in \citet{newey2004higher, schennach2007point, chib2018bayesian}. The BvM theorem is stated below; we use $\mathcal{N}(A; \mu, \Sigma)$ to denote $P(Z \in A)$ where $Z \sim N(\mu, \Sigma)$. 
\begin{theorem}[BvM for SME-BETEL]
\label{bvm}
    Under Assumptions~\ref{s_1}--\ref{bvm_condition}, the SME-BETEL posterior defined in \eqref{SME-BETEL_posterior} satisfies
    \begin{equation}
    \label{target}
        \sup_{B} \bigg| 
        \pi\left(\sqrt{n}(\boldsymbol{\theta} - \widehat{\boldsymbol{\theta}}_n) \in B 
        \mid \mathbf{x}_{1:n}\right) 
        - \mathcal{N}(B; {\mathbf{0}, (\Gamma^{\top}\Delta^{-1}\Gamma)^{-1}}) 
        \bigg| \overset{p}{\to} 0,
    \end{equation}
    where $\Delta$ and $\Gamma$ are defined in Assumptions~\ref{s_4}, and 
    $\widehat{\boldsymbol{\theta}}_n$ is the sample score matching estimator, i.e., $\widehat{\boldsymbol{\theta}}_n = \argmin_{\theta \in \Theta} \widehat{J}(\boldsymbol{\theta}; \mathbf{x}_{1:n})$. In \eqref{target}, the supremum is taken over all Borel sets $B \subseteq \Theta$ and $\convp$ indicates convergence in probability with respect to the true data law. 
\end{theorem}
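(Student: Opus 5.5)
Since $\mathbf{g}$ has exactly $p$ components and $\boldsymbol{\theta}\in\mathbb{R}^p$, we are in the exactly identified case. The plan is to check that SME-BETEL is a special case of the Bayesian ETEL framework of \citet{chib2018bayesian} and invoke their BvM theorem. That theorem gives a total-variation approximation of the form \eqref{target}, centred at the ETEL point estimator $\widetilde{\boldsymbol{\theta}}_n$ (the maximizer of $L(\boldsymbol{\theta})$), with covariance $(\Gamma^\top\Delta^{-1}\Gamma)^{-1}$. Assumptions~\ref{s_1}--\ref{bvm_condition} are exactly their conditions, specialized to the moment function $\mathbf{g}(\mathbf{x},\boldsymbol{\theta})=\nabla_{\boldsymbol{\theta}}[\operatorname{tr}(\nabla^2_{\mathbf{x}}\log p)+\tfrac12\|\nabla_{\mathbf{x}}\log p\|^2]$. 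Condition (d) with $\alpha>2$ is the moment bound needed to control $\max_i\sup_{\boldsymbol{\theta}}\|\mathbf{g}(\mathbf{x}_i,\boldsymbol{\theta})\|=o_p(n^{1/2})$, and hence $\widehat{\boldsymbol{\lambda}}(\boldsymbol{\theta})$ near $\boldsymbol{\theta}^*$. The only genuine work is therefore to move the centring from $\widetilde{\boldsymbol{\theta}}_n$ to the sample score matching estimator $\widehat{\boldsymbol{\theta}}_n$.

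The key steps, in order, are as follows. (i) Show $\widehat{\boldsymbol{\theta}}_n\convp\boldsymbol{\theta}^*$. By Assumption~\ref{s_2}, $\Theta$ is compact and $\boldsymbol{\theta}^*$ is the unique root of $\mathbb{E}\mathbf{g}$. By (b) and (d) of Assumption~\ref{s_3}, a uniform law of large numbers gives $\sup_{\boldsymbol{\theta}}\|n^{-1}\sum_i\mathbf{g}(\mathbf{x}_i,\boldsymbol{\theta})-\mathbb{E}\mathbf{g}(\mathbf{x},\boldsymbol{\theta})\|\convp0$. To pass from the minimizer of $\widehat J$ to an approximate root, I will use the interiority of $\boldsymbol{\theta}^*$: since $\nabla_{\boldsymbol{\theta}}\widehat J=n^{-1}\sum_i\mathbf{g}(\mathbf{x}_i,\cdot)$, the first-order condition holds w.p.a.\ 1 once consistency is established via the ULLN for $\widehat J$ itself, or equivalently by restricting to the local root. (ii) From $n^{-1}\sum_i\mathbf{g}(\mathbf{x}_i,\widehat{\boldsymbol{\theta}}_n)=\mathbf{0}$, a mean-value expansion using (c), (e) and the full rank of $\Gamma$ gives
\[
\sqrt n(\widehat{\boldsymbol{\theta}}_n-\boldsymbol{\theta}^*)=-\Gamma^{-1}n^{-1/2}\sum_i\mathbf{g}(\mathbf{x}_i,\boldsymbol{\theta}^*)+o_p(1).
\]
Here $\Gamma$ is square of rank $p$, so it is invertible and $(\Gamma^\top\Delta^{-1}\Gamma)^{-1}=\Gamma^{-1}\Delta\Gamma^{-\top}$.

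(iii) Identify $\widetilde{\boldsymbol{\theta}}_n$ with $\widehat{\boldsymbol{\theta}}_n$. In the just-identified case, whenever $\mathbf{0}\in\operatorname{int}\mathcal{C}(\widehat{\boldsymbol{\theta}}_n)$ (which holds w.p.a.\ 1, since the sample mean is zero and $\Delta$ is nonsingular), we get $\widehat{\boldsymbol{\lambda}}(\widehat{\boldsymbol{\theta}}_n)=\mathbf{0}$. All weights then equal $1/n$, attaining the global maximum $n^{-n}$ of $\prod_i\omega_i$ over the simplex. Hence $\widehat{\boldsymbol{\theta}}_n$ maximizes $L$. If the maximizer is not unique, the Chib et al.\ expansion still shows $\sqrt n(\widetilde{\boldsymbol{\theta}}_n-\widehat{\boldsymbol{\theta}}_n)=o_p(1)$, since both admit the same linear representation from (ii).

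(iv) Re-centre. Total variation is invariant under the translation $B\mapsto B+\sqrt n(\widetilde{\boldsymbol{\theta}}_n-\widehat{\boldsymbol{\theta}}_n)$. Moreover, $\sup_B|\mathcal{N}(B;\mathbf{h}_n,\Sigma)-\mathcal{N}(B;\mathbf{0},\Sigma)|\to0$ whenever $\mathbf{h}_n\to\mathbf{0}$, by Pinsker's inequality or the explicit Gaussian total-variation bound. Combining this with the triangle inequality yields \eqref{target}.

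The main obstacle is verifying that the hypotheses of \citet{chib2018bayesian} are met in the form they require. In particular, their local quadratic expansion of the log-ETEL ratio $\sum_i(\ell_{n,\boldsymbol{\theta}}-\ell_{n,\widehat{\boldsymbol{\theta}}_n})$ uniformly on $\sqrt n$-neighbourhoods needs $\widehat{\boldsymbol{\lambda}}(\boldsymbol{\theta})=O_p(n^{-1/2})$ uniformly there. If their theorem cannot be quoted verbatim, I would reprove this expansion directly. The argument would use the implicit function theorem on $n^{-1}\sum_i\mathbf{g}_i e^{\boldsymbol{\lambda}^\top\mathbf{g}_i}=\mathbf{0}$ to get $\widehat{\boldsymbol{\lambda}}(\boldsymbol{\theta})=-\Delta^{-1}\bar{\mathbf{g}}(\boldsymbol{\theta})+o_p(n^{-1/2})$, which leads to
\[
\log L(\boldsymbol{\theta})-\log L(\widehat{\boldsymbol{\theta}}_n)=-\tfrac n2\bar{\mathbf{g}}(\boldsymbol{\theta})^\top\Delta^{-1}\bar{\mathbf{g}}(\boldsymbol{\theta})+o_p(1).
\]
The tail region is handled by Assumption~\ref{bvm_condition} together with prior continuity from Assumption~\ref{s_1}, via the standard Laplace-type argument.
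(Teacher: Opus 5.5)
Your argument is correct but organised differently from the paper's. You treat \citet{chib2018bayesian} as a black box and then move the centring to $\widehat{\boldsymbol{\theta}}_n$. That step uses translation invariance of total variation and $\sup_B|\mathcal{N}(B;\mathbf{a}_n,\Sigma)-\mathcal{N}(B;\mathbf{0},\Sigma)|\overset{p}{\to}0$ whenever $\mathbf{a}_n\overset{p}{\to}\mathbf{0}$.

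The paper instead re-runs the Laplace-type argument directly in $\mathbf{h}=\sqrt{n}(\boldsymbol{\theta}-\widehat{\boldsymbol{\theta}}_n)$. It Taylor-expands the log-ETEL ratio about $\widehat{\boldsymbol{\theta}}_n$ and kills the linear term via $\widehat{\boldsymbol{\lambda}}(\widehat{\boldsymbol{\theta}}_n)=\mathbf{0}$. It gets the Hessian limit $-\Gamma^{\top}\Delta^{-1}\Gamma$ from the multiplier expansion. It then splits into the regions $\|\mathbf{h}\|\le c\log\sqrt{n}$, $c\log\sqrt{n}<\|\mathbf{h}\|\le\delta\sqrt{n}$ and $\|\mathbf{h}\|>\delta\sqrt{n}$. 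The last region is handled by Assumption~\ref{bvm_condition} together with $M_n(\widehat{\boldsymbol{\theta}}_n)-M_n(\boldsymbol{\theta}^*)=O_p(1)$.

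The SME-specific input is the same in both routes. The first-order condition of $\widehat J$ makes $\widehat{\boldsymbol{\theta}}_n$ an exact empirical root of the just-identified moment system. Hence $\widehat{\boldsymbol{\lambda}}(\widehat{\boldsymbol{\theta}}_n)=\mathbf{0}$, and, as you observe, $L(\widehat{\boldsymbol{\theta}}_n)=n^{-n}$ is the global maximum of $L$. The paper uses this fact inside the expansion, so no re-centring is ever needed; you use it to shift an existing result.

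Your route is shorter and isolates what is genuinely new. The paper's route is self-contained and avoids matching hypotheses with a theorem formulated on an augmented parameter space. You need that theorem's valid-moment special case, which applies because $\mathbb{E}_{p_{\mathbf{x}}}\mathbf{g}(\mathbf{x},\boldsymbol{\theta}^*)=\mathbf{0}$ holds even when $p(\cdot;\boldsymbol{\theta})$ is misspecified.

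Two small points.

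First, check the exact form of the cited BvM. It may be stated for $\sqrt{n}(\boldsymbol{\theta}-\boldsymbol{\theta}^*)$ with random centring $-(\Gamma^{\top}\Delta^{-1}\Gamma)^{-1}\Gamma^{\top}\Delta^{-1}n^{-1/2}\sum_i\mathbf{g}(\mathbf{x}_i,\boldsymbol{\theta}^*)$ rather than at the ETEL maximiser. In that case step (iii) is unnecessary: that matrix equals $\Gamma^{-1}$ here, so step (ii) alone gives the $o_p(1)$ shift.

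Second, in step (i), passing to a local root is not equivalent, since the statement fixes $\widehat{\boldsymbol{\theta}}_n$ as the global minimiser of $\widehat J$. Consistency of that minimiser needs a uniform law of large numbers for $\widehat J$ and unique minimisation of $J$. Assumptions~\ref{s_1}--\ref{bvm_condition} do not fully provide these. The paper has the same dependence and imports consistency from Theorem~\ref{consistency}; you should do likewise.
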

\begin{proof}
    The proof can be found in Appendix~\ref{app:bvm}.
\end{proof}
The above theorem implies that the total variation (TV) distance between the SME-BETEL posterior $\pi(\boldsymbol{\theta} \mid \mathbf{x}_{1:n})$, and a Gaussian distribution centered at the sample score matching estimator $\widehat{\boldsymbol{\theta}}_n$ with variance $n^{-1}(\Gamma^{\top}\Delta^{-1}\Gamma)^{-1}$, converges in probability to zero. The properties of $\widehat{\boldsymbol{\theta}}_n$ have been studied: \citet{hyvarinen2005estimation} established convergence of the sample criterion in \eqref{sm_sampleversion} to its population counterpart; \citet{forbes2015linear} established consistency and asymptotic normality of $\widehat{\boldsymbol{\theta}}_n$ for exponential-family models; asymptotic normality has also been studied for a variant of score matching, namely sliced score matching \citep{song2020sliced}. Our Theorem~\ref{asymptotic_normality} extends this line of work by establishing asymptotic normality of $\widehat{\boldsymbol{\theta}}_n$ under the general setting considered in this
paper, where we show that
\begin{equation}
\label{asymptotic_normal}
    \sqrt{n}(\widehat{\boldsymbol{\theta}}_n - \boldsymbol{\theta}^*) \convd 
    N\big(\mathbf{0}, (\Gamma^{\top}\Delta^{-1}\Gamma)^{-1}\big),
\end{equation}
where $\boldsymbol{\theta}^*$ is defined in \eqref{theta^*} and $\convd$ denotes convergence in distribution. Importantly, equations~\eqref{target} and~\eqref{asymptotic_normal} show that the limiting posterior variance of SME-BETEL matches the asymptotic sampling variance of the score matching estimator under both correct specification and misspecification. Consequently, SME-BETEL credible sets are also asymptotically valid frequentist confidence sets for $\boldsymbol{\theta}^*$. This calibration is particularly important under model misspecification, where the usual Bayesian posterior does not generally provide valid frequentist coverage \citep{kleijn2012bernstein}. Appendix~\ref{subsec:calibration_example} provides an illustrative example demonstrating this distinction. 

Under correct specification, both SME-BETEL and the usual Bayesian posterior provide asymptotically valid uncertainty quantification, although SME-BETEL can be less statistically efficient because score matching does not use the information contained in the normalizing constant. This efficiency loss is consistent with existing results on the statistical efficiency of score matching \citep{koehler2022statistical}. Appendix~\ref{app:simulation_res} provides extensive simulation studies across different models and data supports, illustrating the accuracy of SME-BETEL under correct specification and its uncertainty calibration under model misspecification.

\section{Eastern United States Ozone Data}
\label{real_data}

Having established the theoretical properties of the core SME-BETEL construction, we now return to the mixed-domain preferential sampling setting in Section~\ref{sec:simulation_study}. As a real-data counterpart to the preferential sampling simulation, we re-analyze the Eastern United States ozone data in \cite{pati2011bayesian} to evaluate SME-BETEL in a spatial setting where the sampling design may depend on the underlying response surface. The data consist of median daily ozone measurements from June–August 2007 at $n=631$ monitoring locations. Because monitoring sites are not placed uniformly across the region, the observed spatial pattern may contain information about the underlying ozone process. This makes the dataset useful for assessing whether accounting for informative sampling changes inference results. The left plot in Figure~\ref{fig:ozone_data_combined} displays the observed ozone levels and monitor locations. The monitoring network is denser in several urban and northeastern regions, including New York and Connecticut, where ozone levels tend to be relatively high. In contrast, fewer monitors are observed in parts of Mississippi and West Virginia, where ozone concentrations are generally lower. This visible association between monitor density and ozone level suggests that treating the sampling locations as noninformative may lead to biased spatial inference. Before fitting the spatial model, the geographic coordinates are mapped to a planar coordinate system using the Mercator projection and standardized to lie in $[0,1]^2$ by component-wise range normalization. The preferential sampling model is then fitted using these rescaled spatial coordinates.

\begin{figure}[!t]
\centering
\includegraphics[
    width=0.97\textwidth
]{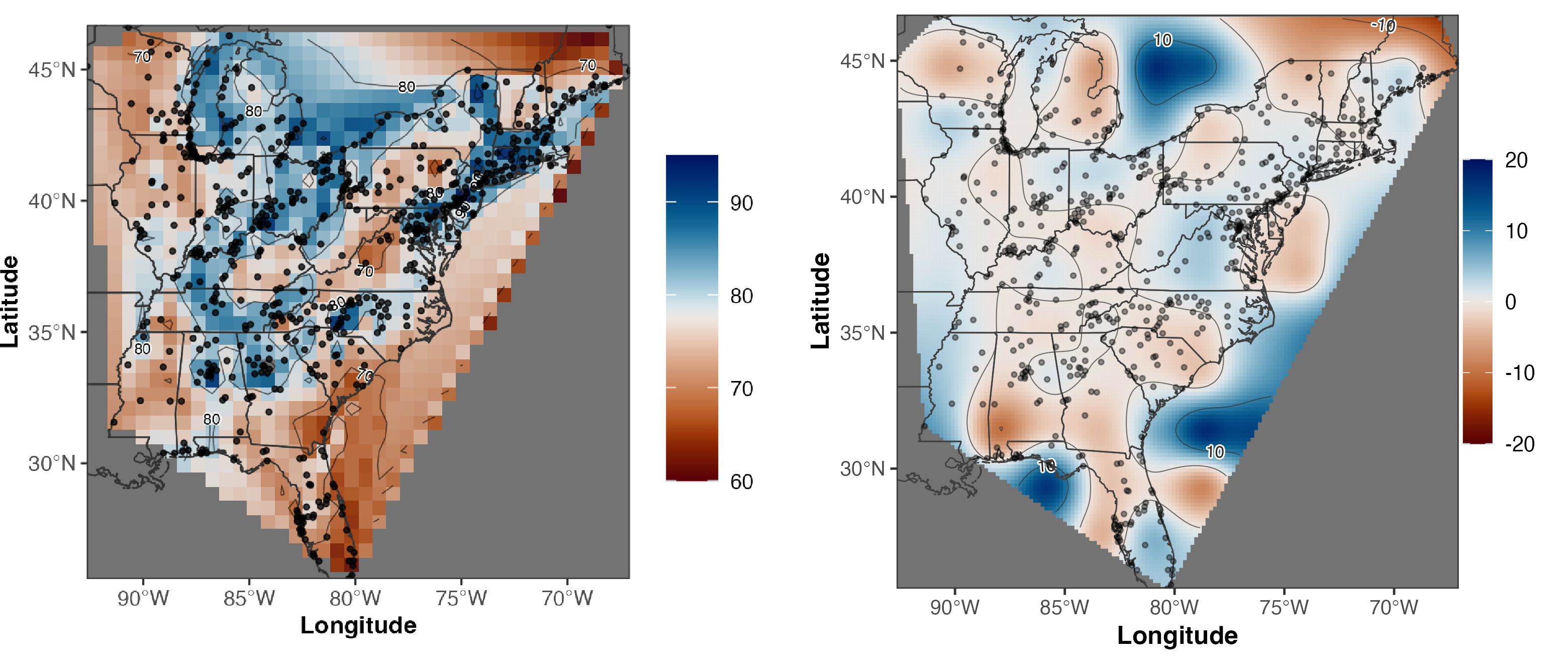}
\caption{Ozone data analysis. Left: spatial distribution of ozone concentrations across the eastern United States, with colors showing ozone levels and black points marking monitoring stations. Right: difference in predicted ozone concentrations between NIS and SME-BETEL, with colors showing predicted values under NIS minus those under SME-BETEL.}
\label{fig:ozone_data_combined}
\end{figure}

We retain the preferential sampling structure in \eqref{full_IS} to account for the possibility that monitoring locations are related to the underlying ozone surface. Let $\mu(\boldsymbol{s})$ denote the spatial surface at location $\boldsymbol{s}\in\mathcal{D}$, where after preprocessing $\mathcal{D}=[0,1]^2$. The observed responses $y_i$ are the ozone measurements collected at monitoring locations $\boldsymbol{s}_i$, $i=1,\ldots,n$. In this real-data analysis, we represent the spatial residual effects using finite basis expansions with zero-mean Gaussian priors on the basis coefficients. We use a reduced kernel-convolution basis to capture the coarse-scale spatial structure shared by the location and response fields, and augment the baseline response field with an intercept and a fine-scale Mat\'ern basis. We assign $\sigma^2\sim\mathrm{Inv\text{-}Ga}(5,100)$ and, for SME-BETEL, additionally assign $a\sim\mathrm{N}(0,10^2)$. For comparison, we consider a noninformative sampling (NIS) model obtained by fixing $a=0$. SME-BETEL applies the mixed-domain construction in \eqref{sm_mixed_type} with the boundary weight in \eqref{mollifier} with $\epsilon=0.2$. For each method, we run four chains of 200{,}000 iterations and discard the first 50{,}000 per chain. Posterior summaries for selected parameters are reported in Table~\ref{tab:ozone_mean_ci}.






\begin{table}[!t]
\centering
\caption{Posterior means and 95\% credible intervals for the ozone data analysis.}
\label{tab:ozone_mean_ci}
\begingroup
\footnotesize
\setlength{\tabcolsep}{4pt}
\renewcommand{\arraystretch}{0.92}
\begin{tabular}{@{}lcc@{}}
\toprule
Parameter & NIS & SME-BETEL \\
\midrule
$a$      & 0 (fixed)        & 3.48 (0.41, 6.86) \\
$\sigma$ & 5.23 (4.96, 5.51) & 5.53 (5.22, 5.85) \\
\bottomrule
\end{tabular}
\endgroup
\end{table}

The 95\% credible interval for $a$ under SME-BETEL excludes zero, indicating evidence of informative sampling in the monitoring network. The right plot of Figure~\ref{fig:ozone_data_combined} shows spatially heterogeneous differences between the NIS and SME-BETEL predictions. The most pronounced discrepancies occur near the boundaries of the spatial domain, particularly around the Great Lakes and along parts of the southern and southeastern coasts. These areas generally have sparser monitoring coverage and are adjacent to regions with relatively high ozone levels, suggesting that accounting for the sampling mechanism can meaningfully affect predictions where local observations are limited but the surrounding ozone surface exhibits substantial spatial variation. Differences are also visible in portions of the Mid-Atlantic and Northeast, where monitoring is considerably denser and observed ozone levels are generally higher. Thus, the effect of accounting for informative sampling is not restricted to sparsely monitored regions: differences between SME-BETEL and NIS can also arise in well-monitored areas where the spatial distribution of monitoring locations may carry information about the underlying ozone surface. Overall, the comparison suggests that incorporating the sampling mechanism changes the estimated ozone surface in a spatially varying manner rather than producing a uniform upward or downward shift.

\section{Conclusion}\label{sec-conc}

We propose SME-BETEL, a semiparametric Bayesian framework that combines score matching estimating equations with Bayesian ETEL for inference in doubly-intractable models, avoiding evaluation of normalizing constants while providing calibrated posterior uncertainty. Building on this framework, we develop a mixed-domain score matching objective that extends SME-BETEL to data whose components lie on different supports. We establish consistency and asymptotic normality of the score matching estimator and a BvM theorem for the SME-BETEL posterior, showing that its limiting covariance matches the sampling covariance of the score matching estimator. Numerical studies demonstrate competitive estimation accuracy under correct specification and reliable uncertainty quantification under model misspecification. The ozone application further illustrates the practical utility of the mixed-domain construction in a preferential sampling setting.

Several directions remain for future work. One is to investigate approaches for improving the statistical efficiency of score matching-based inference while retaining its computational advantages for models with intractable normalizing constants. Another is to develop more systematic constructions of score matching objectives and moment functions for complex or mixed-domain supports. Overall, SME-BETEL provides a practical and theoretically justified route to robust Bayesian inference for unnormalized statistical models.

\section{Supplemental Material}\label{supplemental-material}
The Supplemental Material contains the derivation of mixed-domain score matching and proofs of the theoretical results, additional simulation studies, model-specific formulas and computational details, and the Metropolils--Hastings algorithm used to sample from the SME-BETEL posterior.

\section{Disclosure statement}\label{disclosure-statement}
The authors report there are no competing interests to declare.

\section{Data Availability Statement}\label{data-availability-statement}
The data and code supporting the findings of this study are available in an anonymized repository at \url{https://anonymous.4open.science/r/SME_BETEL-5B4C/}.















\bibliography{bibliography}


\clearpage
\setcounter{page}{1}

\appendix
\begin{center}
    \large\bf Supplementary Material for ``Robust Bayesian Inference for Unnormalized Models with Mixed-Domain Data''
\end{center}
\setcounter{section}{0}
\setcounter{assumption}{0}
\setcounter{theorem}{0}
\setcounter{equation}{0}
\setcounter{algorithm}{0}
\setcounter{table}{0}
\setcounter{figure}{0}

\renewcommand{\theequation}{S\arabic{equation}}
\renewcommand{\thetable}{S\arabic{table}}
\renewcommand{\thefigure}{S\arabic{figure}}
\renewcommand{\thealgorithm}{S\arabic{algorithm}}
\renewcommand{\theassumption}{S\arabic{assumption}}
\renewcommand{\thetheorem}{S\arabic{theorem}}
\renewcommand\thesection{S\arabic{section}}
\renewcommand\thesubsection{S\arabic{section}.\arabic{subsection}}
\makeatletter
\@ifundefined{theHassumption}{}{\renewcommand{\theHassumption}{S\arabic{assumption}}}
\@ifundefined{theHtheorem}{}{\renewcommand{\theHtheorem}{S\arabic{theorem}}}
\@ifundefined{theHtable}{}{\renewcommand{\theHtable}{S\arabic{table}}}
\@ifundefined{theHfigure}{}{\renewcommand{\theHfigure}{S\arabic{figure}}}
\@ifundefined{theHalgorithm}{}{\renewcommand{\theHalgorithm}{S\arabic{algorithm}}}
\makeatother

The supplementary materials are organized as follows. Section~\ref{app:prove} provides the theoretical proofs, including the derivation of mixed-domain score matching, consistency and asymptotic normality of the score matching estimator, and the Bernstein–von Mises theorem for the SME-BETEL posterior. Section~\ref{app:simulation_res} presents supplementary simulation studies and numerical details. Section~\ref{app:formulas_computation_details} collects model-specific formulas and computational details. Section~\ref{app:sampling_algo} presents the Metropolis–Hastings algorithm used to sample from the SME-BETEL posterior.

\section{Proofs of Main Results}
\label{app:prove}

\subsection{Proof of Proposition~\ref{prop:mixed_domain_sm}}
\label{app:heterogeneous_SM_derivation}

In this subsection, we derive the score matching objective \eqref{sm_mixed_type} used for mixed-domain observations
$\mathbf{x} = (\mathbf{y},\mathbf{s})$, where $\mathbf{y}\in \mathbb{R}^d$ and $\mathbf{s}\in \mathcal{D}$.
The main difference from the usual score matching derivation is that the two components
have different supports. The response component $\mathbf{y}$ lies in an unconstrained
Euclidean space, while the location component $\mathbf{s}$ lies in a compact domain.
To handle the boundary of $\mathcal{D}$, we introduce a smooth weight function $h(\mathbf{s})$
that vanishes on $\partial \mathcal{D}$. Write
\[
p_{\mathbf{x}}(\mathbf{y},\mathbf{s})
=
p_{\mathbf{x}}(\mathbf{y}\mid \mathbf{s})p_{\mathbf{x}}(\mathbf{s}),
\qquad
p(\mathbf{y},\mathbf{s};\boldsymbol{\theta})
=
p(\mathbf{y}\mid \mathbf{s};\boldsymbol{\theta})
p(\mathbf{s};\boldsymbol{\theta}).
\]
Define the model and data score functions by
\[
\psi(\mathbf{y},\mathbf{s};\boldsymbol{\theta})
=
\nabla_{(\mathbf{y},\mathbf{s})}
\log p(\mathbf{y},\mathbf{s};\boldsymbol{\theta}),
\qquad
\psi_{\mathbf{x}}(\mathbf{y},\mathbf{s})
=
\nabla_{(\mathbf{y},\mathbf{s})}
\log p_{\mathbf{x}}(\mathbf{y},\mathbf{s}).
\] 
We start from the ``weighted'' relative-Fisher divergence,
\begin{align*}
J(\boldsymbol{\theta})
&=
\frac{1}{2}
\int_{\mathbb{R}^d}\int_{\mathcal{D}}
h(\mathbf{s})
\left\|
\psi(\mathbf{y},\mathbf{s};\boldsymbol{\theta})
-
\psi_{\mathbf{x}}(\mathbf{y},\mathbf{s})
\right\|^2
p_{\mathbf{x}}(\mathbf{y},\mathbf{s})
\,\mu(d\mathbf{s})\,\nu(d\mathbf{y}) \\
&\overset{c}{=}
\frac{1}{2}
\int_{\mathbb{R}^d}\int_{\mathcal{D}}
h(\mathbf{s})
\left\|
\psi(\mathbf{y},\mathbf{s};\boldsymbol{\theta})
\right\|^2
p_{\mathbf{x}}(\mathbf{y},\mathbf{s})
\,\mu(d\mathbf{s})\,\nu(d\mathbf{y}) \\
&\quad -
\int_{\mathbb{R}^d}\int_{\mathcal{D}}
h(\mathbf{s})
\psi(\mathbf{y},\mathbf{s};\boldsymbol{\theta})^{\top}
\psi_{\mathbf{x}}(\mathbf{y},\mathbf{s})
p_{\mathbf{x}}(\mathbf{y},\mathbf{s})
\,\mu(d\mathbf{s})\,\nu(d\mathbf{y}).
\end{align*}
Here, $\overset{c}{=}$ denotes equality up to an additive constant independent of $\boldsymbol{\theta}$. Therefore, the only term that still involves the unknown data-generating density score is the cross term. We next rewrite this cross-term by separating the gradients
with respect to $\mathbf{y}$ and $\mathbf{s}$. We have
\begin{align*}
J(\boldsymbol{\theta})
&\overset{c}{=}
\frac{1}{2}
\int_{\mathbb{R}^d}\int_{\mathcal{D}}
h(\mathbf{s})
\left\|
\psi(\mathbf{y},\mathbf{s};\boldsymbol{\theta})
\right\|^2
p_{\mathbf{x}}(\mathbf{y},\mathbf{s})
\,\mu(d\mathbf{s})\,\nu(d\mathbf{y}) \\
&\quad -
\int_{\mathbb{R}^d}\int_{\mathcal{D}}
h(\mathbf{s})
\left[
\nabla_{(\mathbf{y},\mathbf{s})}
\log p(\mathbf{y},\mathbf{s};\boldsymbol{\theta})
\right]^{\top}
\left[
\nabla_{(\mathbf{y},\mathbf{s})}
\log p_{\mathbf{x}}(\mathbf{y},\mathbf{s})
\right]
p_{\mathbf{x}}(\mathbf{y},\mathbf{s})
\,\mu(d\mathbf{s})\,\nu(d\mathbf{y}) \\
&=
\frac{1}{2}
\int_{\mathbb{R}^d}\int_{\mathcal{D}}
h(\mathbf{s})
\left\|
\psi(\mathbf{y},\mathbf{s};\boldsymbol{\theta})
\right\|^2
p_{\mathbf{x}}(\mathbf{y},\mathbf{s})
\,\mu(d\mathbf{s})\,\nu(d\mathbf{y}) \\
&\quad -
\int_{\mathbb{R}^d}\int_{\mathcal{D}}
h(\mathbf{s})
\left[
\nabla_{\mathbf{y}}
\log p(\mathbf{y}\mid \mathbf{s};\boldsymbol{\theta})
\right]^{\top}
\nabla_{\mathbf{y}}
\log p_{\mathbf{x}}(\mathbf{y}\mid \mathbf{s})
p_{\mathbf{x}}(\mathbf{y},\mathbf{s})
\,\mu(d\mathbf{s})\,\nu(d\mathbf{y}) \\
&\quad -
\int_{\mathbb{R}^d}\int_{\mathcal{D}}
h(\mathbf{s})
\left[
\nabla_{\mathbf{s}}
\log p(\mathbf{y},\mathbf{s};\boldsymbol{\theta})
\right]^{\top}
\nabla_{\mathbf{s}}
\log p_{\mathbf{x}}(\mathbf{y},\mathbf{s})
p_{\mathbf{x}}(\mathbf{y},\mathbf{s})
\,\mu(d\mathbf{s})\,\nu(d\mathbf{y}).
\end{align*}
The last two terms are handled by integration by parts. For the $\mathbf{y}$-term, the
usual Euclidean score matching argument applies. For the $\mathbf{s}$-term, integration by
parts is applied over the compact domain $D$; the boundary contribution vanishes because
$h(\mathbf{s})=0$ on $\partial \mathcal{D}$. This gives
\begin{align*}
J(\boldsymbol{\theta})
&\overset{c}{=}
\frac{1}{2}
\int_{\mathbb{R}^d}\int_{\mathcal{D}}
h(\mathbf{s})
\left\|
\psi(\mathbf{y},\mathbf{s};\boldsymbol{\theta})
\right\|^2
p_{\mathbf{x}}(\mathbf{y},\mathbf{s})
\,\mu(d\mathbf{s})\,\nu(d\mathbf{y}) \\
&\quad +
\int_{\mathbb{R}^d}\int_{\mathcal{D}}
h(\mathbf{s})
\operatorname{tr}
\left\{
\nabla_{\mathbf{y}}^2
\log p(\mathbf{y}\mid \mathbf{s};\boldsymbol{\theta})
\right\}
p_{\mathbf{x}}(\mathbf{y},\mathbf{s})
\,\mu(d\mathbf{s})\,\nu(d\mathbf{y}) \\
&\quad +
\int_{\mathbb{R}^d}\int_{\mathcal{D}}
\bigg[
h(\mathbf{s})
\Delta_{\mathbf{s}}
\log p(\mathbf{y},\mathbf{s};\boldsymbol{\theta})
+
\nabla_{\mathbf{s}}h(\mathbf{s})^{\top}
\nabla_{\mathbf{s}}
\log p(\mathbf{y},\mathbf{s};\boldsymbol{\theta})
\bigg]
p_{\mathbf{x}}(\mathbf{y},\mathbf{s})
\,\mu(d\mathbf{s})\,\nu(d\mathbf{y}) \\
&=
\int_{\mathbb{R}^d}\int_{\mathcal{D}}
\bigg[
\frac{1}{2}
h(\mathbf{s})
\left\|
\psi(\mathbf{y},\mathbf{s};\boldsymbol{\theta})
\right\|^2
+
h(\mathbf{s})
\operatorname{tr}
\left\{
\nabla_{\mathbf{y}}^2
\log p(\mathbf{y}\mid \mathbf{s};\boldsymbol{\theta})
\right\} \\
&\qquad\qquad
+
h(\mathbf{s})
\Delta_{\mathbf{s}}
\log p(\mathbf{y},\mathbf{s};\boldsymbol{\theta})
+
\nabla_{\mathbf{s}}h(\mathbf{s})^{\top}
\nabla_{\mathbf{s}}
\log p(\mathbf{y},\mathbf{s};\boldsymbol{\theta})
\bigg]
p_{\mathbf{x}}(\mathbf{y},\mathbf{s})
\,\mu(d\mathbf{s})\,\nu(d\mathbf{y}) \\
&=
\mathbb{E}_{p_{\mathbf{x}}}
\bigg[
\frac{1}{2}
h(\mathbf{s})
\left\|
\psi(\mathbf{y},\mathbf{s};\boldsymbol{\theta})
\right\|^2
+
h(\mathbf{s})
\operatorname{tr}
\left\{
\nabla_{\mathbf{y}}^2
\log p(\mathbf{y}\mid \mathbf{s};\boldsymbol{\theta})
\right\} \\
&\qquad\qquad
+
h(\mathbf{s})
\Delta_{\mathbf{s}}
\log p(\mathbf{y},\mathbf{s};\boldsymbol{\theta})
+
\nabla_{\mathbf{s}}h(\mathbf{s})^{\top}
\nabla_{\mathbf{s}}
\log p(\mathbf{y},\mathbf{s};\boldsymbol{\theta})
\bigg].
\end{align*}
This is the tractable mixed-domain score matching objective. It depends only on derivatives
of the working model and does not involve the unknown score
$\nabla_{(\mathbf{y},\mathbf{s})}\log p_{\mathbf{x}}(\mathbf{y},\mathbf{s})$.

\subsection{Consistency of Sample Score Matching Estimator}
\label{app:sm_consistency}
In this subsection, we prove the consistency of the sample score matching estimator $\widehat{\boldsymbol{\theta}}_n$ obtained from \eqref{sm_sampleversion}. We begin with regularity conditions following \citet{hyvarinen2005estimation}:
\begin{assumption}
\label{a_1}
    The population score matching objective function $J(\boldsymbol{\theta})$ has a unique minimizer over $\Theta$; that is, for every $\boldsymbol{\theta}\in\Theta$ with $\boldsymbol{\theta}\neq \boldsymbol{\theta}^*$, $J(\boldsymbol{\theta}) > J(\boldsymbol{\theta}^*)$.
\end{assumption}
\begin{assumption}
\label{a_2}
    $\mathbb{E}_{p_{\mathbf{x}}}\{\|\psi(\mathbf{x}; \boldsymbol{\theta})\|^2\}$ and $\mathbb{E}_{p_{\mathbf{x}}}\{\|\psi_{\mathbf{x}}(\mathbf{x})\|^2\}$ are finite for any $\boldsymbol{\theta} \in \Theta$.
\end{assumption}
\begin{assumption}
\label{a_3}
    For any $\boldsymbol{\theta} \in \Theta$, $\lim_{\|\mathbf{x}\| \rightarrow \infty}p_{\mathbf{x}}(\mathbf{x})\psi(\mathbf{x};\boldsymbol{\theta}) = \mathbf{0}$.
\end{assumption}
\begin{assumption}
\label{a_4}
    The model density $p(\mathbf{x}; \boldsymbol{\theta}) > 0$ for any $\mathbf{x} \in \mathbb{R}^m$ and $\boldsymbol{\theta} \in \Theta$.
\end{assumption}

Assumption~\ref{a_1} is an identification condition for the score matching target $\boldsymbol{\theta}^*$ defined in \eqref{theta^*}. Assumptions~\ref{a_2} and~\ref{a_3} control the behavior of the model score function and the score function of the data-generating density. In particular, Assumption~\ref{a_3} plays a crucial role in deriving \eqref{sm_2}. Assumption~\ref{a_4} requires that all densities are strictly positive. We further impose the following Lipschitz condition on the model score function:
\begin{assumption}
\label{a_6}
    Both $\nabla_{\mathbf{x}} \psi(\mathbf{x}; \boldsymbol{\theta})$ and $\psi(\mathbf{x}; \boldsymbol{\theta})\psi^\top(\mathbf{x}; \boldsymbol{\theta})$ are Lipschitz continuous in Frobenius norm, i.e., for any $\boldsymbol{\theta}_1$ and $\boldsymbol{\theta}_2 \in \Theta$, $\|\nabla_{\mathbf{x}} \psi(\mathbf{x}; \boldsymbol{\theta}_1) - \nabla_{\mathbf{x}} \psi(\mathbf{x}; \boldsymbol{\theta}_2)\|_F \leq L_1(\mathbf{x})\|\boldsymbol{\theta}_1 - \boldsymbol{\theta}_2 \|_2$ and $\|\psi(\mathbf{x}; \boldsymbol{\theta}_1)\psi^\top(\mathbf{x}; \boldsymbol{\theta}_1)\ - \psi(\mathbf{x}; \boldsymbol{\theta}_2)\psi^\top(\mathbf{x}; \boldsymbol{\theta}_2) \|_F \leq L_2(\mathbf{x})\|\boldsymbol{\theta}_1 - \boldsymbol{\theta}_2 \|_2$. Furthermore, we require $\mathbb{E}_{p_{\mathbf{x}}}\{L_1^2(\mathbf{x})\} < \infty$ and $\mathbb{E}_{p_{\mathbf{x}}}\{L_2^2(\mathbf{x})\} < \infty$.
\end{assumption}
Under this additional condition, we establish the consistency of $\widehat{\boldsymbol{\theta}}_n$.

\begin{lemma}
\label{lemma2}
    Suppose Assumption~\ref{a_6} holds. Let $f(\mathbf{x};\boldsymbol{\theta}) = \operatorname{tr}\{\nabla_{\mathbf{x}}\psi(\mathbf{x};\boldsymbol{\theta})\} + 2^{-1}\|\psi(\mathbf{x};\boldsymbol{\theta})\|_2^2.$ Then $f(\mathbf{x};\boldsymbol{\theta})$ is Lipschitz continuous in $\boldsymbol{\theta}$ with Lipschitz constant $L(\mathbf{x})$ satisfying $\mathbb{E}_{p_{\mathbf{x}}}\{L^2(\mathbf{x})\}<\infty.$
\end{lemma}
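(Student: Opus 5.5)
The plan is to bound the two summands of $f$ separately with the two Lipschitz conditions in Assumption~\ref{a_6}, writing each as a linear functional of the matrix that Assumption~\ref{a_6} controls. Fix $\mathbf{x}$ and $\boldsymbol{\theta}_1,\boldsymbol{\theta}_2\in\Theta$. Then
\[
|f(\mathbf{x};\boldsymbol{\theta}_1)-f(\mathbf{x};\boldsymbol{\theta}_2)| \le \bigl|\operatorname{tr}\{\nabla_{\mathbf{x}}\psi(\mathbf{x};\boldsymbol{\theta}_1)-\nabla_{\mathbf{x}}\psi(\mathbf{x};\boldsymbol{\theta}_2)\}\bigr| + \tfrac12\bigl|\,\|\psi(\mathbf{x};\boldsymbol{\theta}_1)\|^2-\|\psi(\mathbf{x};\boldsymbol{\theta}_2)\|^2\bigr|.
\]

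For the first term, write $A=\nabla_{\mathbf{x}}\psi(\mathbf{x};\boldsymbol{\theta}_1)-\nabla_{\mathbf{x}}\psi(\mathbf{x};\boldsymbol{\theta}_2)$, an $m\times m$ matrix. Cauchy--Schwarz gives $|\operatorname{tr}(A)| = |\langle A, I_m\rangle_F| \le \sqrt{m}\,\|A\|_F$. Assumption~\ref{a_6} then bounds this by $\sqrt{m}\,L_1(\mathbf{x})\|\boldsymbol{\theta}_1-\boldsymbol{\theta}_2\|_2$.

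For the second term, use the identity $\|\psi\|^2=\operatorname{tr}(\psi\psi^\top)$. The difference of squared norms is then the trace of $\psi(\mathbf{x};\boldsymbol{\theta}_1)\psi^\top(\mathbf{x};\boldsymbol{\theta}_1)-\psi(\mathbf{x};\boldsymbol{\theta}_2)\psi^\top(\mathbf{x};\boldsymbol{\theta}_2)$. The same trace--Frobenius inequality and Assumption~\ref{a_6} bound it by $\sqrt{m}\,L_2(\mathbf{x})\|\boldsymbol{\theta}_1-\boldsymbol{\theta}_2\|_2$. This step is the reason Assumption~\ref{a_6} is phrased through the outer product $\psi\psi^\top$ rather than through $\psi$ itself. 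Working with $\psi$ directly would require bounding $\|\psi(\mathbf{x};\boldsymbol{\theta}_1)+\psi(\mathbf{x};\boldsymbol{\theta}_2)\|$ uniformly in $\boldsymbol{\theta}$, which we want to avoid.

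Combining the two bounds, we set
\[
L(\mathbf{x}) = \sqrt{m}\,\bigl\{L_1(\mathbf{x})+\tfrac12 L_2(\mathbf{x})\bigr\}.
\]
The second-moment condition follows from $(a+b)^2\le 2a^2+2b^2$, which gives $\mathbb{E}_{p_{\mathbf{x}}}\{L^2(\mathbf{x})\}\le 2m\,\mathbb{E}_{p_{\mathbf{x}}}\{L_1^2(\mathbf{x})\}+\tfrac{m}{2}\,\mathbb{E}_{p_{\mathbf{x}}}\{L_2^2(\mathbf{x})\}<\infty$.

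There is no genuine obstacle here. The only point needing care is the trace bound $|\operatorname{tr}(A)|\le\sqrt{m}\|A\|_F$, which brings in the dimension factor $\sqrt{m}$, together with the reduction of the squared norm to the trace of the outer product.
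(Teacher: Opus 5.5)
Your proposal is correct and follows the paper's proof essentially step for step. It uses the same split of $f$ into the trace term and $\tfrac12\operatorname{tr}(\psi\psi^\top)$, the same bound $|\operatorname{tr}(M)|\le\sqrt{m}\,\|M\|_F$, the same constant $L(\mathbf{x})=\sqrt{m}\{L_1(\mathbf{x})+\tfrac12 L_2(\mathbf{x})\}$, and the same moment bound $2m\,\mathbb{E}_{p_{\mathbf{x}}}\{L_1^2(\mathbf{x})\}+\tfrac{m}{2}\,\mathbb{E}_{p_{\mathbf{x}}}\{L_2^2(\mathbf{x})\}<\infty$.
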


\begin{proof}
    For notational simplicity, define $A(\boldsymbol{\theta})=\nabla_{\mathbf{x}}\psi(\mathbf{x};\boldsymbol{\theta})$ and $B(\boldsymbol{\theta})=\psi(\mathbf{x};\boldsymbol{\theta})\psi(\mathbf{x};\boldsymbol{\theta})^\top$. Then $\|\psi(\mathbf{x};\boldsymbol{\theta})\|_2^2=\operatorname{tr}\{B(\boldsymbol{\theta})\}$. For any $\boldsymbol{\theta}_1,\boldsymbol{\theta}_2\in\Theta$, we have
\begin{align*}
    &\left|f(\mathbf{x};\boldsymbol{\theta}_1) - f(\mathbf{x};\boldsymbol{\theta}_2) \right|  \\
    &\quad = \left|\operatorname{tr}\{A(\boldsymbol{\theta}_1)-A(\boldsymbol{\theta}_2)\} + \frac12 \operatorname{tr}\{B(\boldsymbol{\theta}_1)-B(\boldsymbol{\theta}_2)\} \right|  \\
    &\quad \leq \left| \operatorname{tr}\{A(\boldsymbol{\theta}_1)-A(\boldsymbol{\theta}_2)\} \right| + \frac12 \left|\operatorname{tr}\{B(\boldsymbol{\theta}_1)-B(\boldsymbol{\theta}_2)\} \right|  \\
    &\quad \leq \sqrt{m} \left\|A(\boldsymbol{\theta}_1)-A(\boldsymbol{\theta}_2)\right\|_F + \frac{\sqrt{m}}{2}\left\|B(\boldsymbol{\theta}_1)-B(\boldsymbol{\theta}_2)\right\|_F  \\
    &\quad \leq \sqrt{m} \left\{L_1(\mathbf{x}) + \frac12L_2(\mathbf{x}) \right\} \|\boldsymbol{\theta}_1-\boldsymbol{\theta}_2\|_2,
\end{align*}
where the second inequality uses $|\operatorname{tr}(M)|\leq \sqrt{m}\|M\|_F$ for any $m\times m$ matrix $M$, and the final inequality follows from Assumption~\ref{a_6}. Define $L(\mathbf{x})=\sqrt{m}\{L_1(\mathbf{x})+L_2(\mathbf{x})/2\}$. Then $f(\mathbf{x};\boldsymbol{\theta})$ is Lipschitz continuous in $\boldsymbol{\theta}$ with Lipschitz constant $L(\mathbf{x})$. Moreover,
\begin{align*}
    \mathbb{E}_{p_{\mathbf{x}}}\{L^2(\mathbf{x})\}
    &=m\,\mathbb{E}_{p_{\mathbf{x}}}\left[\left\{L_1(\mathbf{x})+\frac12 L_2(\mathbf{x})\right\}^2\right]  \\
    &\leq 2m\,\mathbb{E}_{p_{\mathbf{x}}}\{L_1^2(\mathbf{x})\}+\frac{m}{2}\,\mathbb{E}_{p_{\mathbf{x}}}\{L_2^2(\mathbf{x})\} \\
    &< \infty,
\end{align*}
where the final inequality follows from Assumption~\ref{a_6}. This completes the proof.
\end{proof}

\begin{lemma}
\label{lemma3}
    Suppose Assumption~\ref{a_6} holds and $\Theta$ is compact, as imposed in Assumption~\ref{s_2}. In addition, assume that $\mathbb{E}_{p_{\mathbf{x}}}\{f^2(\mathbf{x};\boldsymbol{\theta}^*)\}<\infty$. Then
\begin{align*}
    \mathbb{E}\bigg[\sup_{\boldsymbol{\theta}\in\Theta}\big|\widehat{J}(\boldsymbol{\theta};\mathbf{x}_{1:n})-J(\boldsymbol{\theta})\big|\bigg] \leq C\left\{\frac{1}{\sqrt{n}} + \operatorname{diam}(\Theta)\sqrt{\frac{p}{n}}\right\},
\end{align*}
where $C<\infty$ is a constant independent of $n$, $\operatorname{diam}(\Theta)$ denotes the Euclidean diameter of $\Theta$, and $p$ is fixed and denotes the dimension of $\Theta$.
\end{lemma}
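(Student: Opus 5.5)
We split the uniform deviation into a fixed-point term and a centered Lipschitz increment term. Write $f(\mathbf{x};\boldsymbol{\theta})$ as in Lemma~\ref{lemma2}, so that $\widehat{J}(\boldsymbol{\theta};\mathbf{x}_{1:n}) = n^{-1}\sum_{i=1}^n f(\mathbf{x}_i;\boldsymbol{\theta})$ and, by \eqref{sm_2}, $J(\boldsymbol{\theta}) = \mathbb{E}_{p_{\mathbf{x}}} f(\mathbf{x};\boldsymbol{\theta})$, up to the same additive constant, which cancels. Set $G_n(\boldsymbol{\theta}) = n^{-1}\sum_i \{f(\mathbf{x}_i;\boldsymbol{\theta}) - \mathbb{E} f(\mathbf{x};\boldsymbol{\theta})\}$. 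Then
\[
\sup_{\boldsymbol{\theta}\in\Theta}|G_n(\boldsymbol{\theta})| \le |G_n(\boldsymbol{\theta}^*)| + \sup_{\boldsymbol{\theta}\in\Theta}|G_n(\boldsymbol{\theta}) - G_n(\boldsymbol{\theta}^*)|.
\]
The first term is handled by Cauchy--Schwarz: $\mathbb{E}|G_n(\boldsymbol{\theta}^*)| \le \{\operatorname{Var} f(\mathbf{x};\boldsymbol{\theta}^*)/n\}^{1/2} \le \{\mathbb{E} f^2(\mathbf{x};\boldsymbol{\theta}^*)\}^{1/2}/\sqrt{n}$. This gives the $C/\sqrt{n}$ part.

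For the increment term, we use the Lipschitz structure from Lemma~\ref{lemma2}. The process $\boldsymbol{\theta}\mapsto f(\mathbf{x}_i;\boldsymbol{\theta}) - f(\mathbf{x}_i;\boldsymbol{\theta}^*)$ is $L(\mathbf{x}_i)$-Lipschitz. We would symmetrize: with Rademacher signs $\varepsilon_i$,
\[
\mathbb{E}\sup_{\boldsymbol{\theta}}|G_n(\boldsymbol{\theta})-G_n(\boldsymbol{\theta}^*)| \le 2\,\mathbb{E}\sup_{\boldsymbol{\theta}} \Big|n^{-1}\sum_i \varepsilon_i\{f(\mathbf{x}_i;\boldsymbol{\theta}) - f(\mathbf{x}_i;\boldsymbol{\theta}^*)\}\Big|.
\]
Conditionally on the data, the Rademacher process is sub-Gaussian with respect to the random metric $d_n(\boldsymbol{\theta}_1,\boldsymbol{\theta}_2) = \|\boldsymbol{\theta}_1-\boldsymbol{\theta}_2\|_2 \cdot \{n^{-1}\sum_i L^2(\mathbf{x}_i)\}^{1/2}/\sqrt{n}$. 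Dudley's entropy integral, together with the covering bound $N(\Theta,\|\cdot\|_2,u)\le (3\operatorname{diam}(\Theta)/u)^p$, then gives a conditional bound of order
\[
\Big\{n^{-1}\sum_i L^2(\mathbf{x}_i)\Big\}^{1/2} n^{-1/2}\int_0^{\operatorname{diam}(\Theta)}\sqrt{p\log(3\operatorname{diam}(\Theta)/u)}\,du \lesssim \operatorname{diam}(\Theta)\sqrt{p/n}\,\Big\{n^{-1}\sum_i L^2(\mathbf{x}_i)\Big\}^{1/2},
\]
since $\int_0^1\sqrt{\log(3/v)}\,dv$ is a finite absolute constant. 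Taking the outer expectation and applying Jensen's inequality, $\mathbb{E}\{n^{-1}\sum L^2\}^{1/2}\le \{\mathbb{E} L^2(\mathbf{x})\}^{1/2}<\infty$ by Lemma~\ref{lemma2}. This yields the $\operatorname{diam}(\Theta)\sqrt{p/n}$ term, and we collect constants into $C$.

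The main obstacle is making the chaining step clean with a random Lipschitz constant. This is why we condition on the data before applying Dudley and only then integrate the random factor. Measurability of the supremum is a secondary technical point, which we handle via continuity in $\boldsymbol{\theta}$ and separability of $\Theta$.

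As an alternative, if we prefer to avoid symmetrization, the Lipschitz-in-parameter bracketing result of \citet{van2000asymptotic} (Example 19.7) with the maximal inequality for bracketing entropy gives the same rate, with bracketing numbers bounded by $(\operatorname{diam}(\Theta)\|L\|_{L_2}/\varepsilon)^p$.
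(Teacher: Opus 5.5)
Your proposal is correct and follows essentially the same argument as the paper. Both proofs anchor at $\boldsymbol{\theta}^*$ and bound the fixed-point term by Cauchy--Schwarz. Both then treat the increment as a conditionally sub-Gaussian Rademacher process in the random Lipschitz metric $d_n$, apply Dudley's entropy integral with a volumetric covering bound, and remove the random factor by Jensen. The only difference is the order of two steps. The paper symmetrizes the whole empirical process first and then anchors the Rademacher process $Z_n$ at $\boldsymbol{\theta}^*$. You anchor first, so your fixed-point term is a centered average bounded by its variance, and you symmetrize only the increment class.
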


\begin{proof}
    Write $f_{\boldsymbol{\theta}}(\mathbf{x})=f(\mathbf{x};\boldsymbol{\theta})$, $P_nf_{\boldsymbol{\theta}}=n^{-1}\sum_{i=1}^n f_{\boldsymbol{\theta}}(\mathbf{x}_i)$, and $Pf_{\boldsymbol{\theta}}=\mathbb{E}_{p_{\mathbf{x}}}\{f_{\boldsymbol{\theta}}(\mathbf{x})\}$. Since the additive constant in the score matching objective does not depend on $\boldsymbol{\theta}$, it suffices to control $\sup_{\boldsymbol{\theta}\in\Theta}|P_nf_{\boldsymbol{\theta}}-Pf_{\boldsymbol{\theta}}|$. Let $\mathbf{x}_{1:n}'$ be an independent copy of $\mathbf{x}_{1:n}$. By symmetrization,
\begin{align*}
    \mathbb{E}\bigg[\sup_{\boldsymbol{\theta}\in\Theta}|P_nf_{\boldsymbol{\theta}}-Pf_{\boldsymbol{\theta}}|\bigg] &\leq \mathbb{E}\bigg[\sup_{\boldsymbol{\theta}\in\Theta}|P_nf_{\boldsymbol{\theta}}-P_n'f_{\boldsymbol{\theta}}|\bigg]  \\
    &\leq 2\mathbb{E}\bigg[ \sup_{\boldsymbol{\theta}\in\Theta}\bigg|\frac{1}{n}\sum_{i=1}^n \epsilon_i f_{\boldsymbol{\theta}}(\mathbf{x}_i)\bigg|\bigg],
\end{align*}
where $P_n'f_{\boldsymbol{\theta}}=n^{-1}\sum_{i=1}^n f_{\boldsymbol{\theta}}(\mathbf{x}_i')$, and $\epsilon_1,\ldots,\epsilon_n$ are independent Rademacher random variables independent of the data. Define $Z_n(\boldsymbol{\theta})=n^{-1}\sum_{i=1}^n \epsilon_i f_{\boldsymbol{\theta}}(\mathbf{x}_i)$. Then
\begin{align*}
    \mathbb{E}\bigg[\sup_{\boldsymbol{\theta}\in\Theta}|Z_n(\boldsymbol{\theta})|\bigg] &\leq \mathbb{E}\{|Z_n(\boldsymbol{\theta}^*)|\} + \mathbb{E}\bigg[\sup_{\boldsymbol{\theta}\in\Theta}|Z_n(\boldsymbol{\theta})-Z_n(\boldsymbol{\theta}^*)| \bigg].
\end{align*}
For the first term, by Cauchy--Schwarz and the assumption $\mathbb{E}_{p_{\mathbf{x}}}\{f^2(\mathbf{x};\boldsymbol{\theta}^*)\}<\infty$,
\begin{align*}
    \mathbb{E}\{|Z_n(\boldsymbol{\theta}^*)|\} &\leq \left[\mathbb{E}\{Z_n^2(\boldsymbol{\theta}^*)\}\right]^{1/2} = \left[\frac{1}{n}\mathbb{E}_{p_{\mathbf{x}}}\{f^2(\mathbf{x};\boldsymbol{\theta}^*)\}\right]^{1/2} = O(n^{-1/2}).
\end{align*}
It remains to bound the increment process. Conditional on $\mathbf{x}_{1:n}$, for any $\boldsymbol{\theta}_1,\boldsymbol{\theta}_2\in\Theta$, Lemma~\ref{lemma2} gives
\begin{align*}
    \mathbb{E}_{\epsilon}\exp\left[\lambda\{Z_n(\boldsymbol{\theta}_1)-Z_n(\boldsymbol{\theta}_2)\}\right] &\leq \exp\left\{\frac{\lambda^2}{2n^2}\sum_{i=1}^n\left[f_{\boldsymbol{\theta}_1}(\mathbf{x}_i) - f_{\boldsymbol{\theta}_2}(\mathbf{x}_i)\right]^2\right\}  \\
    &\leq \exp\left\{\frac{\lambda^2}{2}d_n^2(\boldsymbol{\theta}_1,\boldsymbol{\theta}_2)\right\},
\end{align*}
where $d_n(\boldsymbol{\theta}_1,\boldsymbol{\theta}_2)=n^{-1/2}\{n^{-1}\sum_{i=1}^n L^2(\mathbf{x}_i)\}^{1/2}\|\boldsymbol{\theta}_1-\boldsymbol{\theta}_2\|_2$. Hence $Z_n(\boldsymbol{\theta})-Z_n(\boldsymbol{\theta}^*)$ is a sub-Gaussian process with respect to $d_n$. For $u>0$, let $N(\Theta,d_n,u)$ denote the $u$-covering number of $\Theta$ under the metric $d_n$; that is, $N(\Theta,d_n,u)$ is the smallest integer $N$ such that there exist points $\boldsymbol{\theta}_1,\ldots,\boldsymbol{\theta}_N\in\Theta$ satisfying $\Theta \subset \bigcup_{k=1}^N \left\{\boldsymbol{\theta}\in\Theta: d_n(\boldsymbol{\theta},\boldsymbol{\theta}_k)\leq u \right\}.$ By Dudley's entropy integral \citep{dudley1967sizes},
\begin{align*}
    \mathbb{E}_{\epsilon}\bigg[\sup_{\boldsymbol{\theta}\in\Theta}|Z_n(\boldsymbol{\theta})-Z_n(\boldsymbol{\theta}^*)|\,\bigg|\,\mathbf{x}_{1:n}\bigg] &\leq C\int_0^{D_n} \sqrt{\log N(\Theta,d_n,u)}\,du,
\end{align*}
where $D_n=n^{-1/2}\{n^{-1}\sum_{i=1}^n L^2(\mathbf{x}_i)\}^{1/2} \operatorname{diam}(\Theta)$. Since $\Theta\subset\mathbb{R}^p$ is compact, $N(\Theta,d_n,u)\leq (1+C D_n/u)^p$ for a universal constant $C>0$. Therefore,
\begin{align*}
    \int_0^{D_n} \sqrt{\log N(\Theta,d_n,u)}\,du &\leq C D_n\sqrt{p}.
\end{align*}
Taking expectation over $\mathbf{x}_{1:n}$ and using Jensen's inequality,
\begin{align*}
    \mathbb{E}\bigg[\sup_{\boldsymbol{\theta}\in\Theta}|Z_n(\boldsymbol{\theta})-Z_n(\boldsymbol{\theta}^*)|\bigg] &\leq C\operatorname{diam}(\Theta)\sqrt{\frac{p}{n}}\, \mathbb{E}\left[\left\{\frac{1}{n}\sum_{i=1}^n L^2(\mathbf{x}_i)\right\}^{1/2}\right]  \\
    &\leq C\operatorname{diam}(\Theta)\sqrt{\frac{p}{n}}\,\left[\mathbb{E}_{p_{\mathbf{x}}}\{L^2(\mathbf{x})\}\right]^{1/2}  \\
    &= O\left(\operatorname{diam}(\Theta)\sqrt{\frac{p}{n}}\right),
\end{align*}
where the final equality follows from Lemma~\ref{lemma2}. Combining the preceding bounds yields
\begin{align*}
    \mathbb{E}\bigg[\sup_{\boldsymbol{\theta}\in\Theta}\big|\widehat{J}(\boldsymbol{\theta};\mathbf{x}_{1:n}) - J(\boldsymbol{\theta})\big|\bigg] &\leq C\left\{\frac{1}{\sqrt{n}} + \operatorname{diam}(\Theta)\sqrt{\frac{p}{n}}\right\}.
\end{align*}
This completes the proof.
\end{proof}

\begin{theorem}
\label{consistency}
    (Consistency) Suppose Assumptions~\ref{a_1}--\ref{a_6} hold, and suppose the observations are i.i.d. with common density $p_{\mathbf{x}}$, as imposed in Assumption~\ref{s_3}(a). Suppose also that $\Theta$ is compact, as imposed in Assumption~\ref{s_2}. In addition, assume that $\mathbb{E}_{p_{\mathbf{x}}}\{f^2(\mathbf{x};\boldsymbol{\theta}^*)\}<\infty$. Let $\widehat{\boldsymbol{\theta}}_n$ be any measurable sample score matching estimator, that is,
    \begin{align*}
        \widehat{\boldsymbol{\theta}}_n \in \argmin_{\boldsymbol{\theta}\in\Theta} \widehat{J}(\boldsymbol{\theta};\mathbf{x}_{1:n}).
    \end{align*}
    Then, as $n\to\infty$,
    \begin{align*}
        \widehat{\boldsymbol{\theta}}_n \convp \boldsymbol{\theta}^*,
    \end{align*}
    where $\boldsymbol{\theta}^*$ is the unique minimizer of the population score matching objective $J(\boldsymbol{\theta})$ over $\Theta$.
\end{theorem}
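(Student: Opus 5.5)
The argument is the standard M-estimation consistency proof (argmin theorem, e.g.\ van der Vaart's Theorem~5.7): uniform convergence of $\widehat{J}$ to $J$, plus a well-separated minimum.

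\textbf{Step 1: uniform convergence.} Under Assumptions~\ref{a_2}--\ref{a_4}, the integration-by-parts identity~\eqref{sm_2} holds, so
\[
J(\boldsymbol{\theta}) = P f_{\boldsymbol{\theta}} + \text{const}, \qquad \widehat{J}(\boldsymbol{\theta};\mathbf{x}_{1:n}) = P_n f_{\boldsymbol{\theta}} + \text{const},
\]
with the same constant in both, and $f$ as in Lemma~\ref{lemma2}. Lemma~\ref{lemma3} then bounds $\mathbb{E}\sup_{\boldsymbol{\theta}\in\Theta}|\widehat{J}-J|$ by $C\{n^{-1/2}+\operatorname{diam}(\Theta)\sqrt{p/n}\}$. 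Since $\Theta$ is compact, this bound tends to zero. Markov's inequality then gives
\[
\sup_{\boldsymbol{\theta}\in\Theta}\big|\widehat{J}(\boldsymbol{\theta};\mathbf{x}_{1:n})-J(\boldsymbol{\theta})\big| \convp 0.
\]

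\textbf{Step 2: well-separated minimum.} Lemma~\ref{lemma2} gives $|f_{\boldsymbol{\theta}_1}(\mathbf{x})-f_{\boldsymbol{\theta}_2}(\mathbf{x})|\le L(\mathbf{x})\|\boldsymbol{\theta}_1-\boldsymbol{\theta}_2\|$. Taking expectations, $J$ is Lipschitz with constant $\mathbb{E}L \le (\mathbb{E}L^2)^{1/2}<\infty$, so $J$ is continuous. Fix $\delta>0$. The set $\{\boldsymbol{\theta}\in\Theta:\|\boldsymbol{\theta}-\boldsymbol{\theta}^*\|\ge\delta\}$ is compact, so $J$ attains its infimum there. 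By Assumption~\ref{a_1}, that minimum exceeds $J(\boldsymbol{\theta}^*)$, so
\[
\eta(\delta):=\inf_{\|\boldsymbol{\theta}-\boldsymbol{\theta}^*\|\ge\delta} J(\boldsymbol{\theta})-J(\boldsymbol{\theta}^*)>0.
\]
(If the set is empty, there is nothing to prove.)

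\textbf{Step 3: conclude.} Suppose $\|\widehat{\boldsymbol{\theta}}_n-\boldsymbol{\theta}^*\|\ge\delta$. Then
\[
\eta(\delta)\le J(\widehat{\boldsymbol{\theta}}_n)-J(\boldsymbol{\theta}^*) = \big[J(\widehat{\boldsymbol{\theta}}_n)-\widehat{J}(\widehat{\boldsymbol{\theta}}_n)\big] + \big[\widehat{J}(\widehat{\boldsymbol{\theta}}_n)-\widehat{J}(\boldsymbol{\theta}^*)\big] + \big[\widehat{J}(\boldsymbol{\theta}^*)-J(\boldsymbol{\theta}^*)\big].
\]
The middle bracket is $\le 0$ because $\widehat{\boldsymbol{\theta}}_n$ minimizes $\widehat{J}$. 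Each outer bracket is at most $\sup|\widehat{J}-J|$, so
\[
\eta(\delta)\le 2\sup_{\boldsymbol{\theta}\in\Theta}|\widehat{J}-J|.
\]
By Step 1, the probability of this event tends to zero, which yields $\widehat{\boldsymbol{\theta}}_n\convp\boldsymbol{\theta}^*$.

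\textbf{Main obstacle.} The delicate point is Step 1: we must check that the sample objective~\eqref{sm_sampleversion} and the population form~\eqref{sm_2} differ from $J$ by the same $\boldsymbol{\theta}$-free constant. This is exactly where Assumptions~\ref{a_2}--\ref{a_4} enter, to justify Hyv\"arinen's integration by parts: the boundary terms $p_{\mathbf{x}}\psi$ vanish at infinity, and the relevant integrals are finite. I would verify this componentwise using Fubini before applying Lemma~\ref{lemma3}.
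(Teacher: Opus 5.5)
Your proposal is correct and follows essentially the same route as the paper. Both proofs bound $J(\widehat{\boldsymbol{\theta}}_n)-J(\boldsymbol{\theta}^*)$ by $2\sup_{\boldsymbol{\theta}\in\Theta}|\widehat{J}-J|$ through the same three-term decomposition, control that supremum with Lemma~\ref{lemma3} and Markov's inequality, and obtain a strictly positive separation gap from the Lipschitz continuity of $J$ (Lemma~\ref{lemma2}), compactness, and Assumption~\ref{a_1}. The only difference is that you explicitly check that $\widehat{J}$ and $J$ share the same $\boldsymbol{\theta}$-free constant through the identity~\eqref{sm_2}, a point the paper takes for granted.
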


\begin{proof}
    For notational simplicity, write $\widehat{J}_n(\boldsymbol{\theta})=\widehat{J}(\boldsymbol{\theta};\mathbf{x}_{1:n})$. Since $\boldsymbol{\theta}^*$ minimizes $J$ and $\widehat{\boldsymbol{\theta}}_n$ minimizes $\widehat{J}_n$, we have
\begin{align*}
    0 \leq J(\widehat{\boldsymbol{\theta}}_n)-J(\boldsymbol{\theta}^*) &= \{J(\widehat{\boldsymbol{\theta}}_n)-\widehat{J}_n(\widehat{\boldsymbol{\theta}}_n)\} + \{\widehat{J}_n(\widehat{\boldsymbol{\theta}}_n)-\widehat{J}_n(\boldsymbol{\theta}^*)\} + \{\widehat{J}_n(\boldsymbol{\theta}^*)-J(\boldsymbol{\theta}^*)\}  \\
    &\leq 2\sup_{\boldsymbol{\theta}\in\Theta}\left|\widehat{J}_n(\boldsymbol{\theta})-J(\boldsymbol{\theta})\right| .
\end{align*}
Therefore, for every $\eta>0$, Lemma~\ref{lemma3} and Markov's inequality imply
\begin{align*}
    P\left(J(\widehat{\boldsymbol{\theta}}_n)-J(\boldsymbol{\theta}^*)>\eta \right) &\leq P\left(2\sup_{\boldsymbol{\theta}\in\Theta}\left|\widehat{J}_n(\boldsymbol{\theta})-J(\boldsymbol{\theta})\right|>\eta \right)  \\
    &\leq \frac{2}{\eta} \mathbb{E}\left[\sup_{\boldsymbol{\theta}\in\Theta}\left|\widehat{J}_n(\boldsymbol{\theta})-J(\boldsymbol{\theta})\right|\right] \\
    &\to 0 .
\end{align*}
Hence $J(\widehat{\boldsymbol{\theta}}_n) \convp J(\boldsymbol{\theta}^*)$. It remains to show that convergence of the objective values implies convergence of the estimators. Recall $f(\mathbf{x};\boldsymbol{\theta})=\operatorname{tr}\{\nabla_{\mathbf{x}}\psi(\mathbf{x};\boldsymbol{\theta})\} + 2^{-1}\|\psi(\mathbf{x};\boldsymbol{\theta})\|_2^2$. Since the additive constant in $J(\boldsymbol{\theta})$ does not depend on $\boldsymbol{\theta}$, Lemma~\ref{lemma2} gives, for any $\boldsymbol{\theta}_1,\boldsymbol{\theta}_2\in\Theta$,
\begin{align*}
    \left|J(\boldsymbol{\theta}_1)-J(\boldsymbol{\theta}_2)\right| &\leq \mathbb{E}_{p_{\mathbf{x}}}\left[\left|f(\mathbf{x};\boldsymbol{\theta}_1) - f(\mathbf{x};\boldsymbol{\theta}_2)\right|\right]  \\
    &\leq \mathbb{E}_{p_{\mathbf{x}}}\{L(\mathbf{x})\} \|\boldsymbol{\theta}_1-\boldsymbol{\theta}_2\|_2 \\
    &<\infty .
\end{align*}
It follows that $J$ is Lipschitz continuous on $\Theta$. Indeed, since $\mathbb{E}_{p_{\mathbf{x}}}\{L(\mathbf{x})\} \leq \{\mathbb{E}_{p_{\mathbf{x}}}\{L^2(\mathbf{x})\}\}^{1/2}<\infty$, the preceding inequality gives
\begin{align*}
    \left|J(\boldsymbol{\theta}_1)-J(\boldsymbol{\theta}_2)\right| &\leq C_J \|\boldsymbol{\theta}_1-\boldsymbol{\theta}_2\|_2,
\end{align*}
for every $\boldsymbol{\theta}_1,\boldsymbol{\theta}_2\in\Theta$, where $C_J=\mathbb{E}_{p_{\mathbf{x}}}\{L(\mathbf{x})\}<\infty$. Hence $J$ is continuous on $\Theta$. Fix $\epsilon>0$ and define $\mathcal{S}_{\epsilon}=\{\boldsymbol{\theta}\in\Theta: \|\boldsymbol{\theta}-\boldsymbol{\theta}^*\|_2\geq \epsilon\}$. If $\mathcal{S}_{\epsilon}=\varnothing$, then $P(\|\widehat{\boldsymbol{\theta}}_n-\boldsymbol{\theta}^*\|_2\geq\epsilon)=0$. Otherwise, $\mathcal{S}_{\epsilon}$ is compact. Since $J$ is continuous and $\boldsymbol{\theta}^*$ is the unique minimizer of $J$, we have
\begin{align*}
    c_{\epsilon} := \inf_{\boldsymbol{\theta}\in\mathcal{S}_{\epsilon}} \left\{J(\boldsymbol{\theta})-J(\boldsymbol{\theta}^*)\right\} >0 .
\end{align*}
Consequently,
\begin{align*}
    P\left(\|\widehat{\boldsymbol{\theta}}_n-\boldsymbol{\theta}^*\|_2\geq\epsilon \right) &\leq P\left(J(\widehat{\boldsymbol{\theta}}_n)-J(\boldsymbol{\theta}^*)\geq c_{\epsilon}\right) \to 0 .
\end{align*}
Since $\epsilon>0$ is arbitrary, $\widehat{\boldsymbol{\theta}}_n\convp\boldsymbol{\theta}^*$.
\end{proof}

\subsection{Asymptotic normality}
\label{app:sm_asymptotic}
\begin{theorem}
\label{asymptotic_normality}
    (Asymptotic normality) Suppose the conditions of Theorem~\ref{consistency} hold. In addition, suppose main-text Assumptions~\ref{s_2}--\ref{s_4} hold. Let $\mathbf{g}(\mathbf{x},\boldsymbol{\theta}) = \nabla_{\boldsymbol{\theta}}[\operatorname{tr}\{\nabla_{\mathbf{x}}\psi(\mathbf{x};\boldsymbol{\theta})\} +2^{-1}\|\psi(\mathbf{x};\boldsymbol{\theta})\|_2^2]$. Then, as $n\to\infty$,
    \begin{align*}
    \sqrt{n}(\widehat{\boldsymbol{\theta}}_n-\boldsymbol{\theta}^*) \convd N\left(\mathbf{0}, (\Gamma^\top\Delta^{-1}\Gamma)^{-1}\right),
    \end{align*}
where $\Delta$ and $\Gamma$ are defined in Assumption~\ref{s_4}.
\end{theorem}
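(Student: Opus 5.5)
The plan is the standard Z-estimator argument built on the first-order condition of $\widehat{J}$, using Theorem~\ref{consistency} to localize. Write $\mathbf{g}_n(\boldsymbol{\theta}) = n^{-1}\sum_{i=1}^n \mathbf{g}(\mathbf{x}_i,\boldsymbol{\theta})$, which is exactly $\nabla_{\boldsymbol{\theta}}\widehat{J}(\boldsymbol{\theta};\mathbf{x}_{1:n})$ where the derivative exists.

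\textbf{Step 1 (first-order condition).} By Theorem~\ref{consistency}, $\widehat{\boldsymbol{\theta}}_n \convp \boldsymbol{\theta}^*$. Assumption~\ref{s_2} places $\boldsymbol{\theta}^*$ in $\operatorname{int}(\Theta)$, so with probability tending to one $\widehat{\boldsymbol{\theta}}_n$ lies in a closed ball $\bar B\subset\mathcal{N}\cap\operatorname{int}(\Theta)$. By Assumption~\ref{s_3}(c), $\widehat{J}$ is differentiable there, and since $\widehat{\boldsymbol{\theta}}_n$ is an interior minimizer, $\mathbf{g}_n(\widehat{\boldsymbol{\theta}}_n)=\mathbf{0}$ on that event. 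Events of vanishing probability do not affect convergence in distribution, so we may work on this event.

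\textbf{Step 2 (mean-value expansion).} Expand each coordinate $k$ of $\mathbf{g}_n$ around $\boldsymbol{\theta}^*$:
\[
\mathbf{0} = \mathbf{g}_n(\boldsymbol{\theta}^*) + \widetilde{\Gamma}_n(\widehat{\boldsymbol{\theta}}_n-\boldsymbol{\theta}^*).
\]
Here the $k$-th row of $\widetilde{\Gamma}_n$ is $n^{-1}\sum_i \partial g_k(\mathbf{x}_i,\bar{\boldsymbol{\theta}}_k)/\partial\boldsymbol{\theta}^\top$, with $\bar{\boldsymbol{\theta}}_k$ on the segment between $\widehat{\boldsymbol{\theta}}_n$ and $\boldsymbol{\theta}^*$. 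Since the $\bar{\boldsymbol{\theta}}_k$ differ across rows, I will argue row by row rather than with a single matrix mean value theorem.

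\textbf{Step 3 (Jacobian convergence).} I need $\widetilde{\Gamma}_n\convp\Gamma$, and I expect this to be the main technical step. I will invoke a uniform law of large numbers for $\partial\mathbf{g}(\mathbf{x},\boldsymbol{\theta})/\partial\boldsymbol{\theta}^\top$ over the compact ball $\bar B$, e.g.\ Lemma 2.4 of Newey and McFadden. Its hypotheses are continuity in $\boldsymbol{\theta}$ (Assumption~\ref{s_3}(c)) and an integrable envelope (Assumption~\ref{s_3}(e)). This gives
\[
\sup_{\bar B}\Bigl\|n^{-1}\textstyle\sum_i \partial\mathbf{g}(\mathbf{x}_i,\boldsymbol{\theta})/\partial\boldsymbol{\theta}^\top - G(\boldsymbol{\theta})\Bigr\|_F\convp 0,
\]
where $G(\boldsymbol{\theta})=\mathbb{E}_{p_{\mathbf{x}}}[\partial\mathbf{g}(\mathbf{x},\boldsymbol{\theta})/\partial\boldsymbol{\theta}^\top]$ is continuous by dominated convergence. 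Since each $\bar{\boldsymbol{\theta}}_k\convp\boldsymbol{\theta}^*$, it follows that $\widetilde{\Gamma}_n\convp G(\boldsymbol{\theta}^*)=\Gamma$.

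Because $\Gamma$ is $p\times p$ of rank $p$ (Assumption~\ref{s_4}(b)), it is invertible. Hence $\widetilde{\Gamma}_n$ is invertible with probability tending to one, and $\widetilde{\Gamma}_n^{-1}\convp\Gamma^{-1}$.

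\textbf{Step 4 (CLT and Slutsky).} The summands $\mathbf{g}(\mathbf{x}_i,\boldsymbol{\theta}^*)$ are i.i.d. with mean $\mathbf{0}$ (Assumption~\ref{s_2}) and finite second moments (Assumption~\ref{s_3}(d) with $\alpha>2$). Their covariance is $\Delta$, so the multivariate CLT gives
\[
\sqrt{n}\,\mathbf{g}_n(\boldsymbol{\theta}^*)\convd N(\mathbf{0},\Delta).
\]
Solving the expansion from Step 2 gives
\[
\sqrt{n}(\widehat{\boldsymbol{\theta}}_n-\boldsymbol{\theta}^*)=-\widetilde{\Gamma}_n^{-1}\sqrt{n}\,\mathbf{g}_n(\boldsymbol{\theta}^*),
\]
so by Slutsky's lemma the limit is $N(\mathbf{0},\Gamma^{-1}\Delta\Gamma^{-\top})$.

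Finally, for square invertible $\Gamma$ we have $\Gamma^{-1}\Delta\Gamma^{-\top}=(\Gamma^\top\Delta^{-1}\Gamma)^{-1}$, using that $\Delta$ is nonsingular (Assumption~\ref{s_4}(a)). This identity needs no symmetry of $\Gamma$, although $\Gamma$ is in fact the Hessian of $J$ at $\boldsymbol{\theta}^*$. This gives the stated covariance.
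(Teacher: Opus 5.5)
Your proposal is correct and follows essentially the same route as the paper: localize via consistency and the interior first-order condition, linearize $\mathbf{g}_n$ around $\boldsymbol{\theta}^*$, use a uniform law of large numbers to get the Jacobian to converge to $\Gamma$, then finish with the CLT, Slutsky, and the identity $\Gamma^{-1}\Delta\Gamma^{-\top}=(\Gamma^\top\Delta^{-1}\Gamma)^{-1}$. The only difference is cosmetic: you use a row-by-row mean-value expansion, whereas the paper uses the integral form of Taylor's theorem, which packages the differing intermediate points into a single matrix $\widehat{\Gamma}_n$.
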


\begin{proof}
    Write $\widehat{J}_n(\boldsymbol{\theta})=\widehat{J}(\boldsymbol{\theta};\mathbf{x}_{1:n})$ and $G(\mathbf{x},\boldsymbol{\theta})=\partial\mathbf{g}(\mathbf{x},\boldsymbol{\theta})/\partial\boldsymbol{\theta}^{\top}$. Since $\boldsymbol{\theta}^*\in\operatorname{int}(\Theta)$ by Assumption~\ref{s_2} and $\widehat{\boldsymbol{\theta}}_n\convp\boldsymbol{\theta}^*$ by Theorem~\ref{consistency}, we have $P\big(\widehat{\boldsymbol{\theta}}_n\in\operatorname{int}(\Theta)\big)\to 1$. Therefore, with probability tending to one, the empirical first-order condition gives $n^{-1}\sum_{i=1}^n\mathbf{g}(\mathbf{x}_i,\widehat{\boldsymbol{\theta}}_n)=\mathbf{0}$. Moreover, by Assumption~\ref{s_2}, $\mathbb{E}_{p_{\mathbf{x}}}\{\mathbf{g}(\mathbf{x},\boldsymbol{\theta}^*)\}=\mathbf{0}$. Using the integral form of Taylor's expansion around $\boldsymbol{\theta}^*$,
    \begin{align*}
    \mathbf{0} &= \frac{1}{n}\sum_{i=1}^n \mathbf{g}(\mathbf{x}_i,\boldsymbol{\theta}^*) + \widehat{\Gamma}_n(\widehat{\boldsymbol{\theta}}_n-\boldsymbol{\theta}^*),
    \end{align*}
    where $\widehat{\Gamma}_n=n^{-1}\sum_{i=1}^n\int_0^1G\{\mathbf{x}_i,\boldsymbol{\theta}^*+t(\widehat{\boldsymbol{\theta}}_n-\boldsymbol{\theta}^*)\}\,dt$. By Assumption~\ref{s_3}, the uniform law of large numbers in a neighborhood of $\boldsymbol{\theta}^*$, together with $\widehat{\boldsymbol{\theta}}_n\convp\boldsymbol{\theta}^*$, gives $\widehat{\Gamma}_n\convp\Gamma$. Since $\operatorname{rank}(\Gamma)=p$, we also have $\widehat{\Gamma}_n^{-1}\convp\Gamma^{-1}$. Therefore,
    \begin{align*}
    \sqrt{n}(\widehat{\boldsymbol{\theta}}_n-\boldsymbol{\theta}^*) &= -\widehat{\Gamma}_n^{-1}\left\{\frac{1}{\sqrt{n}}\sum_{i=1}^n\mathbf{g}(\mathbf{x}_i,\boldsymbol{\theta}^*)\right\}  \\
    &= -\Gamma^{-1}\left\{\frac{1}{\sqrt{n}}\sum_{i=1}^n\mathbf{g}(\mathbf{x}_i,\boldsymbol{\theta}^*)\right\} + o_p(1).
    \end{align*}
    By Assumption~\ref{s_2}, $\mathbb{E}_{p_{\mathbf{x}}}\{\mathbf{g}(\mathbf{x},\boldsymbol{\theta}^*)\} = \mathbf{0}$. Furthermore, Assumption~\ref{s_3}(a) ensures that $\mathbf{x}_1,\ldots,\mathbf{x}_n$ are i.i.d., and Assumption~\ref{s_4}(a) ensures that $\Delta = \mathbb{E}_{p_{\mathbf{x}}}\{\mathbf{g}(\mathbf{x},\boldsymbol{\theta}^*)\mathbf{g}(\mathbf{x},\boldsymbol{\theta}^*)^\top\}$ is nonsingular. Therefore, the multivariate central limit theorem gives
    \begin{align*}
    \frac{1}{\sqrt{n}}\sum_{i=1}^n\mathbf{g}(\mathbf{x}_i,\boldsymbol{\theta}^*) \convd N(\mathbf{0},\Delta).
    \end{align*}
    Slutsky's theorem then implies
    \begin{align*}
    \sqrt{n} (\widehat{\boldsymbol{\theta}}_n-\boldsymbol{\theta}^*) \convd N\left(\mathbf{0}, \boldsymbol{\Gamma}^{-1}\Delta(\boldsymbol{\Gamma}^{-1})^\top\right).
    \end{align*}
Finally, since $\Gamma$ and $\Delta$ are nonsingular, $\boldsymbol{\Gamma}^{-1}\Delta(\boldsymbol{\Gamma}^{-1})^\top = (\boldsymbol{\Gamma}^{\top}\Delta^{-1}\boldsymbol{\Gamma})^{-1}$. This completes the proof.
\end{proof}

\begin{remark}
    The matrix $(\Gamma^{\top}\Delta^{-1}\Gamma)^{-1}$ is sometimes called inverse Godambe information \citep{godambe1991estimating} and it is a general asymptotic covariance form for solution of unbiased estimating equations.
\end{remark}

\subsection{Proof of Theorem~\ref{bvm}}
\label{app:bvm}

The proof follows the general Bernstein–von Mises argument for Bayesian ETEL posterior developed by \citet{chib2018bayesian}, adapted to the present SME-BETEL setting. Their formulation allows for an augmented parameter vector defined on a product space, which is useful for handling potentially invalid moment restrictions. In contrast, the SME-BETEL posterior \eqref{SME-BETEL_posterior} considered here is defined directly on the parameter space $\Theta$, since the moment function is generated from the first-order condition of the score matching objective. We therefore specialize the argument to this single parameter space setting and write the proof in the notation of the present paper.

Throughout this subsection, define $\boldsymbol{V}:=\Gamma^{\top}\Delta^{-1}\Gamma$ and $\mathbf{h}=\sqrt{n}(\boldsymbol{\theta}-\widehat{\boldsymbol{\theta}}_n)$. We also write
\[
M_n(\boldsymbol{\theta})
=
\sum_{i=1}^n
\ell_{n,\boldsymbol{\theta}}(\mathbf{x}_i),
\qquad
S_n(\boldsymbol{\theta})
=
\frac{\partial M_n(\boldsymbol{\theta})}
{\partial\boldsymbol{\theta}},
\qquad
H_n(\boldsymbol{\theta})
=
\frac{\partial^2 M_n(\boldsymbol{\theta})}
{\partial\boldsymbol{\theta}\partial\boldsymbol{\theta}^{\top}},
\]
and denote the local ETEL likelihood ratio by
\[
R_n(\mathbf{h})
=
\frac{
L(\mathbf{x}_{1:n}\mid \widehat{\boldsymbol{\theta}}_n+\mathbf{h}/\sqrt n)
}{
L(\mathbf{x}_{1:n}\mid \widehat{\boldsymbol{\theta}}_n)
}.
\]

Before proving the local expansion, we record a simple localization fact. Since
$\boldsymbol{\theta}^*\in \operatorname{int}(\Theta)$ and
$\widehat{\boldsymbol{\theta}}_n\convp \boldsymbol{\theta}^*$, it follows that
$\widehat{\boldsymbol{\theta}}_n\in \operatorname{int}(\Theta)$ with probability tending to one.
Therefore, by the first-order condition for the score matching objective,
\[
\frac1n\sum_{i=1}^n
\mathbf g(\mathbf{x}_i,\widehat{\boldsymbol{\theta}}_n)
=
\mathbf 0
\]
with probability tending to one. Moreover, for any fixed $c>0$, setting
$r_n=c\log\sqrt n$,
\[
\sup_{\|\mathbf h\|\le r_n}\sup_{t\in[0,1]}
\left\|
\widehat{\boldsymbol{\theta}}_n+\frac{t\mathbf h}{\sqrt n}
-\boldsymbol{\theta}^*
\right\|
\le
\|\widehat{\boldsymbol{\theta}}_n-\boldsymbol{\theta}^*\|
+
\frac{r_n}{\sqrt n}
=o_p(1).
\]
Hence, with probability tending to one, all local points considered below lie in a neighborhood
of $\boldsymbol{\theta}^*$ on which the ETEL expansions are valid.

\begin{lemma}
\label{lem:bvm_local_expansion}
Let Assumptions~\ref{s_2}--\ref{s_4} hold. For any local value $\mathbf h$,
\begin{equation}
\label{eq:bvm_local_expansion}
\log R_n(\mathbf h)
=
-\frac12 \mathbf h^\top \boldsymbol V\mathbf h
+
O_p\left\{
\frac{\|\mathbf h\|+\|\mathbf h\|^2+\|\mathbf h\|^3}{\sqrt n}
\right\}.
\end{equation}
Consequently, for fixed $c>0$,
\begin{equation}
\label{eq:bvm_uniform_local_expansion}
\sup_{\|\mathbf{h}\|\leq c\log \sqrt{n}}
\left|
\log R_n(\mathbf{h})
+\frac{1}{2}\mathbf{h}^{\top}\boldsymbol{V}\mathbf{h}
\right|
=o_p(1).
\end{equation}
\end{lemma}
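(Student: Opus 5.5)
We will prove the expansion by a third-order Taylor expansion of $M_n(\boldsymbol\theta)=\log L(\boldsymbol\theta)$ around $\widehat{\boldsymbol\theta}_n$:
\[
\log R_n(\mathbf h)=\frac{1}{\sqrt n}S_n(\widehat{\boldsymbol\theta}_n)^\top\mathbf h+\frac{1}{2n}\mathbf h^\top H_n(\widehat{\boldsymbol\theta}_n)\mathbf h+\text{(cubic remainder)} .
\]
Three facts are needed, namely $n^{-1/2}S_n(\widehat{\boldsymbol\theta}_n)=O_p(n^{-1/2})$, $n^{-1}H_n(\widehat{\boldsymbol\theta}_n)=-\boldsymbol V+O_p(n^{-1/2})$, and a third-derivative bound $n^{-1}\sup_{\boldsymbol\theta\in\mathcal N'}\|\partial^3 M_n\|=O_p(1)$ on a shrinking neighbourhood $\mathcal N'$. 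Together these give the error $O_p\{(\|\mathbf h\|+\|\mathbf h\|^2+\|\mathbf h\|^3)/\sqrt n\}$.

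\emph{Score step.} We write $M_n(\boldsymbol\theta)=\sum_i\widehat{\boldsymbol\lambda}(\boldsymbol\theta)^\top\mathbf g_i(\boldsymbol\theta)-n\log\sum_j\exp\{\widehat{\boldsymbol\lambda}(\boldsymbol\theta)^\top\mathbf g_j(\boldsymbol\theta)\}$, where $\mathbf g_i(\boldsymbol\theta)=\mathbf g(\mathbf x_i,\boldsymbol\theta)$. On the event (probability $\to1$, by the localization fact) that $n^{-1}\sum_i\mathbf g_i(\widehat{\boldsymbol\theta}_n)=\mathbf 0$, the strictly convex problem defining $\widehat{\boldsymbol\lambda}$ has solution $\widehat{\boldsymbol\lambda}(\widehat{\boldsymbol\theta}_n)=\mathbf 0$, so the weights are uniform. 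We then differentiate via the implicit function theorem applied to $\sum_i e^{\boldsymbol\lambda^\top\mathbf g_i}\mathbf g_i=\mathbf 0$ (the envelope theorem kills the $\partial\widehat{\boldsymbol\lambda}$ term in the second piece at the optimum). This gives
\[
\frac{\partial\widehat{\boldsymbol\lambda}}{\partial\boldsymbol\theta^\top}=-\widehat\Delta^{-1}\widehat\Gamma
\]
at $\widehat{\boldsymbol\theta}_n$, with $\widehat\Delta=n^{-1}\sum_i\mathbf g_i\mathbf g_i^\top$ and $\widehat\Gamma=n^{-1}\sum_i\partial\mathbf g_i/\partial\boldsymbol\theta^\top$. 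A direct computation then gives $S_n(\widehat{\boldsymbol\theta}_n)=n\,(\partial\widehat{\boldsymbol\lambda})^\top\bar{\mathbf g}(\widehat{\boldsymbol\theta}_n)+\mathbf 0=\mathbf 0$. So the linear term actually vanishes with probability tending to one, which is better than needed.

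\emph{Hessian step.} Differentiating once more at $\widehat{\boldsymbol\lambda}=\mathbf 0$, the leading term is $n^{-1}H_n=-\widehat\Gamma^\top\widehat\Delta^{-1}\widehat\Gamma$ plus terms multiplied by $\bar{\mathbf g}(\widehat{\boldsymbol\theta}_n)=\mathbf 0$ or by $\widehat{\boldsymbol\lambda}=\mathbf 0$. By Assumption~\ref{s_3}(d),(e) and the uniform LLN near $\boldsymbol\theta^*$, together with $\widehat{\boldsymbol\theta}_n\convp\boldsymbol\theta^*$ (Theorem~\ref{consistency}), we get $\widehat\Delta\to\Delta$ and $\widehat\Gamma\to\Gamma$. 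The $O_p(n^{-1/2})$ rate follows from the CLT for $\widehat\Delta(\boldsymbol\theta^*)$ and $\widehat\Gamma(\boldsymbol\theta^*)$ plus $\widehat{\boldsymbol\theta}_n-\boldsymbol\theta^*=O_p(n^{-1/2})$ (Theorem~\ref{asymptotic_normality}), using smoothness of $\mathbf g$ as in \citet{chib2018bayesian}. Nonsingularity of $\boldsymbol V$ follows from Assumption~\ref{s_4}.

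\emph{Remainder step (main obstacle).} We need uniform control of third derivatives of $M_n$ over $\|\boldsymbol\theta-\widehat{\boldsymbol\theta}_n\|\le r_n/\sqrt n$. This requires showing that $\widehat{\boldsymbol\lambda}(\boldsymbol\theta)=O_p(\|\boldsymbol\theta-\widehat{\boldsymbol\theta}_n\|+n^{-1/2})$ uniformly there, so that $\max_i|\widehat{\boldsymbol\lambda}^\top\mathbf g_i|\le\|\widehat{\boldsymbol\lambda}\|\max_i\|\mathbf g_i\|=O_p(r_n n^{-1/2}\cdot n^{1/\alpha})=o_p(1)$, using $\alpha>2$ in Assumption~\ref{s_3}(d). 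The exponential weights are then uniformly bounded and close to $1/n$, and the third derivatives become averages of products of $\mathbf g$ and its derivatives. Here I would follow \citet{newey2004higher} and \citet{schennach2007point}, which provide exactly these bounds, rather than redo them (strictly, they require further smoothness of $\mathbf g$, which I would invoke as implicit in Assumption~\ref{s_3}).

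Finally, the uniform statement \eqref{eq:bvm_uniform_local_expansion} follows because the $O_p$ constant is uniform in $\|\mathbf h\|\le r_n$. The error is therefore $O_p\{(\log n)^3/\sqrt n\}=o_p(1)$.
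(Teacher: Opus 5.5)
Your proposal is correct and follows essentially the same route as the paper: a Taylor expansion of $M_n$ about $\widehat{\boldsymbol\theta}_n$, the score killed by $\widehat{\boldsymbol\lambda}(\widehat{\boldsymbol\theta}_n)=\mathbf 0$, the Hessian identified as $-\Gamma^\top\Delta^{-1}\Gamma$ by implicit differentiation of $\widehat{\boldsymbol\lambda}$, and Newey--Smith/Schennach-type ETEL bounds for uniformity over $\|\mathbf h\|\le c\log\sqrt n$. The only difference is bookkeeping: you use the Hessian at $\widehat{\boldsymbol\theta}_n$ plus a cubic remainder, while the paper uses the mean-value Hessian at an intermediate point $\boldsymbol\theta_t$ and controls its deviation from $-\boldsymbol V$ along the path; as you note, both arguments tacitly need more smoothness and moments of $\mathbf g$ than Assumption~\ref{s_3} literally provides.
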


\begin{proof}
For a fixed $\mathbf h$, let $\boldsymbol{\theta}_t = \widehat{\boldsymbol{\theta}}_n+t\mathbf h/\sqrt n$, $t\in[0,1]$. By the localization fact above, $\boldsymbol{\theta}_t$ lies in a neighborhood of $\boldsymbol{\theta}^*$ with probability tending to one, uniformly over $t\in[0,1]$ and $\|\mathbf h\|\le c\log\sqrt n$. A Taylor expansion of $M_n(\boldsymbol{\theta})$ around
$\widehat{\boldsymbol{\theta}}_n$ gives
\begin{align}
\label{eq:bvm_taylor}
\log R_n(\mathbf h)
&=
\frac{1}{\sqrt n}
S_n(\widehat{\boldsymbol{\theta}}_n)^\top\mathbf h
+
\frac{1}{2n}
\mathbf h^\top
H_n(\boldsymbol{\theta}_t)
\mathbf h .
\end{align}
We next control the two terms on the right-hand side of
\eqref{eq:bvm_taylor}. Since $\widehat{\boldsymbol{\theta}}_n$ satisfies
\[
\frac1n\sum_{i=1}^n
\mathbf g(\mathbf{x}_i,\widehat{\boldsymbol{\theta}}_n)
=
\mathbf 0
\]
with probability tending to one, the standard ETEL multiplier expansion \citep{newey2004higher, chib2018bayesian} gives $\widehat{\boldsymbol{\lambda}}(\widehat{\boldsymbol{\theta}}_n)
= O_p(n^{-1/2})$. Under the exact empirical moment equation and local uniqueness of the ETEL multiplier, one may in fact take
$\widehat{\boldsymbol{\lambda}}(\widehat{\boldsymbol{\theta}}_n)=\mathbf 0$ with probability tending to one. In either case, the local ETEL derivative expansion implies
\begin{align}
\label{eq:bvm_linear_term}
\frac{1}{\sqrt n}
S_n(\widehat{\boldsymbol{\theta}}_n)^\top\mathbf h
&=
O_p(n^{-1/2}\|\mathbf h\|).
\end{align}

For the quadratic term, applying the same ETEL expansion along the local path $\boldsymbol{\theta}_t$ yields, uniformly over $\|\mathbf h\|\le c\log\sqrt n$,
\[
\widehat{\boldsymbol{\lambda}}(\boldsymbol{\theta}_t)
=
O_p\left(\frac{1+\|\mathbf h\|}{\sqrt n}\right),
\]
\[
\frac{d\widehat{\boldsymbol{\lambda}}(\boldsymbol{\theta}_t)}
{d\boldsymbol{\theta}^{\top}}
=
-\Delta^{-1}\Gamma
+
O_p\left(\frac{1+\|\mathbf h\|}{\sqrt n}\right),
\]
and
\[
\frac1n\sum_{i=1}^n
\mathbf g(\mathbf{x}_i,\boldsymbol{\theta}_t)
\mathbf g(\mathbf{x}_i,\boldsymbol{\theta}_t)^\top
=
\Delta
+
O_p\left(\frac{1+\|\mathbf h\|}{\sqrt n}\right).
\]
Therefore, the leading contribution to the normalized Hessian is
\[
-
\frac{d\widehat{\boldsymbol{\lambda}}(\boldsymbol{\theta}_t)^\top}
{d\boldsymbol{\theta}}
\Delta
\frac{d\widehat{\boldsymbol{\lambda}}(\boldsymbol{\theta}_t)}
{d\boldsymbol{\theta}^{\top}}
=
-\Gamma^\top\Delta^{-1}\Gamma
+
O_p\left(\frac{1+\|\mathbf h\|}{\sqrt n}\right).
\]
The remaining derivative terms in the Hessian contain either
$\widehat{\boldsymbol{\lambda}}(\boldsymbol{\theta}_t)$, centered empirical averages, or higher-order products of the local multiplier and are therefore of order $O_p((1+\|\mathbf h\|)/\sqrt n)$ after normalization. Hence,
\begin{align}
\label{eq:bvm_hessian_term}
\frac1n
\mathbf h^\top H_n(\boldsymbol{\theta}_t)\mathbf h
&=
-\mathbf h^\top\boldsymbol V\mathbf h
+
O_p\left(
\frac{(1+\|\mathbf h\|)\|\mathbf h\|^2}{\sqrt n}
\right),
\end{align}
where $\boldsymbol V=\Gamma^\top\Delta^{-1}\Gamma$. Substituting \eqref{eq:bvm_linear_term} and \eqref{eq:bvm_hessian_term} into \eqref{eq:bvm_taylor} gives
\[
\log R_n(\mathbf h)
=
-\frac12\mathbf h^\top\boldsymbol V\mathbf h
+
O_p\left(
\frac{\|\mathbf h\|}{\sqrt n}
+
\frac{(1+\|\mathbf h\|)\|\mathbf h\|^2}{\sqrt n}
\right),
\]
which proves \eqref{eq:bvm_local_expansion}.

Finally, the preceding expansion holds uniformly for
$\|\mathbf h\|\le c\log\sqrt n$. Therefore,
\[
\sup_{\|\mathbf h\|\le c\log\sqrt n}
\left|
\log R_n(\mathbf h)
+
\frac12\mathbf h^\top\boldsymbol V\mathbf h
\right|
=
O_p\left[
\frac{\log\sqrt n}{\sqrt n}
+
\frac{(\log\sqrt n)^2}{\sqrt n}
+
\frac{(\log\sqrt n)^3}{\sqrt n}
\right]
=o_p(1).
\]
This proves \eqref{eq:bvm_uniform_local_expansion}.
\end{proof}

\begin{lemma}
\label{lem:bvm_domain_decomposition}
Under Assumptions~\ref{s_1}--\ref{bvm_condition},
\begin{align}
\label{eq:bvm_l1_kernel}
\int
\Bigg|
\pi\left(\widehat{\boldsymbol{\theta}}_n+\frac{\mathbf{h}}{\sqrt n}\right)
R_n(\mathbf{h})
-
\pi(\boldsymbol{\theta}^*)
\exp\left(-\frac{1}{2}\mathbf{h}^{\top}
\boldsymbol{V}\mathbf{h}\right)
\Bigg|\,d\mathbf{h}
\convp 0 .
\end{align}
Here we use the convention that
\[
\pi\left(\widehat{\boldsymbol{\theta}}_n+\frac{\mathbf h}{\sqrt n}\right)
R_n(\mathbf h)=0
\]
whenever
$\widehat{\boldsymbol{\theta}}_n+\mathbf h/\sqrt n\notin\Theta$.
\end{lemma}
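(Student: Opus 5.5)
The plan is the standard three-region decomposition of the $\mathbf h$-integral used for BvM theorems (as in \citet{chib2018bayesian} and \citet{van2000asymptotic}). Fix $c>0$ (to be taken large) and $\delta>0$ small enough that the closed $\delta$-ball around $\boldsymbol{\theta}^*$ lies in $\mathcal{N}\cap\operatorname{int}(\Theta)$. The integral in \eqref{eq:bvm_l1_kernel} is then split into three parts:
\begin{itemize}
\item $A_1=\{\|\mathbf h\|\le c\log\sqrt n\}$, the local region;
\item $A_2=\{c\log\sqrt n<\|\mathbf h\|\le \delta\sqrt n\}$, the intermediate region;
\item $A_3=\{\|\mathbf h\|>\delta\sqrt n\}$, the far region.
\end{itemize}
The Gaussian term $\pi(\boldsymbol{\theta}^*)\exp(-\tfrac12\mathbf h^\top\boldsymbol V\mathbf h)$ is integrable, and $\boldsymbol V$ is positive definite by Assumption~\ref{s_4}. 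So its integral over $A_2\cup A_3$ tends to zero deterministically, and on those regions it suffices to show that $\int \pi(\widehat{\boldsymbol{\theta}}_n+\mathbf h/\sqrt n)R_n(\mathbf h)\,d\mathbf h\convp 0$.

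\textbf{Local region $A_1$.} Write the integrand difference as
\[
\bigl[\pi(\widehat{\boldsymbol{\theta}}_n+\mathbf h/\sqrt n)-\pi(\boldsymbol{\theta}^*)\bigr]R_n(\mathbf h)
+\pi(\boldsymbol{\theta}^*)\bigl[R_n(\mathbf h)-e^{-\mathbf h^\top\boldsymbol V\mathbf h/2}\bigr].
\]
For the first piece, the localization fact gives $\sup_{A_1}\|\widehat{\boldsymbol{\theta}}_n+\mathbf h/\sqrt n-\boldsymbol{\theta}^*\|=o_p(1)$, so continuity of $\pi$ at $\boldsymbol{\theta}^*$ (Assumption~\ref{s_1}) makes the prior factor uniformly $o_p(1)$. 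By \eqref{eq:bvm_uniform_local_expansion}, $R_n(\mathbf h)\le 2e^{-\mathbf h^\top\boldsymbol V\mathbf h/2}$ on $A_1$ with probability tending to one, so the first piece contributes $o_p(1)$. For the second piece, use $|e^{a}-e^{b}|\le e^{b}e^{|a-b|}|a-b|$ together with \eqref{eq:bvm_uniform_local_expansion} to get
\[
\sup_{A_1}\bigl|R_n(\mathbf h)-e^{-\mathbf h^\top\boldsymbol V\mathbf h/2}\bigr|\big/e^{-\mathbf h^\top\boldsymbol V\mathbf h/2}=o_p(1).
\]
Integrating against the Gaussian kernel then gives $o_p(1)$.

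\textbf{Far region $A_3$.} Here $\|\boldsymbol{\theta}-\widehat{\boldsymbol{\theta}}_n\|>\delta$, so $\|\boldsymbol{\theta}-\boldsymbol{\theta}^*\|>\delta/2$ with probability tending to one. Assumption~\ref{bvm_condition} gives $M_n(\boldsymbol{\theta})-M_n(\boldsymbol{\theta}^*)\le -n\epsilon$ uniformly there. Next, compare $M_n(\widehat{\boldsymbol{\theta}}_n)$ with $M_n(\boldsymbol{\theta}^*)$: writing $\mathbf h^*=\sqrt n(\boldsymbol{\theta}^*-\widehat{\boldsymbol{\theta}}_n)$, which is $O_p(1)$ by Theorem~\ref{asymptotic_normality}, equation \eqref{eq:bvm_local_expansion} gives $M_n(\boldsymbol{\theta}^*)-M_n(\widehat{\boldsymbol{\theta}}_n)=O_p(1)$. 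Hence $\sup_{A_3}R_n(\mathbf h)\le e^{-n\epsilon+O_p(1)}$. Changing variables back to $\boldsymbol{\theta}$, the integral over $A_3$ is at most
\[
n^{p/2}e^{-n\epsilon+O_p(1)}\int_\Theta\pi=o_p(1),
\]
since $\Theta$ is compact and $\pi$ is a proper density.

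\textbf{Intermediate region $A_2$ (the main obstacle).} Here Lemma~\ref{lem:bvm_local_expansion} is only uniform for $\|\mathbf h\|\le c\log\sqrt n$, and the cubic remainder $\|\mathbf h\|^3/\sqrt n$ is not small when $\|\mathbf h\|\asymp\sqrt n$. I would return to the Taylor form \eqref{eq:bvm_taylor}. On the $\delta$-ball, the ETEL multiplier and Hessian expansions behind \eqref{eq:bvm_hessian_term} should give
\[
n^{-1}H_n(\boldsymbol{\theta})=-\boldsymbol V+o_p(1)+O(\delta)
\]
uniformly, using the uniform law of large numbers from Assumption~\ref{s_3}(c),(e) and the continuity of $\Gamma(\boldsymbol{\theta})$ and $\Delta(\boldsymbol{\theta})$. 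Choosing $\delta$ small then yields $n^{-1}H_n\preceq-\tfrac12\boldsymbol V$ on the ball with probability tending to one. Combined with the linear term being $O_p(\|\mathbf h\|/\sqrt n)$, this gives
\[
\log R_n(\mathbf h)\le -\tfrac14\lambda_{\min}(\boldsymbol V)\|\mathbf h\|^2
\]
on $A_2$. The prior is bounded near $\boldsymbol{\theta}^*$ by continuity, so the $A_2$ integral is at most $C\int_{\|\mathbf h\|>c\log\sqrt n}e^{-\kappa\|\mathbf h\|^2}\,d\mathbf h\to0$. The delicate point is justifying this uniform Hessian bound on a fixed ball rather than a shrinking one. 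This requires showing $\sup_{\|\boldsymbol{\theta}-\boldsymbol{\theta}^*\|\le\delta}\|\widehat{\boldsymbol{\lambda}}(\boldsymbol{\theta})\|$ is small, which I would obtain from the uniform convergence of $n^{-1}\sum_i\mathbf g(\mathbf x_i,\boldsymbol{\theta})$ together with the implicit function theorem applied to the ETEL multiplier equation. If that route fails, the fallback is to enlarge the far region inward to $\|\mathbf h\|>\delta_n\sqrt n$ with $\delta_n\to0$ slowly, in which case the Hessian bound is needed only on a shrinking ball.

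Summing the three contributions gives \eqref{eq:bvm_l1_kernel}.
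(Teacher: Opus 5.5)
Your proposal is correct and follows essentially the same route as the paper. It uses the same split at $\|\mathbf h\|=c\log\sqrt n$ and $\|\mathbf h\|=\delta\sqrt n$, the Gaussian envelope from Lemma~\ref{lem:bvm_local_expansion} on the local region, a uniform bound $n^{-1}H_n\preceq-\tfrac12\boldsymbol V$ on a fixed ball for the intermediate region, and Assumption~\ref{bvm_condition} together with $M_n(\widehat{\boldsymbol{\theta}}_n)-M_n(\boldsymbol{\theta}^*)=O_p(1)$ and the $n^{p/2}e^{-n\epsilon}$ change-of-variables bound on the far region. The step you flag as delicate is exactly the one the paper simply asserts; the paper states the Hessian bound on a $2\delta$-ball, which also covers the Taylor path reaching radius $\delta+\|\widehat{\boldsymbol{\theta}}_n-\boldsymbol{\theta}^*\|$, so your sketch via continuity of $\Gamma(\boldsymbol{\theta}),\Delta(\boldsymbol{\theta})$ and smallness of $\widehat{\boldsymbol{\lambda}}$ is, if anything, more explicit than the paper's.
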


\begin{proof}
Fix $\delta>0$ small enough that $\left\{
\boldsymbol{\theta}:\|\boldsymbol{\theta}-\boldsymbol{\theta}^*\|<2\delta
\right\}$ is contained in the neighborhood in Assumption~\ref{s_3}. Since
$\widehat{\boldsymbol{\theta}}_n\convp\boldsymbol{\theta}^*$, it is enough to work on the event $E_n=
\left\{
\|\widehat{\boldsymbol{\theta}}_n-\boldsymbol{\theta}^*\|<\delta/2
\right\}$, whose probability tends to one. Let $r_n=c\log\sqrt n$, where $c>0$ is fixed. We split the domain of integration into
\[
A_{1n}=\{\|\mathbf h\|\le r_n\},\qquad
A_{2n}=\{r_n<\|\mathbf h\|\le \delta\sqrt n\},
\qquad
A_{3n}=\{\|\mathbf h\|>\delta\sqrt n\}.
\]

First consider $A_{1n}$. By Lemma~\ref{lem:bvm_local_expansion},
\[
\sup_{\mathbf h\in A_{1n}}
\left|
\log R_n(\mathbf h)
+
\frac12\mathbf h^\top\boldsymbol V\mathbf h
\right|
=o_p(1).
\]
Moreover,
\[
\sup_{\mathbf h\in A_{1n}}
\left\|
\widehat{\boldsymbol{\theta}}_n+\frac{\mathbf h}{\sqrt n}
-
\boldsymbol{\theta}^*
\right\|
\le
\|\widehat{\boldsymbol{\theta}}_n-\boldsymbol{\theta}^*\|
+
\frac{r_n}{\sqrt n}
=o_p(1).
\]
Therefore, by continuity of the prior at $\boldsymbol{\theta}^*$,
\[
\sup_{\mathbf h\in A_{1n}}
\left|
\pi\left(\widehat{\boldsymbol{\theta}}_n+\frac{\mathbf h}{\sqrt n}\right)
-
\pi(\boldsymbol{\theta}^*)
\right|
=o_p(1).
\]
Since $\boldsymbol V$ is positive definite by Assumption~\ref{s_4}, the preceding uniform expansion implies that for a fixed constant $C$, with probability tending to one,
\[
R_n(\mathbf h)
\le
C
\exp\left\{
-\frac12\mathbf h^\top\boldsymbol V\mathbf h
\right\},
\qquad \mathbf h\in A_{1n}.
\]
Hence the integrand over $A_{1n}$ is controlled by an integrable Gaussian envelope. Combining the uniform convergence of the prior and the uniform local likelihood-ratio expansion gives
\begin{align}
\int_{A_{1n}}
\Bigg|
\pi\left(\widehat{\boldsymbol{\theta}}_n+\frac{\mathbf h}{\sqrt n}\right)
R_n(\mathbf h)
-
\pi(\boldsymbol{\theta}^*)
\exp\left\{-\frac12\mathbf h^\top\boldsymbol V\mathbf h\right\}
\Bigg|\,d\mathbf h
\convp 0 .
\end{align}

Next consider $A_{2n}$. On $E_n$, for any $\mathbf h\in A_{2n}$ and $t\in[0,1]$,
\[
\left\|
\widehat{\boldsymbol{\theta}}_n+\frac{t\mathbf h}{\sqrt n}
-
\boldsymbol{\theta}^*
\right\|
\le
\|\widehat{\boldsymbol{\theta}}_n-\boldsymbol{\theta}^*\|
+
\frac{\|\mathbf h\|}{\sqrt n}
<
\frac{3\delta}{2}.
\]
Thus all points along the Taylor path lie inside the neighborhood on which the local ETEL Hessian expansion is valid. The uniform Hessian expansion gives, with probability tending to one,
\[
\frac1n H_n(\boldsymbol{\theta})
\preceq
-\frac12\boldsymbol V
\]
uniformly for $\|\boldsymbol{\theta}-\boldsymbol{\theta}^*\|<2\delta$. Combining this with the Taylor expansion around
$\widehat{\boldsymbol{\theta}}_n$, and using that the linear term is $O_p(n^{-1/2}\|\mathbf h\|)$, we obtain
\[
\log R_n(\mathbf h)
\le
-\frac14\mathbf h^\top\boldsymbol V\mathbf h,
\qquad \mathbf h\in A_{2n},
\]
with probability tending to one. Here the convention in the lemma makes the bound trivial for points satisfying
$\widehat{\boldsymbol{\theta}}_n+\mathbf h/\sqrt n\notin\Theta$.

Since the prior is bounded in a sufficiently small neighborhood of $\boldsymbol{\theta}^*$, there exists $C_\pi<\infty$ such that, with probability tending to one,
\[
\pi\left(\widehat{\boldsymbol{\theta}}_n+\frac{\mathbf h}{\sqrt n}\right)
R_n(\mathbf h)
\le
C_\pi
\exp\left\{
-\frac14\mathbf h^\top\boldsymbol V\mathbf h
\right\},
\qquad \mathbf h\in A_{2n}.
\]
Therefore, for some constant $C>0$,
\[
\int_{A_{2n}}
\pi\left(\widehat{\boldsymbol{\theta}}_n+\frac{\mathbf h}{\sqrt n}\right)
R_n(\mathbf h)\,d\mathbf h
\le
C_\pi
\int_{\|\mathbf h\|>r_n}
\exp(-C\|\mathbf h\|^2)\,d\mathbf h
=o_p(1).
\]
Similarly,
\[
\int_{A_{2n}}
\pi(\boldsymbol{\theta}^*)
\exp\left\{-\frac12\mathbf h^\top\boldsymbol V\mathbf h\right\}
\,d\mathbf h
=o(1).
\]
Hence the contribution from $A_{2n}$ to \eqref{eq:bvm_l1_kernel} is $o_p(1)$.

Finally consider $A_{3n}$. On $E_n$, if $\mathbf h\in A_{3n}$ and $\boldsymbol{\theta}_{\mathbf h} =
\widehat{\boldsymbol{\theta}}_n+\mathbf{h} / \sqrt{n}
\in\Theta$, then
\[
\left\|
\boldsymbol{\theta}_{\mathbf h}
-
\boldsymbol{\theta}^*
\right\|
\ge
\frac{\|\mathbf h\|}{\sqrt n}
-
\|\widehat{\boldsymbol{\theta}}_n-\boldsymbol{\theta}^*\|
>
\frac{\delta}{2}.
\]
By Assumption~\ref{bvm_condition}, there exists $\epsilon>0$ such that, with probability tending to one,
\[
\sup_{\|\boldsymbol{\theta}-\boldsymbol{\theta}^*\|>\delta/2}
\frac1n
\{M_n(\boldsymbol{\theta})-M_n(\boldsymbol{\theta}^*)\}
\le
-\epsilon .
\]
Moreover, since
$\widehat{\boldsymbol{\theta}}_n\convp\boldsymbol{\theta}^*$ and
$\widehat{\boldsymbol{\theta}}_n$ satisfies the empirical moment equation with probability tending to one, the local ETEL expansion at $\boldsymbol{\theta}^*$ gives
\[
M_n(\widehat{\boldsymbol{\theta}}_n)-M_n(\boldsymbol{\theta}^*)=O_p(1).
\]
Therefore, with probability tending to one, uniformly over
$\mathbf h\in A_{3n}$ such that $\boldsymbol{\theta}_{\mathbf h}\in\Theta$,
\[
\log R_n(\mathbf h)
=
\{M_n(\boldsymbol{\theta}_{\mathbf h})-M_n(\boldsymbol{\theta}^*)\}
-
\{M_n(\widehat{\boldsymbol{\theta}}_n)-M_n(\boldsymbol{\theta}^*)\}
\le
-\frac{n\epsilon}{2}.
\]
Thus, using the change of variables
$\boldsymbol{\theta}
=
\widehat{\boldsymbol{\theta}}_n+\mathbf h/\sqrt n$,
\begin{align*}
\int_{A_{3n}}
\pi\left(\widehat{\boldsymbol{\theta}}_n+\frac{\mathbf h}{\sqrt n}\right)
R_n(\mathbf h)\,d\mathbf h
&\le
e^{-n\epsilon/2}
\int_{\mathbb R^p}
\pi\left(\widehat{\boldsymbol{\theta}}_n+\frac{\mathbf h}{\sqrt n}\right)
\mathbbm{1}\left\{
\widehat{\boldsymbol{\theta}}_n+\frac{\mathbf h}{\sqrt n}\in\Theta
\right\}
\,d\mathbf h \\
&=
e^{-n\epsilon/2}n^{p/2}
\int_{\Theta}\pi(\boldsymbol{\theta})\,d\boldsymbol{\theta}
=o_p(1).
\end{align*}
The Gaussian contribution over $A_{3n}$ also satisfies
\[
\int_{A_{3n}}
\pi(\boldsymbol{\theta}^*)
\exp\left\{-\frac12\mathbf h^\top\boldsymbol V\mathbf h\right\}
\,d\mathbf h
\le
\pi(\boldsymbol{\theta}^*)
\int_{\|\mathbf h\|>\delta\sqrt n}
\exp\left\{-\frac12\mathbf h^\top\boldsymbol V\mathbf h\right\}
\,d\mathbf h
=o(1).
\]

Combining the bounds over $A_{1n}$, $A_{2n}$, and $A_{3n}$ proves
\eqref{eq:bvm_l1_kernel}.
\end{proof}

Then Theorem~\ref{bvm} follows directly.

\begin{proof}
Write $k_n(\mathbf h)
=
\pi\!\left(\widehat{\boldsymbol{\theta}}_n+{\mathbf h}/{\sqrt n}\right)
R_n(\mathbf h)$ with the convention that \(k_n(\mathbf h)=0\) whenever
\(\widehat{\boldsymbol{\theta}}_n+\mathbf h/\sqrt n\notin\Theta\). Let $C_n=\int_{\mathbb R^p} k_n(\mathbf h)\,d\mathbf h$. Then the posterior density of
\(\mathbf h=\sqrt n(\boldsymbol{\theta}-\widehat{\boldsymbol{\theta}}_n)\) is $q_n(\mathbf h\mid \mathbf x_{1:n})
=
C_n^{-1}k_n(\mathbf h)$.

By Lemma~\ref{lem:bvm_domain_decomposition},
\[
\int_{\mathbb R^p}
\left|
k_n(\mathbf h)
-
\pi(\boldsymbol{\theta}^*)
\exp\!\left\{-\frac12\mathbf h^\top\boldsymbol V\mathbf h\right\}
\right|d\mathbf h
\convp 0.
\]
Hence
\[
C_n
\convp
\pi(\boldsymbol{\theta}^*)
\int_{\mathbb R^p}
\exp\!\left\{-\frac12\mathbf h^\top\boldsymbol V\mathbf h\right\}
d\mathbf h
=
\pi(\boldsymbol{\theta}^*)(2\pi)^{p/2}|\boldsymbol V|^{-1/2}.
\]
The limiting constant is strictly positive by Assumptions~\ref{s_1}
and~\ref{s_4}. Therefore \(C_n^{-1}=O_p(1)\). Let
\[
\phi_{\boldsymbol V}(\mathbf h)
=
(2\pi)^{-p/2}|\boldsymbol V|^{1/2}
\exp\!\left\{-\frac12\mathbf h^\top\boldsymbol V\mathbf h\right\}.
\]
Combining \eqref{eq:bvm_l1_kernel} convergence  with \(C_n\convp
\pi(\boldsymbol{\theta}^*)(2\pi)^{p/2}|\boldsymbol V|^{-1/2}\), we obtain
\[
\int_{\mathbb R^p}
\left|
q_n(\mathbf h\mid \mathbf x_{1:n})
-
\phi_{\boldsymbol V}(\mathbf h)
\right|d\mathbf h
\convp 0.
\]
Therefore,
\[
\sup_{B\in\mathcal B(\mathbb R^p)}
\left|
\pi\!\left(
\sqrt n(\boldsymbol{\theta}-\widehat{\boldsymbol{\theta}}_n)\in B
\mid \mathbf x_{1:n}
\right)
-
\mathcal{N}(B;\mathbf 0,\boldsymbol V^{-1})
\right|
\le
\frac12
\int_{\mathbb R^p}
\left|
q_n(\mathbf h\mid \mathbf x_{1:n})
-
\phi_{\boldsymbol V}(\mathbf h)
\right|d\mathbf h
\convp 0.
\]
This proves \eqref{target}.
\end{proof}

\section{Supplementary Simulation Studies}
\label{app:simulation_res}
This section presents supplementary simulation studies evaluating SME-BETEL under both model misspecification and correct specification. We first illustrate the uncertainty calibration of SME-BETEL under misspecification through a simple one-dimensional example, followed by simulation studies of the autonormal model \citep{besag1974spatial} and truncated normal distribution. We then consider correctly specified models, including the autonormal model, Bingham distribution \citep{bingham1974antipodally}, and exponential graphical model \citep{yang2015graphical}, to evaluate the performance and efficiency of SME-BETEL when the working model is correctly specified. Together, these examples cover Euclidean, bounded, manifold, and nonnegative data supports and complement the mixed-domain preferential sampling study in the main text.

In all simulation studies, the Double Metropolis–Hastings (DMH) algorithm \citep{liang2010double} serves as the primary likelihood-based competitor. For the autonormal studies, we additionally include standard likelihood-based Bayesian inference, referred to as {\it Standard Bayes}, since the free-boundary joint likelihood is available in closed form \citep{balram1993noncausal}. Under model misspecification, coverage probabilities are computed with respect to each method’s corresponding target parameter, while under correct specification we report bias, mean squared error (MSE), mean absolute deviation (MAD), and coverage probability (CP). Unless otherwise specified, posterior computation is based on 10{,}000 MH iterations, with the first 3{,}000 discarded as burn-in. Details of the MH algorithm for SME-BETEL are provided in Appendix~\ref{app:sampling_algo}.

\subsection{Misspecified Models}
\label{subsec:misspecified_models}

\subsubsection{Illustration of Posterior Calibration Under Misspecification}
\label{subsec:calibration_example}

Suppose that the data-generating density is $p_x(x)=0.25f(0.25(x-1))$, where $f(u)=e^{-u}/(1+e^{-u})^2$ is the standard logistic density. Consider the working model $p(x;\theta)=f(x-\theta)$, equipped with the improper prior $\pi(\theta)\propto 1$. The working model is misspecified because its scale differs from that of the data-generating distribution. In this example, the target of usual Bayesian inference, defined as the minimizer of the KL divergence, is $\theta^\dagger=1$, while the score matching target defined in \eqref{theta^*} is also $\theta^*=1$. Thus, the two approaches target the same parameter, allowing us to directly compare their uncertainty quantification. Define $A=-\mathbb{E}_{p_x}[\partial^2\log p(X;\theta^\dagger)/\partial\theta^2]$ and $B=\mathbb{E}_{p_x}[\{\partial\log p(X;\theta^\dagger)/\partial\theta\}^2]$. Under model misspecification, the usual Bayesian posterior variance is asymptotically determined by $1/A$ \citep{ghosal2017fundamentals}, whereas the sampling variance of the maximum likelihood estimator is determined by the sandwich form $B/A^2$ \citep{white1982maximum}. Consequently, the asymptotic frequentist coverage of the nominal $95\%$ Bayesian credible interval is $P(|Z|<1.96\sqrt{A/B})\approx 0.56$, where $Z\sim N(0,1)$. Thus, the usual Bayesian $95\%$ credible interval has only about $56\%$ asymptotic frequentist coverage in this example. In contrast, equations~\eqref{target} and~\eqref{asymptotic_normal} imply that the limiting posterior variance of SME-BETEL matches the asymptotic sampling variance of the score matching estimator, so its $95\%$ credible interval is also an asymptotically valid $95\%$ frequentist confidence interval for $\theta^*$. To illustrate the finite-sample behavior, we conduct $500$ replications, each with sample size $n=50$. The empirical coverage probabilities of the nominal $95\%$ credible intervals are $0.55$ for the usual Bayesian approach and $0.92$ for SME-BETEL. The former closely agrees with the asymptotic coverage of approximately $0.56$, whereas SME-BETEL provides coverage much closer to the nominal level. This simple example illustrates the uncertainty calibration provided by SME-BETEL under model misspecification.

\subsubsection{Autonormal Model}
\label{subsec:misspecified_autonormal}

The autonormal model is a Gaussian Markov random field (GMRF) specified through Gaussian full conditional distributions, with conditional means given by linear combinations of neighboring sites \citep{besag1974spatial,rue2005gaussian}. Consider a second-order zero-mean GMRF $\mathbf{x} = (x_{ij})$ on an $M \times N$ lattice. For site $(i,j)$, let $n_h(i,j)$, $n_v(i,j)$, and $n_d(i,j)$ denote its horizontal, vertical, and diagonal neighbors, respectively; see Figure~\ref{fig:autonormal-neighbors}. The conditional density at site $(i,j)$, given all other sites, is
\begin{multline}
\label{autonormal_conditional_density}
        p\big(x_{ij} \mid \boldsymbol{\beta}, \sigma^2, x_{uv}; (u,v) \neq (i,j)\big) 
        = (2\pi \sigma^2)^{-1/2}
        \exp \bigg\{-\frac{1}{2\sigma^2}\bigg(x_{ij} 
        - \beta_h \sum_{(u,v) \in n_h(i,j)} x_{uv} \\
        - \beta_v \sum_{(u,v)\in n_v(i,j)} x_{uv} 
        - \beta_d \sum_{(u,v)\in n_d(i,j)} x_{uv}\bigg)^2 \bigg\},
\end{multline}
where $\boldsymbol{\beta} = (\beta_h, \beta_v, \beta_d)^{\top}$ and $\sigma^2$ are model parameters. Here, $\beta_h$, $\beta_v$, and $\beta_d$ control the strength of horizontal, vertical, and diagonal dependence, respectively. These conditional distributions determine a valid mean-zero joint Gaussian distribution, provided that the corresponding precision matrix is positive definite. The joint probability density of the process $\mathbf{x}$ \citep{besag1974spatial} can be written as
\begin{equation}
\label{autonormal_joint_density}
    p(\mathbf{x} ; \boldsymbol{\beta}, \sigma^2) 
= (2\pi \sigma^2)^{-MN/2} |\boldsymbol{B}(\boldsymbol{\beta})|^{1/2}
\exp \left\{-(2\sigma^2)^{-1}\mathbf{x}^{\top}\boldsymbol{B}(\boldsymbol{\beta})\mathbf{x}\right\},
\end{equation}
where $\boldsymbol{B}(\boldsymbol{\beta})$ is an $MN \times MN$ precision matrix depending on the parameters $\boldsymbol{\beta}$; its full formula is given in Appendix~\ref{subsec:precision_matrix_formula}. The determinant $|\boldsymbol{B}(\boldsymbol{\beta})|$ is generally difficult to evaluate, except in certain special cases \citep{besag1975estimation}. 


We now consider a misspecified working model. The data-generating model, denoted by $p_{\mathbf{x}}(\cdot)$, is the full autonormal model in \eqref{autonormal_joint_density} with parameters $\boldsymbol{\beta}=(\beta_h,\beta_v,\beta_d)^\top$, and throughout this simulation we fix $\sigma^2=1$. The working model, denoted by $p(\cdot;\boldsymbol{\alpha})$, is a constrained autonormal model with parameter $\boldsymbol{\alpha}=(\alpha_h,\alpha_d)^\top$, obtained by imposing $\beta_h=\beta_v=\alpha_h$ and $\beta_d=\alpha_d$. Equivalently, the working precision matrix $\boldsymbol{A}(\boldsymbol{\alpha})$ has the same form as $\boldsymbol{B}(\boldsymbol{\beta})$, but with the horizontal and vertical dependence parameters constrained to be equal. Thus, the working model is misspecified whenever $\beta_h\neq\beta_v$. Under misspecification, the likelihood-based Bayesian procedures and SME-BETEL target different pseudo-true parameters. The target of Standard Bayes and DMH, denoted by $\boldsymbol{\alpha}^{\dagger}$, is the minimizer of the KL divergence from $p_{\mathbf{x}}(\cdot)$ to $p(\cdot;\boldsymbol{\alpha})$, whereas the target of SME-BETEL, denoted by $\boldsymbol{\alpha}^{*}$, is the minimizer of the relative Fisher divergence in \eqref{theta^*}. In this example, both divergences admit closed-form expressions, which we use to numerically evaluate $\boldsymbol{\alpha}^{\dagger}$ and $\boldsymbol{\alpha}^{*}$. Coverage probabilities are then reported with respect to each method's corresponding target.

\begin{figure}[!t]
\centering
\begin{tikzpicture}[
  diagC/.style={draw=red!70, fill=red!15},
  vertC/.style={draw=blue!70, fill=blue!15},
  horzC/.style={draw=orange!80!black, fill=orange!15},
  ctrC/.style ={draw=black, very thick, fill=yellow!15},
  cell/.style ={minimum size=6.3mm, inner sep=0pt, outer sep=0pt, font=\scriptsize}
]
\matrix (m) [matrix of nodes, nodes={cell, draw=black!50},
             row sep=-\pgflinewidth, column sep=-\pgflinewidth] {
  |[diagC]| $(i-1,j-1)$ & |[vertC]| $(i-1,j)$ & |[diagC]| $(i-1,j+1)$ \\
  |[horzC]| $(i,j-1)$   & |[ctrC ]| $(i,j)$   & |[horzC]| $(i,j+1)$ \\
  |[diagC]| $(i+1,j-1)$ & |[vertC]| $(i+1,j)$ & |[diagC]| $(i+1,j+1)$ \\
};

\draw[orange!80!black, line width=0.5pt] (m-2-2) -- (m-2-1);
\draw[orange!80!black, line width=0.5pt] (m-2-2) -- (m-2-3);
\draw[blue!70, line width=0.5pt] (m-2-2) -- (m-1-2);
\draw[blue!70, line width=0.5pt] (m-2-2) -- (m-3-2);
\draw[red!70, line width=0.5pt] (m-2-2) -- (m-1-1);
\draw[red!70, line width=0.5pt] (m-2-2) -- (m-1-3);
\draw[red!70, line width=0.5pt] (m-2-2) -- (m-3-1);
\draw[red!70, line width=0.5pt] (m-2-2) -- (m-3-3);
\end{tikzpicture}
\caption{Second-order neighborhood of site $(i,j)$ in the autonormal model. Colors indicate horizontal, vertical, and diagonal neighbors.}
\label{fig:autonormal-neighbors}
\end{figure}
 
For implementation, we use the free-boundary version of the autonormal model, under which boundary sites have fewer neighbors than interior sites. In this setting, \citet{balram1993noncausal} derived an equivalent likelihood representation for the joint density in \eqref{autonormal_joint_density} that is convenient for evaluating likelihood contributions; see Appendix~\ref{subsec:autonormal_likelihood}. We use this representation for the likelihood-based methods and use the same free-boundary structure when constructing the SME-BETEL moment functions. For the constrained working model, we use the prior $\pi(\boldsymbol{\alpha})\propto \mathbbm{1}(2|\alpha_h|+2|\alpha_d|<0.5)$, corresponding to the stationarity condition under the constrained model \citep{balram1993noncausal}. To implement SME-BETEL, we construct moment functions from block-level observations. We cannot directly use site-level moment functions $\mathbf{g}(x_{ij},\boldsymbol{\alpha})$ defined in \eqref{population_moment_condition}, because the lattice sites are spatially dependent, whereas the theoretical justification of SME-BETEL is based on independent observations; see Assumption~\ref{s_3}(a). We therefore adopt a blockwise working assumption: after partitioning the lattice into non-overlapping blocks, the block-level observations are treated as independent. Assuming $M=N$, we partition the lattice evenly into $n$ non-overlapping blocks. Specifically, for block observations $\mathbf{x}_1,\ldots,\mathbf{x}_n$, we define $\mathbf{g}(\mathbf{x}_i,\boldsymbol{\alpha})$, $i=1,\ldots,n$, using the block-level working precision matrix $\widetilde{\boldsymbol{A}}(\boldsymbol{\alpha})$, whose dimension matches the block size and whose entries follow the same local neighborhood structure as $\boldsymbol{A}(\boldsymbol{\alpha})$. The resulting SME-BETEL posterior is $\pi(\boldsymbol{\alpha} \mid \mathbf{x}) \propto L(\boldsymbol{\alpha})\pi(\boldsymbol{\alpha})$, where $L(\boldsymbol{\alpha})$ is the ETEL function defined in \eqref{likelihood}.

For the simulation setup, we take $M=N\in\{20,30,40,50,60,70\}$. For each lattice size, we generate 500 independent replications with $(\beta_h^*,\beta_v^*,\beta_d^*)=(0.1,0.2,0.05)$. Each replication consists of one spatial field on an $M\times N$ lattice, so increasing $M$ increases the dimension of the observed field rather than the number of independent fields. For each method and each replication, we construct 95\% credible intervals and compute coverage with respect to the corresponding pseudo-true target. The results are summarized in Figure~\ref{fig:simres_autonormal_robustness}. As the field size increases, the coverage probabilities of Standard Bayes and DMH decrease substantially for both parameters, especially for $\alpha_h$. In contrast, SME-BETEL maintains coverage close to the nominal level across all field sizes. This pattern is consistent with the theoretical discussion in Section~\ref{sec:sme_betel}, where SME-BETEL is shown to provide calibrated uncertainty quantification under misspecification. The lower coverage for $\alpha_h$ relative to $\alpha_d$ is also expected from the construction of the working model, since the restriction $\beta_h=\beta_v$ forces $\alpha_h$ to summarize both horizontal and vertical dependence. Table~\ref{tab:simres_mis_autonormal_robustness} reports the specific coverage probabilities.

\begin{figure}[!t]
\centering
\includegraphics[width=0.485\textwidth]{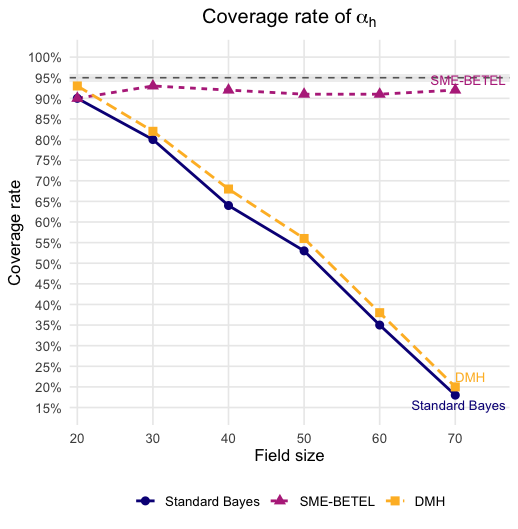}
\hfill
\includegraphics[width=0.485\textwidth]{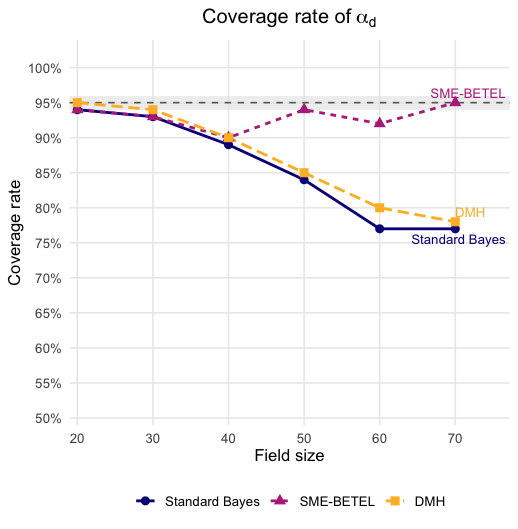}
\caption{Coverage probabilities of 95\% credible intervals under model misspecification for $\alpha_h$ and $\alpha_d$ across field sizes.}
\label{fig:simres_autonormal_robustness}
\end{figure}

\begin{table}[!htbp]
\centering
\small
\setlength{\tabcolsep}{6pt}
\renewcommand{\arraystretch}{1.12}
\caption{Simulation results for the misspecified autonormal model. Entries are empirical coverage rates; number in parentheses gives mean interval widths.}
\label{tab:simres_mis_autonormal_robustness}
\begin{tabular}{@{}llccc@{}}
\toprule
\textbf{Field size ($M=N$)} & \textbf{Parameter} &
\textbf{Standard Bayes} & \textbf{SME--BETEL} & \textbf{DMH} \\
\midrule
\multirow{2}{*}{$M=20$}
  & $\alpha_h$ & 90.0 (0.11) & 90.0 (0.11) & \textbf{93.0} (0.11) \\
  & $\alpha_d$ & 94.0 (0.12) & 94.0 (0.13) & \textbf{95.0} (0.13) \\
\midrule
\multirow{2}{*}{$M=30$}
  & $\alpha_h$ & 80.0 (0.08) & \textbf{92.0} (0.08) & 82.0 (0.08) \\
  & $\alpha_d$ & 93.0 (0.09) & 93.0 (0.10) & \textbf{94.0} (0.10) \\
\midrule
\multirow{2}{*}{$M=40$}
  & $\alpha_h$ & 64.0 (0.06) & \textbf{91.0} (0.06) & 68.0 (0.06) \\
  & $\alpha_d$ & 89.0 (0.07) & \textbf{90.0} (0.07) & \textbf{90.0} (0.07) \\
\midrule
\multirow{2}{*}{$M=50$}
  & $\alpha_h$ & 53.0 (0.05) & \textbf{90.0} (0.05) & 56.0 (0.05) \\
  & $\alpha_d$ & 84.0 (0.05) & \textbf{93.0} (0.06) & 85.0 (0.06) \\
\midrule
\multirow{2}{*}{$M=60$}
  & $\alpha_h$ & 35.0 (0.04) & \textbf{91.0} (0.04) & 38.0 (0.04) \\
  & $\alpha_d$ & 77.0 (0.04) & \textbf{92.0} (0.05) & 80.0 (0.05) \\
\midrule
\multirow{2}{*}{$M=70$}
  & $\alpha_h$ & 18.0 (0.03) & \textbf{92.0} (0.03) & 20.0 (0.04) \\
  & $\alpha_d$ & 77.0 (0.04) & \textbf{95.0} (0.04) & 78.0 (0.04) \\
\bottomrule
\end{tabular}
\end{table}

\subsubsection{Truncated Normal Distribution}
\label{truncated_normal}

This subsection reports a supplementary misspecified example on a bounded support. The purpose is to evaluate whether SME-BETEL provides calibrated uncertainty when the working model ignores dependence in a truncated bivariate normal distribution.

Let $\mathcal{D}\subset\mathbb{R}^2$ denote the truncation region. The
data-generating density is
\[
p_{\mathbf{x}}(\mathbf{x})
=
Z(\boldsymbol{\mu}_0,\Sigma)^{-1}
(2\pi)^{-1}|\Sigma|^{-1/2}
\exp\left\{
-\frac{1}{2}
(\mathbf{x}-\boldsymbol{\mu}_0)^{\top}
\Sigma^{-1}
(\mathbf{x}-\boldsymbol{\mu}_0)
\right\},
\qquad \mathbf{x}\in\mathcal{D},
\]
where $Z(\boldsymbol{\mu}_0,\Sigma)$ is the normalizing constant over
$\mathcal{D}$. The working model is
\[
p(\mathbf{x};\boldsymbol{\theta})
=
Z(\boldsymbol{\theta})^{-1}
(2\pi)^{-1}|\Lambda|^{-1/2}
\exp\left\{
-\frac{1}{2}
(\mathbf{x}-\boldsymbol{\theta})^{\top}
\Lambda^{-1}
(\mathbf{x}-\boldsymbol{\theta})
\right\},
\qquad \mathbf{x}\in\mathcal{D},
\]
where $\Lambda=\mathbf{I}_2$ is fixed and $\boldsymbol{\theta}$ is the parameter of interest. The working model is misspecified whenever $\Sigma\ne \mathbf{I}_2$; the examples below use covariance matrices with nonzero off-diagonal entries. Under misspecification, the likelihood-based method and SME-BETEL target different pseudo-true parameters. The pseudo-true parameter $\boldsymbol{\theta}^{\dagger}$ for DMH is defined as the minimizer of $\mathrm{D}_{\rm KL}\!\left(p_{\mathbf{x}}(\cdot) \,\|\, p(\cdot; \boldsymbol{\theta})\right)$, which admits a closed-form expression in this example:
\begin{align*}
\mathrm{D}_{\rm KL}\!\left(p_{\mathbf{x}}(\cdot) \,\|\, p(\cdot; \boldsymbol{\theta})\right)
&= \log \frac{Z(\boldsymbol{\theta})}{Z(\boldsymbol{\mu}_0, \Sigma)} 
   - \frac{1}{2} \Bigl\{
      \mathrm{tr}(\Sigma^{-1}\Omega) 
      + (\boldsymbol{\nu} - \boldsymbol{\mu}_0)^{\top}
        \Sigma^{-1}(\boldsymbol{\nu} - \boldsymbol{\mu}_0)
     \Bigr\} \\[4pt]
&\quad + \frac{1}{2} \Bigl\{
      \mathrm{tr}(\Lambda^{-1}\Omega)
      + (\boldsymbol{\nu} - \boldsymbol{\theta})^{\top}
        \Lambda^{-1}(\boldsymbol{\nu} - \boldsymbol{\theta})
     \Bigr\},
\end{align*}
where $\mathbb{E}_{p_{\mathbf{x}}}[\mathbf{x}] = \boldsymbol{\nu}$ and $\mbox{Var}_{p_{\mathbf{x}}}(\mathbf{x}) = \Omega$. Here, we use the \texttt{R} function \texttt{tmvtnorm::mtmvnorm()} to get $\boldsymbol{\nu}$ and $\Omega$, and use the trapezoidal rule to numerically approximate $Z(\boldsymbol{\mu}_0, \Sigma)$ and $Z(\boldsymbol{\theta})$. Then $\boldsymbol{\theta}^{\dagger}$ is straightforward to compute. The pseudo-true parameter of SME-BETEL is $\boldsymbol{\theta}^{*}$ such that $\boldsymbol{\theta}^{*} = \argmin_{\boldsymbol{\theta}} \mathrm{D}_{\rm F}\!\left(p_{\mathbf{x}}(\cdot) \,\|\, p(\cdot; \boldsymbol{\theta})\right)$. In this example,
\begin{align*}
    \mathrm{D}_{\rm F}\!\left(p_{\mathbf{x}}(\cdot) \,\|\, p(\cdot; \boldsymbol{\theta})\right) &= \frac{1}{2} \int h||\psi(\mathbf{x}; \boldsymbol{\theta}) - \psi_{\mathbf{x}}(\mathbf{x})||^2 p_{\mathbf{x}}(\mathbf{x}) d\mathbf{x} \\
    &= \mathbb{E}_{p_{\mathbf{x}}}\big[\frac{1}{2}h||\Lambda^{-1}(\boldsymbol{\theta} - \mathbf{x}) - \Sigma^{-1}(\boldsymbol{\mu}_0 - \mathbf{x})||^2\big] \hspace{5mm} \text{(omit constant terms)},
\end{align*}
where $h$ is the weight function. Unlike $\mathrm{D}_{\rm KL}\!\left(p_{\mathbf{x}}(\cdot) \,\|\, p(\cdot; \boldsymbol{\theta})\right)$, $\mathrm{D}_{\rm F}\!\left(p_{\mathbf{x}}(\cdot) \,\|\, p(\cdot; \boldsymbol{\theta})\right)$ does not admit an analytical form due to the existence of the weight function $h$. Fortunately, it can be easily approximated.

We consider two misspecified cases. In Case~I, the truncation region is symmetric: $\mathcal{D}=[-2,2]^2$, $\boldsymbol{\mu}_0=(0,0)^{\top}$ and $\Sigma=
\begin{pmatrix}
7 & 1.2 \\
1.2 & 7
\end{pmatrix}$. In Case~II, the truncation region induces a more asymmetric setting: $\mathcal{D}=[-1,3]^2$, $\boldsymbol{\mu}_0=(2.5,2.6)^{\top}$ and $\Sigma=
\begin{pmatrix}
5 & 1.2 \\
1.2 & 5
\end{pmatrix}$. For both methods, we use the prior $\boldsymbol{\theta}\sim \mathrm{N}(\mathbf{0},5\mathbf{I}_2)$. Since the density is defined on a two-dimensional bounded domain, SME-BETEL uses the bounded domain score matching objective in \citet{liu2022estimating}. The weight function is chosen as in \eqref{mollifier} with $\epsilon=0.2$. Table~\ref{tab:truncnorm_scenarios} reports the coverage probabilities of nominal 95\% credible intervals. Coverage is computed with respect to the corresponding pseudo-true target of each method. Across both cases and all sample sizes, SME-BETEL achieves coverage close to the nominal level, whereas DMH consistently undercovers. This supports the robustness of SME-BETEL under bounded domain model misspecification.

\begin{table}[!t]
\centering
\caption{Coverage probabilities of 95\% credible intervals for the misspecified truncated normal model.}
\label{tab:truncnorm_scenarios}
\begingroup
\small
\setlength{\tabcolsep}{4pt}
\renewcommand{\arraystretch}{0.92}
\begin{tabular}{@{}llcccc@{}}
\toprule
& & \multicolumn{2}{c}{Case~I} & \multicolumn{2}{c}{Case~II} \\
\cmidrule(lr){3-4}\cmidrule(l){5-6}
Sample size & Parameter & DMH & SME-BETEL & DMH & SME-BETEL \\
\midrule
$200$  & $\theta_1$ & 88.0 & \textbf{94.0} & 88.0 & \textbf{96.0} \\
       & $\theta_2$ & 88.0 & \textbf{94.0} & 88.0 & \textbf{94.0} \\
\addlinespace[1pt]
$1000$ & $\theta_1$ & 87.0 & \textbf{95.0} & 90.0 & \textbf{95.0} \\
       & $\theta_2$ & 90.0 & \textbf{96.0} & 88.0 & \textbf{95.0} \\
\addlinespace[1pt]
$3000$ & $\theta_1$ & 85.0 & \textbf{94.0} & 89.0 & \textbf{95.0} \\
       & $\theta_2$ & 90.0 & \textbf{95.0} & 86.0 & \textbf{95.0} \\
\bottomrule
\end{tabular}
\endgroup
\end{table}

\subsection{Correctly Specified Models}
\label{subsec:correctly_specified_models}

\subsubsection{Autonormal Model}
\label{subsec:correct_autonormal}

This subsection reports the correctly specified counterpart of the autonormal experiment in Section~\ref{subsec:misspecified_autonormal}. The purpose is to assess the efficiency of SME-BETEL relative to likelihood-based Bayesian methods when the working model is correctly specified.

The data-generating model and the working model are both the full second-order autonormal model in \eqref{autonormal_joint_density}, with parameter $\boldsymbol{\beta} = (\beta_h,\beta_v,\beta_d)^\top$. We set $M=N=40$, fix $\sigma^2=1$, and use the true parameter value $\boldsymbol{\beta}^* = (\beta_h^*,\beta_v^*,\beta_d^*)^\top =
(0.08,0.15,0.02)^\top$. For each replication, the lattice is initialized by drawing each site independently from $\mathrm{N}(0,1)$, and one spatial field is generated by running 5,000 full Gibbs sweeps using the conditional density in \eqref{autonormal_conditional_density}. We generate 500 independent replications. We compare Standard Bayes, DMH, and SME-BETEL using the same prior $\pi(\boldsymbol{\beta})
\propto
\mathbbm{1}
\left\{
|\beta_h|+|\beta_v|+2|\beta_d|<0.5
\right\}$, which corresponds to the stationarity condition for the full autonormal model. The posterior samplers are initialized at $(0.2,0.1,0.01)^\top$. 

Since the model is correctly specified, all three methods target the true parameter
$\boldsymbol{\beta}^{*}$. Table~\ref{tab:simres_autonormal} reports the bias, MAD, MSE, and CP for each component of $\boldsymbol{\beta}$. All three methods achieve coverage probabilities close to the nominal 95\% level. Standard Bayes and DMH have slightly smaller MAD and MSE than SME-BETEL, reflecting the efficiency advantage
of likelihood-based inference under correct specification. SME-BETEL remains competitive, although its posterior is slightly more dispersed because the score matching estimating equations do not use the information contained in the normalizing constant.

\begin{table}[!t]
\centering
\caption{Simulation results for the correctly specified autonormal model.}
\label{tab:simres_autonormal}
\begingroup
\small
\setlength{\tabcolsep}{4pt}
\renewcommand{\arraystretch}{0.92}
\begin{tabular}{@{}ll
S[table-format=-1.1]
S[table-format=1.1]
S[table-format=1.2]
S[table-format=2.1]
@{}}
\toprule
Parameter & Model
& \multicolumn{1}{c}{Bias $(\times 10^2)$}
& \multicolumn{1}{c}{MAD $(\times 10^2)$}
& \multicolumn{1}{c}{MSE $(\times 10^2)$}
& \multicolumn{1}{c}{CP} \\
\midrule
$\beta_h$ & Standard Bayes &  0.2 & 1.8 & 0.05 & 96.0 \\
          & DMH            &  0.2 & 1.9 & 0.05 & 96.0 \\
          & SME-BETEL      &  0.5 & 2.1 & 0.07 & 94.0 \\
\addlinespace[1pt]
$\beta_v$ & Standard Bayes & -0.5 & 1.9 & 0.05 & 95.0 \\
          & DMH            & -0.2 & 1.9 & 0.05 & 96.0 \\
          & SME-BETEL      & -0.2 & 2.1 & 0.07 & 94.0 \\
\addlinespace[1pt]
$\beta_d$ & Standard Bayes & -0.2 & 1.3 & 0.03 & 95.0 \\
          & DMH            & -0.1 & 1.3 & 0.03 & 96.0 \\
          & SME-BETEL      & -0.1 & 1.7 & 0.04 & 95.0 \\
\bottomrule
\end{tabular}
\endgroup
\end{table}

\subsubsection{Bingham Distribution}
\label{bingham_dist}

This subsection reports a supplementary correctly specified example for spherical data. The purpose is to evaluate the performance of SME-BETEL when the support is a smooth manifold and the working model is correctly specified. The Bingham distribution \citep{bingham1974antipodally,mardia2009directional} is commonly used for antipodally symmetric data on the unit sphere $\mathbb{S}^{m-1}
=
\left\{
\mathbf{x}\in\mathbb{R}^m:
x_1^2+\cdots+x_m^2=1
\right\}$. Its density on $\mathbb{S}^{m-1}$ is
\[
p(\mathbf{x};\boldsymbol{A})
=
z(\boldsymbol{A})^{-1}
\exp\{-\mathbf{x}^{\top}\boldsymbol{A}\mathbf{x}\},
\qquad
\mathbf{x}\in\mathbb{S}^{m-1},
\]
where $\boldsymbol{A}$ is an $m\times m$ symmetric matrix and
$z(\boldsymbol{A})$ is the normalizing constant. By rotating to the principal axes, it is sufficient to consider a diagonal parameter matrix. The density can then be written as
\[
p(\mathbf{x};\boldsymbol{\lambda})
\propto
\exp\left\{
-\sum_{i=1}^{m}\lambda_i x_i^2
\right\}.
\]
We take $m=3$. For identifiability, we impose the conventional constraint $\lambda_1\geq \lambda_2\geq \lambda_3=0$ \citep{kent1987asymptotic}. Thus, the unnormalized density becomes
\[
p(\mathbf{x};\lambda_1,\lambda_2)
\propto
\exp\{-\lambda_1 x_1^2-\lambda_2 x_2^2\},
\qquad
\mathbf{x}\in\mathbb{S}^{2}.
\]
Since $\mathbb{S}^2$ is a smooth manifold without boundary, SME-BETEL uses the manifold score matching objective of \citet{mardia2016score} to construct the estimating equations. To enforce the ordering constraint, we use the reparameterization $\theta_1=\lambda_1-\lambda_2$ and $\theta_2=\lambda_2$ so that $\theta_1,\theta_2\geq 0$ and $\lambda_1=\theta_1+\theta_2$ and $\lambda_2=\theta_2$.

We assign independent $\mathrm{Exp}(1)$ priors to $\theta_1$ and $\theta_2$. In the MH sampler, the transformed parameters
$(\log\theta_1,\log\theta_2)$ are updated using a normal random-walk proposal. The true parameter is set to $(\lambda_1^*,\lambda_2^*)=(2,1.5)$ which corresponds to $(\theta_1^*,\theta_2^*)=(0.5,1.5)$. Samples are generated using the \textsc{simdd} package in \texttt{R}. We generate 500 replications, each with 1000 observations. The initial value is
set to $(\theta_1,\theta_2)=(1,1)$. Table~\ref{tab:simres_Bingham_3d} summarizes the simulation results. Since the model is correctly specified, both DMH and SME-BETEL achieve negligible bias and coverage probabilities close to the nominal 95\% level for both $\lambda_1$ and $\lambda_2$. DMH has slightly smaller MAD and MSE values,
reflecting the efficiency advantage of likelihood-based inference under correct specification. SME-BETEL remains competitive, with only modestly larger variability.

\begin{table}[!t]
\centering
\caption{Simulation results for the correctly specified Bingham distribution.}
\label{tab:simres_Bingham_3d}
\begingroup
\small
\setlength{\tabcolsep}{4pt}
\renewcommand{\arraystretch}{0.92}
\begin{tabular}{@{}ll
S[table-format=-1.1]
S[table-format=2.1]
S[table-format=1.1]
S[table-format=2.1]
@{}}
\toprule
Parameter & Model
& \multicolumn{1}{c}{Bias $(\times 10^2)$}
& \multicolumn{1}{c}{MAD $(\times 10^2)$}
& \multicolumn{1}{c}{MSE $(\times 10^2)$}
& \multicolumn{1}{c}{CP} \\
\midrule
$\lambda_1$ & DMH       & -1.8 & 10.9 & 1.8 & 94.0 \\
            & SME-BETEL & -1.6 & 11.0 & 1.9 & 94.0 \\
\addlinespace[1pt]
$\lambda_2$ & DMH       & -0.4 &  9.8 & 1.5 & 95.0 \\
            & SME-BETEL & -0.1 & 10.0 & 1.5 & 95.0 \\
\bottomrule
\end{tabular}
\endgroup
\end{table}

\subsubsection{Exponential Graphical Model}
\label{subsec:exp_graphical}

Consider a random vector $\mathbf{x}=(x_1,\ldots,x_m)$, where each component $x_i$ takes values on the positive real line. Let $G=(V,E)$ denote an undirected graph with node set $V := \{1, \ldots, m\}$ indexing the variables $\{x_i\}_{i=1}^m$. A subset $C \subset V$ is called a clique if all of its vertices are mutually connected. Denote by $\mathcal{M}$ the collection of all cliques in $G$, and let $\{\phi_C(x_C)\}_{C \in \mathcal{M}}$ be a family of compatibility functions, where $x_C=\{x_j:j\in C\}$. A probability distribution with respect to $G$ can then be written as
\begin{equation}
\label{joint_dist}
    p(x_1, \cdots, x_m) = \frac{1}{z}\prod_{C \in \mathcal{M}} \phi_C(x_C) 
\end{equation}
where $z$ is the normalizing constant. When each node-conditional distribution $p(x_r\mid x_{-r})$ belongs to a univariate exponential family, the compatible joint distribution takes the form in \eqref{joint_dist}. This class of models is known as exponential family graphical models \citep{yang2015graphical}. We now consider the correctly specified exponential graphical model, a special case of exponential family graphical models. For $m=3$, the unnormalized density is given by
\begin{equation}
\label{egm}
    p(x_1, x_2, x_3) \hspace{1mm} \propto \hspace{1mm} \exp \bigg\{-\big(\theta_1 x_1 + \theta_2 x_2 + \theta_3 x_3 + \phi (x_1 x_2 + x_2 x_3 + x_1 x_3)\big)\bigg\},
\end{equation}
where $x_1, x_2, x_3 \geq 0$, $\boldsymbol{\theta} = (\theta_1, \theta_2, \theta_3) > \mathbf{0}_3$, and $\phi > 0$. In this example, we fix $\phi = 0.01$. Since the data are nonnegative, we derive the moment function $\mathbf{g}(\mathbf{x}, \boldsymbol{\theta})$ using the score matching objective function defined for nonnegative data in \citet{hyvarinen2007some}. For posterior sampling, we place independent $\mbox{Exp}(0.2)$ priors on each component of $\boldsymbol{\theta}$. Since the parameters are positive, we update $\log \boldsymbol{\theta}$ using a normal random-walk proposal in the MH algorithm. We set $(\theta^*_1, \theta^*_2, \theta^*_3) = (5, 4, 4)$, and then generate 500 replications, each consisting of 1000 independent three-dimensional observations generated from the model in \eqref{egm}. The initial value of $\boldsymbol{\theta}$ is $(3, 3, 3)$, and the simulation results are summarized in Table~\ref{tab:simres_EGM}. Again, when the model is correctly specified, the performances of DMH and SME-BETEL are similar to those observed for the autonormal model and the Bingham distribution. Taken together, these simulation studies show that SME-BETEL remains competitive with likelihood-based Bayesian methods under correct specification, while providing substantially more robust inference under model misspecification.

\begin{table}[!t]
\centering
\caption{Simulation results for the exponential graphical model under correct specification.}
\label{tab:simres_EGM}
\begingroup
\small
\setlength{\tabcolsep}{4pt}
\renewcommand{\arraystretch}{0.92}
\begin{tabular}{@{}ll
                S[table-format=+1.1]
                S[table-format=2.1]
                S[table-format=1.1]
                S[table-format=2.1]
                @{}}
\toprule
Parameter & Model
& \multicolumn{1}{c}{Bias $(\times 10^2)$}
& \multicolumn{1}{c}{MAD $(\times 10^2)$}
& \multicolumn{1}{c}{MSE $(\times 10^2)$}
& \multicolumn{1}{c}{CP} \\
\midrule
\multirow{2}{*}{$\theta_{1}$}
  & DMH       & -0.1 & 12.8 & 2.6 & 95.0 \\
  & SME-BETEL &  0.1 & 17.8 & 5.1 & 92.0 \\
\addlinespace[1pt]
\multirow{2}{*}{$\theta_{2}$}
  & DMH       & -0.2 & 10.3 & 1.6 & 96.0 \\
  & SME-BETEL & -0.9 & 13.5 & 2.8 & 94.0 \\
\addlinespace[1pt]
\multirow{2}{*}{$\theta_{3}$}
  & DMH       &  0.2 & 10.7 & 1.7 & 94.0 \\
  & SME-BETEL &  0.3 & 13.4 & 2.8 & 95.0 \\
\bottomrule
\end{tabular}
\endgroup
\end{table}

\section{Additional Formulas and Implementation Details}
\label{app:formulas_computation_details}

This section collects model-specific formulas and implementation details used in the simulation studies and data application. Section \ref{subsec:weight_function} specifies the weight function used for bounded domain score matching. Section \ref{subsec:additional_formula_autonormal} provides the precision matrix and free-boundary likelihood representation for the autonormal model, while Section \ref{subsec:kernel_form} describes the kernel convolution approximation used for the $\mathrm{Mat\acute{e}rn}$ Gaussian processes in the preferential sampling model.

\subsection{Weight Functions for Bounded Domains}
\label{subsec:weight_function}
For the mixed-domain score matching objective introduced in Section~\ref{sec:support_adapted_sm}, the weight function $h$ is required to vanish on the boundary of the compact domain. In the numerical studies we use a smooth cubic weight function that equals one away from the boundary and tapers to zero within a distance $\epsilon$ of the boundary. For a one-dimensional bounded domain $\mathcal{D} = [a, b]$, we define
\begin{equation}
\label{mollifier}
h(x)=
\begin{cases}
3\left(\dfrac{x-a}{\epsilon}\right)^2
-2\left(\dfrac{x-a}{\epsilon}\right)^3,
& a \le x < a+\epsilon,\\
1, & a+\epsilon \le x \le b-\epsilon,\\
3\left(\dfrac{b-x}{\epsilon}\right)^2
-2\left(\dfrac{b-x}{\epsilon}\right)^3,
& b-\epsilon < x \le b .
\end{cases}
\end{equation}
Here, $\epsilon$ is fixed and controls the width of the boundary region over which the weight function smoothly tapers to zero. For a two-dimensional rectangular domain, the weight function takes the form $h_1(x_1)h_2(x_2)$, where $h_1$ and $h_2$ are defined analogously to $h$ for the corresponding coordinate domains.

\subsection{Autonormal Model Formulas}
\label{subsec:additional_formula_autonormal}
\subsubsection{Precision Matrix}
\label{subsec:precision_matrix_formula}

For sites \(s=(i,j)\) and \(t=(u,v)\), the entries of \(\boldsymbol{B}(\boldsymbol{\beta})\) are
\[
B_{s,t}(\boldsymbol{\beta}) =
\begin{cases}
1, & s=t,\\
-\beta_h, & s \text{ and } t \text{ are horizontal neighbors},\\
-\beta_v, & s \text{ and } t \text{ are vertical neighbors},\\
-\beta_d, & s \text{ and } t \text{ are diagonal neighbors},\\
0, & \text{otherwise}.
\end{cases}
\]
The conditional specification defines a proper joint Gaussian distribution when
\(\boldsymbol{B}(\boldsymbol{\beta})\) is positive definite.

\subsubsection{Free-Boundary Likelihood Representation}
\label{subsec:autonormal_likelihood}

Under the free-boundary condition, \citet{balram1993noncausal} showed that the joint probability density defined in \eqref{autonormal_joint_density} is equivalent to
\begin{equation*}
p(\mathbf{x} ; \boldsymbol{\beta}, \sigma^2) \hspace{1mm} \propto \hspace{1mm} (\sigma^2)^{-MN/2}|\boldsymbol{B}(\boldsymbol{\beta})|^{1/2}\exp \big\{-\frac{MN}{2\sigma^2}(S_x - 2\beta_h X_h - 2\beta_v X_v - 2\beta_d X_d)\big\} 
\end{equation*}
where
\begin{align*}
S_x &= \frac{1}{MN} \sum_{i=1}^M \sum_{j=1}^N x_{ij}^2,
\qquad
X_h = \frac{1}{MN} \sum_{i=1}^M \sum_{j=1}^{N-1} x_{ij}x_{i, j+1}, \\
X_v &= \frac{1}{MN} \sum_{i=1}^{M-1} \sum_{j=1}^N x_{ij} x_{i+1, j},
\qquad
X_d = \frac{1}{MN}\!\left(\sum_{i=1}^{M-1} \sum_{j=1}^{N-1} x_{ij}x_{i+1, j+1}
+ \sum_{i=1}^{M-1} \sum_{j=2}^N x_{ij}x_{i+1, j-1}\right).
\end{align*}
This representation is used to evaluate the likelihood-based methods in the autonormal simulations, while the same free-boundary neighborhood structure is used to construct the SME-BETEL moment functions. For the misspecified working model, $\beta_h$ and $\beta_v$ are replaced by the common parameter $\alpha_h$, while $\beta_d$ is replaced by $\alpha_d$.

\subsection{Kernel Convolution Approximation for $\mathrm{Mat\acute{e}rn}$ Gaussian Process}
\label{subsec:kernel_form}

To reduce the computational cost associated with covariance matrix inversion, we use a kernel convolution approximation for the spatial Gaussian processes. Let $\delta(\boldsymbol{s})$ be a zero-mean Gaussian process with covariance function $c(r\mid\boldsymbol{\psi})$. A process convolution representation \citep{higdon2002space} writes
$$
    \delta(\boldsymbol{s}) = \int_{\mathcal{D}} K_{\boldsymbol{\psi}}(\boldsymbol{s} - \boldsymbol{u}) d W(\boldsymbol{u}),
$$
where $W$ denotes Gaussian white noise on $\mathcal{D}$ and $K_{\boldsymbol{\psi}}$ is a kernel indexed by $\boldsymbol{\psi}$. For the $\mathrm{Mat\acute{e}rn}$ covariance function in two dimensions, the corresponding kernel is
$$
    K_{\boldsymbol{\psi}}(\boldsymbol{u}) = \tau\frac{\Gamma(\nu + 1)^{1/2}\nu^{\nu / 4 + 1/4}|\boldsymbol{u}|^{\nu/2 - 1/2}}{\pi^{1/2}\Gamma(\nu/2 + 1/2)\Gamma(\nu)^{1/2}\rho^{\nu/2 + 1/2}}\mathcal{K}_{\nu/2 - 1/2}(\frac{2\nu^{1/2}|\boldsymbol{u}|}{\rho}).
$$
The convolution representation motivates a finite-dimensional approximation based on spatial knots. Let $\boldsymbol{\phi}_1,\ldots,\boldsymbol{\phi}_N$ denote a grid of knots in the spatial domain. Then, for sufficiently large $N$, we approximate
$$
    \delta(\boldsymbol{s}) \approx \sum_{j=1}^N K_{\boldsymbol{\psi}}(\boldsymbol{s} - \boldsymbol{\phi}_j) w_j,
$$
where $w_j\sim N(0,1)$ independently. This low-rank representation is used in the preferential sampling model to approximate the residual spatial processes.

\section{Metropolis--Hastings Algorithm for SME-BETEL Posterior}
\label{app:sampling_algo}

\begin{algorithm}
\caption{Metropolis--Hastings sampler for the SME--BETEL posterior}
\label{alg:mh-sme-betel}
\begin{algorithmic}
\scriptsize
\STATE \textbf{Input:} Number of iterations $S$, proposal distribution
$p_{\mathrm{prop}}(\cdot \mid \cdot)$, initial value $\boldsymbol{\theta}^{0}$
\STATE \textbf{Data:} $\mathbf{x}_{1:n}$
\FOR{$s = 0, \ldots, S-1$}
    \STATE Draw a proposal
    $\tilde{\boldsymbol{\theta}} \sim p_{\mathrm{prop}}(\cdot \mid \boldsymbol{\theta}^{s})$
    \STATE Generate $u \sim \mathrm{Unif}(0,1)$
    \IF{the origin is not contained in the interior of the convex hull of
    $\{\mathbf{g}(\mathbf{x}_i, \tilde{\boldsymbol{\theta}})\}_{i=1}^n$}
        \STATE Set $\boldsymbol{\theta}^{s+1} \gets \boldsymbol{\theta}^{s}$
    \ELSE
        \STATE Compute
        \[
        r =
        \frac{
        \pi(\tilde{\boldsymbol{\theta}})\,
        L(\tilde{\boldsymbol{\theta}})\,
        p_{\mathrm{prop}}(\boldsymbol{\theta}^{s} \mid \tilde{\boldsymbol{\theta}})
        }{
        \pi(\boldsymbol{\theta}^{s})\,
        L(\boldsymbol{\theta}^{s})\,
        p_{\mathrm{prop}}(\tilde{\boldsymbol{\theta}} \mid \boldsymbol{\theta}^{s})
        },
        \]
        where $L(\boldsymbol{\theta})$ is defined in \eqref{likelihood}.
        \IF{$u \le \min\{1,r\}$}
            \STATE Set $\boldsymbol{\theta}^{s+1} \gets \tilde{\boldsymbol{\theta}}$
        \ELSE
            \STATE Set $\boldsymbol{\theta}^{s+1} \gets \boldsymbol{\theta}^{s}$
        \ENDIF
    \ENDIF
\ENDFOR
\end{algorithmic}
\end{algorithm}

\end{document}